\documentclass[12pt]{article}
\usepackage[a4paper,left=1.0in,right=1.0in,top=1.1in,bottom=1.3in]{geometry}

\usepackage{microtype}%if unwanted, comment out or use option "draft"

%\graphicspath{{./graphics/}}%helpful if your graphic files are in another directory
%\usepackage[style=alphabetic]{biblatex}
\usepackage{authblk}
\usepackage{xcolor}
\usepackage{etex,ifthen}
\usepackage{etoolbox}
\usepackage{graphicx}
\usepackage{tikz}
\usepackage{pgfplots}
\pgfplotsset{compat=1.14}
\usepackage{tikz-cd}
\usepackage{url}
\usepackage{calc}
\usepackage[font={small,it}]{caption}
\usepackage{complexity}
\usepackage{alphalph}
\usepackage{amsthm}
\usepackage{thmtools}
\usepackage{eqparbox}
\usepackage{afterpage}
\usepackage{amsfonts,amsmath,amssymb}
\usepackage{algorithm}
\usepackage{algpseudocode}
\usepackage{bm}
\usepackage{framed}
\usepackage{listings}
\usepackage{hyperref}
\usepackage{verbatim}
\usepackage[all]{xy}

\usetikzlibrary{chains,fit}
\usetikzlibrary{shapes.geometric}
\usetikzlibrary{matrix,positioning,calc}
\usetikzlibrary{decorations.markings}
\usetikzlibrary{decorations.pathmorphing}
\tikzstyle{vertex}=[circle, draw, inner sep=0pt, minimum size=6pt]
\tikzstyle{vertbox}=[draw, inner sep=0pt, minimum size=8pt]
\newcommand{\vertex}{\node[vertex]}
\newcommand{\vertbox}{\node[vertbox]}

\newcommand{\oset}[3][0ex]{%
	\mathrel{\mathop{#3}\limits^{
			\vbox to#1{\kern-2\ex@
				\hbox{$\scriptstyle#2$}\vss}}}}

\newcommand{\lefty}{\mathsf{left}}
\newcommand{\righty}{\mathsf{right}}

\newcommand{\rev}{\mathbf{rev}_\gamma}
\newcommand{\revl}{\mathbf{rev}_{\gamma-\ell}}
\newcommand{\nbhd}{\mathsf{nbhd}}

\newcommand{\rect}{G^{\mathrm{bit}}_k}
\newcommand{\recth}{H^{\mathrm{bit}}_k}
\newcommand{\Ghard}{G^{\mathrm{hard}}_k}
\newcommand{\spec}{\mathbf{spcl}}
\newcommand{\Gzero}{G_{\mathrm{zero}}}
\newcommand{\Gset}{G_{\mathrm{set}}}

\newcommand{\cG}{\mathcal{G}}
\newcommand{\cH}{\mathcal{H}}
\newcommand{\cI}{\mathcal{I}}
\newcommand{\cJ}{\mathcal{J}}
\newcommand{\cF}{\mathcal{F}}
\newcommand{\cP}{\mathcal{P}}
\newcommand{\cS}{\mathcal{S}}
\newcommand{\cU}{\mathcal{U}}

\newcommand{\intgraph}{G_{\mathrm{int}}}
\newcommand{\intgraphh}{H_{\mathrm{int}}}
\newcommand{\newgraph}{H_{\mathrm{MH}}}
\newcommand{\Eslant}{E_{\mathrm{slant}}}
\newcommand{\bv}{\mathsf{B}}
\newcommand{\nbv}{\deg_{\geq 3}}

\newcommand{\greedy}{P_G^{\text{gr}}}
\newcommand{\greedyH}{P_{H_1}^{\text{gr}}}
\newcommand{\Gleft}{G_{\text{left}}}
\newcommand{\Gright}{G_{\text{right}}}
\newcommand{\Hleft}{H_{\text{left}}}
\newcommand{\Hright}{H_{\text{right}}}
\newcommand{\Pleft}{P_{\text{left}}}

\newcommand{\lca}{\mathbf{lca}} 
\newcommand{\floor}[1]{\left\lfloor #1 \right\rfloor}

\newcommand{\upvert}{E_{\mathrm{up}}}
\newcommand{\downvert}{E_{\mathrm{down}}}
\newcommand{\hor}{E_{\mathrm{hor}}}

\newcommand{\Tmid}{T_{\mathrm{mid}}}
\newcommand{\Friends}{T_{\mathrm{friends}}}
\newcommand{\ham}{\mathtt{Ham}}

\declaretheorem[numberlike=equation]{Theorem}
\declaretheorem[numberlike=equation]{Lemma}
\declaretheorem[numberlike=equation]{Corollary}
\declaretheoremstyle[bodyfont=\it,qed=$\lozenge$]{defstyle} 

\declaretheorem[numberlike=equation,style=defstyle]{Definition}
\declaretheorem[numberlike=equation]{Claim}
\declaretheorem[numberlike=equation]{Fact}

\setcounter{tocdepth}{3}
\setcounter{secnumdepth}{3}

\hypersetup{
	colorlinks,
	linkcolor=red,
	citecolor=blue,
	urlcolor=green!50!black
}

\title{Distance-preserving Subgraphs of Interval Graphs}

\author{Kshitij Gajjar\thanks{\texttt{kshitij.gajjar@tifr.res.in}}~ and Jaikumar Radhakrishnan\thanks{\texttt{jaikumar@tifr.res.in}}}
\affil{Tata Institute of Fundamental Research, Mumbai}

\begin{document}
	
	\maketitle

	\begin{abstract} 
	We consider the problem of finding small distance-preserving subgraphs of undirected, unweighted interval graphs with $k$ terminal vertices. We prove the following results.
	\begin{enumerate}
	    \item Finding an optimal distance-preserving subgraph is $\NP$-hard for general graphs.
	    \item Every interval graph admits a subgraph with $O(k)$ branching vertices that approximates pairwise terminal distances up to an additive term of $+1$.
	    \item There exists an interval graph $\intgraph$ for which the $+1$ approximation is necessary to obtain the $O(k)$ bound on the number of branching vertices. In particular, any distance-preserving subgraph of $\intgraph$ has $\Omega(k\log k)$ branching vertices.
	    \item Every interval graph admits a distance-preserving subgraph with $O(k\log k)$ branching vertices, implying the $\Omega(k\log k)$ bound is tight for interval graphs.
	    \item There exists an interval graph $\Gzero$ such that every optimal distance-preserving subgraph of $\Gzero$ has $O(k)$ branching vertices and $\Omega(k\log k)$ branching edges, providing a separation between branching vertices and branching edges.
	\end{enumerate}
	The $O(k)$ bound for distance-approximating subgraphs follows from a na\"ive analysis of shortest paths in interval graphs. $\intgraph$ is constructed using bit-reversal permutation matrices. The $O(k\log k)$ bound for distance-preserving subgraphs uses a divide-and-conquer approach. Finally, the separation between branching vertices and branching edges employs Hansel's lemma~\cite{Hansel} for graph covering.
	\end{abstract}
	
	%%%%%%%%%%%%%%%%%%%%%%%%%%%%%%%%%%%%%%%%%%%%%%%%%%%%%%%%%%%%%%%%%%%%%%%%%
	
	%%%%%%%%%%%%%% I  N  T  R  O  D  U  C  T  I  O  N %%%%%%%%%%%%%%%%%%%%%%%
	
	%%%%%%%%%%%%%%%%%%%%%%%%%%%%%%%%%%%%%%%%%%%%%%%%%%%%%%%%%%%%%%%%%%%%%%%%%
	
	\section{Introduction}
	
	We consider the following problem. Given an undirected, unweighted graph
	$G$ with $k$ vertices designated as terminals, our goal is to
	construct a \emph{small} subgraph $H$ of $G$. Our notion of smallness
	is non-standard: we compare solutions based on the number of vertices
	of degree three or more. We have the following definition.
	
	\begin{Definition}
		Given an undirected, unweighted graph $G=(V,E)$ and a set $R \subseteq
		V$ (the terminals), we say that a subgraph $H(V,E')$ of $G$ is
		distance-preserving for $(G,R)$ if for all terminals $u,v \in R$,
		$d_G(u,v) = d_H(u,v)$, where $d_G$ and $d_H$ denote the distances in
		$G$ and $H$ respectively.  Let $\nbv(H)$ denote the number of vertices
		in $H$ with degree at least three (referred to as \textbf{branching
			vertices}).  Let
		\[ \bv(G,R) = \min_{H} \nbv(H),\]
		where $H$ ranges over all subgraphs of $G$ that are distance-preserving for
		$(G,R)$.  For a family of graphs $\cF$ (such as planar graphs, trees,
		interval graphs), let
		\[
		\bv_{\cF}(k) = \max_{G} \bv(G,R),
		\]
		where $G$ ranges over all graphs in $\cF$, and $R$ ranges over all subsets of $V(G)$ of size $k$.
	\end{Definition}
	In this work, we obtain essentially tight upper and lower bounds on
	$\bv_{\cI}(k)$, where $\cI$ is the class of interval graphs. An interval graph is the intersection graph of a family of intervals on the real line. (See~\autoref{intgraphdef} for a more detailed description.) 
	\begin{Theorem} [Main result] \label{thm:main}
		Let $\cI$ denote the class of interval graphs.
		\begin{enumerate}
			\item[(a)] (Upper bound) $\bv_{\cI}(k) = O(k \log k)$.
			%There exists a constant $c_1$ such that for
			%  all interval graphs $G$ with $|R|=k$ terminals, $\bv(G,R) \leq c_1\, k \log
			%  k$.
			\item[(b)] (Lower bound) There exists a constant $c$ such that for
			each $k$, a positive power of two, there exists an interval graph
			$\intgraph$ with $|R|=k$ terminals such that $\bv(\intgraph,R) \geq c\,
			k \log k$. This implies that $\bv_{\cI}(k) = \Omega(k \log k)$.
		\end{enumerate}
		Parts (a) and (b) imply that $\bv_{\cI}(k) = \Theta(k \log k)$.
	\end{Theorem}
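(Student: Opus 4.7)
The plan is divide-and-conquer on the terminal set, exploiting the one-dimensional structure of interval graphs. Given the interval graph $G$ with terminals $R=\{t_1,\ldots,t_k\}$ sorted by the left endpoint of their intervals, I would pick a median terminal $t_m$ with $m=\lceil k/2\rceil$ and build, inside $G$, a shortest-path tree $T_m$ rooted at $t_m$ whose leaves are exactly the other $k-1$ terminals. Since any tree with at most $k-1$ leaves has at most $k-2$ internal vertices of degree at least three, $T_m$ contributes only $O(k)$ branching vertices, and all distances of the form $d_G(t_m,t_i)$ are preserved. To handle distances within the two halves $\{t_1,\ldots,t_{m-1}\}$ and $\{t_{m+1},\ldots,t_k\}$, I would recurse; the crucial point is to confine each recursive call to a subgraph of $G$ induced by intervals lying entirely to one side of $t_m$, so that the branching vertices added by the two recursive calls are disjoint from each other and from the interior of $T_m$. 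This yields the recurrence $T(k)\le 2T(k/2)+O(k)$, whose solution is $T(k)=O(k\log k)$.

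\textbf{Plan for part (b) (lower bound).} I would construct the interval graph $\intgraph$ so that its $k=2^n$ terminals are paired at $n=\log k$ different dyadic scales according to the bit-reversal permutation $\sigma$ of $\{0,1,\ldots,k-1\}$. The geometric realization on the line is chosen so that, for each pair $(i,\sigma(i))$, the (essentially unique) shortest path in $\intgraph$ must traverse a specific sequence of intervals, and the choice of $\sigma$ ensures that at each of the $\log k$ scales there are $\Omega(k)$ terminal pairs whose canonical shortest paths are mutually interleaved. In any distance-preserving subgraph $H$ of $\intgraph$, every such canonical path must be present; I would then argue that no single branching vertex can simultaneously serve too many interleaved pairs, so each scale forces $\Omega(k)$ fresh branching vertices, summing to $\Omega(k\log k)$. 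This is in the spirit of classical lower bounds for sorting/switching networks that realize the bit-reversal permutation.

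\textbf{Main obstacles.} On the upper-bound side, the subtle issue is that branching vertices do \emph{not} a priori add across recursive calls: a vertex of degree two in $T_m$ could acquire additional edges from the recursion and become a branching vertex of the final union. I would overcome this by showing that, in an interval graph, the Steiner (degree-two) vertices along paths of $T_m$ can always be chosen so that they overlap the median's interval, whereas the recursive subproblems can be solved using only intervals lying strictly to one side of $t_m$; this guarantees vertex-disjointness of the three ingredients and hence additivity of the branching counts. On the lower-bound side, the main difficulty is formalizing the ``mutual interference'' of canonical paths at each scale into a clean counting argument: one must show that the intervals realizing $\intgraph$ admit essentially a unique shortest path between each pair $(i,\sigma(i))$, and that any branching vertex can participate in $O(1)$ of the canonical paths at a given scale. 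This combinatorial step, and the explicit construction of $\intgraph$ from the bit-reversal permutation matrix, is where I expect the bulk of the work to lie.
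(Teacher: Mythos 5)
Your plan for part (a) has a genuine gap: it does not address distances between terminals that land in \emph{different} halves of the recursion. The tree $T_m$ rooted at the median terminal $t_m$ preserves distances of the form $d_G(t_m,t_i)$, and the recursive calls preserve distances within $\{t_1,\ldots,t_{m-1}\}$ and within $\{t_{m+1},\ldots,t_k\}$, but nothing in your plan preserves $d_G(t_i,t_j)$ for $i<m<j$. Routing such a pair through $t_m$ can be nearly a factor of two too long, and a shortest path between $t_i$ and $t_j$ need not lie inside $T_m$ nor inside either one-sided recursive subgraph. Preserving exactly these cross pairs while adding only $O(k)$ new branching vertices \emph{per level of recursion} (rather than $\Omega(k)$ per terminal, hence $\Omega(k^2)$ total) is the heart of the upper bound. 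The paper handles it by keeping a single global backbone $H_0$ of greedy paths to the rightmost terminal, cutting the \emph{real line} at a point $x$ (not the terminal index set at a median), and then adding two gadgets: $H_A$, consisting of the $O(\ell)$ intervals where left-side greedy paths cross the cut, and $H_B$, which attaches each right-side terminal to exactly \emph{one} greedy path from the left. The correctness of attaching to just one path hinges on a hopping lemma (if $v\prec w$, $d(v,w)=1$, and the greedy walk from $v$ of length $\ell$ ends short of the greedy walk from $w$ of length $\ell'$, then $\ell\le\ell'$); your plan supplies no analogue of it, and without it the recurrence $T(k)\le 2T(k/2)+O(k)$ is not justified. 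Relatedly, your proposed fix for vertex-disjointness --- confining $T_m$'s Steiner vertices to intervals overlapping $t_m$'s interval --- cannot work, since $T_m$ reaches terminals far to either side and its internal vertices necessarily span the whole line.

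Your plan for part (b) matches the paper's approach in spirit: terminals placed by the bit-reversal permutation, pairs at $\log k$ dyadic scales (the paper's hypercube edges, one pair per edge of the $\gamma$-dimensional Boolean cube), and an interference argument forcing branching vertices. The paper works through a weighted two-dimensional Manhattan grid as an intermediate model, where it assigns to each pair a ``special edge'' near $\lfloor\lca\rfloor$, shows that two special edges sharing a row must have a branching vertex between them, and counts per row rather than per scale; it then transports the bound to an actual interval graph by a local edge substitution (slanting edges become zero-weight downward edges). That per-row bookkeeping is a cleaner way to make your ``mutual interference'' step airtight than claiming $\Omega(k)$ fresh branching vertices per scale, which would require establishing disjointness across scales. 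The overall strategy matches, but you would likely find the grid intermediary and the per-row count worth adopting.
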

	
	\noindent{\emph{Remark (i).}} Part (a) is constructive. Our proof of the upper bound can be turned into an efficient algorithm that, given an interval graph $G$ on $n$ vertices, produces the required distance-preserving subgraph $H$ of $G$ in running time polynomial in $n$.
	
	\noindent{\emph{Remark (ii).}} Our interval graphs are unweighted. If we consider the family of interval graphs with non-negative weights on their edges ($\cI_w$), then using ~\cite[Section 5]{Ngu}, it is easy to prove that $\bv_{\cI_w}(k) = \Theta(k^4)$ (see~\autoref{corobo} (b)).

	%%%%%%%%%
	\subsection{Motivation and Related Work}
	%%%%%%%%%
	
	The problem of constructing small distance-preserving subgraphs bears
	close resemblance to several well-studied problems in graph
	algorithms: graph compression~\cite{Feder}, graph spanners~\cite{Pel,Coppersmith,Bodwin},
	Steiner point removal~\cite{AnupamGupta,Kamma,Filtser}, vertex sparsification~\cite{moses,moitra,englert}, graph homeomorphism~\cite{fortune,lapaugh}, graph contraction~\cite{Daub}, graph sparsification~\cite{spielman,goranci}, etc.
	
	Note that there are several other notions of distance-preserving subgraphs. Our notion of distance-preserving subgraphs is different from that used by Djokovi\'c (and later by Chepoi)~\cite{djok,chepoi}, Nussbaum \emph{et al.}~\cite{nussbaum}, Yan \emph{et al.}~\cite{yan}, or Sadri \emph{et al.}~\cite{sadri}.
	
	For our problem, we emphasize two motivations for studying distance-preserving
	subgraphs, while basing the measure of efficiency on the number of
	branching vertices. First, this problem is closely related to the notion of
	distance-preserving minors introduced by Krauthgamer and
	Zondiner~\cite{KraZon}. Second, although the problem restricted to
	interval graphs is interesting in its own right, it can be seen to
	arise naturally in contexts where intervals represent time periods for
	tasks. Let us now elaborate on our first motivation. Later, we elaborate on the second.
	
	\begin{Definition}
		Let $G(V,E,w)$ be an undirected graph with weight function $w:E
		\rightarrow \mathbb{R}^{\geq 0}$ and a set of terminals $R\subseteq
		V$. Then, $H(V',E',w')$ with $R\subseteq V'\subseteq V$ and weight
		function $w':E' \rightarrow \mathbb{R}^{\geq 0}$ is a
		distance-preserving minor of $G$ if: (i) $H$ is a minor of $G$, and
		(ii) $d_{H}(u,v)=d_{G}(u,v)\, \forall u,v \in R$.
	\end{Definition}
	
	Subsequent work by Krauthgamer, Nguy{\^e}n and Zondiner~\cite{KraZon,Ngu} implies that $\bv_{\cG}(k)=\Theta(k^4)$, where $\cG$ is the family of all undirected graphs (see~\autoref{corobo} (a)).
	
	In this work, we prove that it is $\NP$-hard to determine if $\bv(G,R)\leq m$, when given a general graph $G\in\cG$, a set of terminals	$R\subseteq V(G)$, and a positive integer $m$. A reduction from the set cover problem is described in~\autoref{thm:npcnpc}.
	
	Following the work of Krauthgamer and Zondiner~\cite{KraZon},
	Cheung \emph{et al.}~\cite{Henz} introduced the notion of
	distance-approximating minors.
	\begin{Definition}
		Let $G(V,E,w)$ be an undirected graph with weight function $w:E
		\rightarrow \mathbb{R}^{\geq 0}$ and a set of terminals $R\subseteq
		V$. Then, $H(V',E',w')$ with $R\subseteq V'\subseteq V$ and weight
		function $w':E' \rightarrow \mathbb{R}^{\geq 0}$ is an
		$\alpha$-distance-approximating minor ($\alpha$-$\mathrm{DAM}$) of $G$
		if: (i) $H$ is a minor of $G$, and (ii) $d_{G}(u,v) \le d_{H}(u,v) \le
		\alpha\cdot d_{G}(u,v) \, \forall u,v \in V$.
	\end{Definition}
	%\marginpar{Were distance approximating minors also obtained by first obtaining distance approximating subgraphs?}
	In analogy with distance-approximating minors one may ask if interval graphs admit distance-approximating subgraphs with a small number of branching vertices. 
	\begin{Theorem} \label{klintheo}
		Every interval graph $G$ with $k$ terminals admits a subgraph $H$ with $O(k)$ branching vertices such that for all terminals $u$ and $v$ of $G$
		\[ d_G(u,v) \leq d_H(u,v) \leq d_G(u,v) +1 .\]
	\end{Theorem}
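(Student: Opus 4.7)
\emph{Proof plan.} The plan is to build $H$ from a single ``greedy rightward'' shortest path together with a bounded number of attachment edges per terminal. Let $s \in R$ be the terminal whose interval has the leftmost left endpoint, run BFS from $s$ in $G$, and write $L(v) = d_G(s, v)$. For each level $i$, let $p_i$ be the vertex at level $i$ whose interval has the rightmost right endpoint, and set $P = p_0 p_1 \cdots p_D$ with $D = \max_{t \in R} L(t)$. Then $H$ will consist of the edges of $P$, together with, for each terminal $t \notin P$ at level $i \geq 1$, the edge $(t, p_{i-1})$, plus the edge $(t, p_i)$ whenever $t \sim p_i$ in $G$.

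The key structural claim I would prove first is that every vertex $v$ at level $i \geq 1$ is adjacent to $p_{i-1}$ in $G$. The argument is by induction on $i$, using that $s$ has the leftmost left endpoint and that consecutive $p_j$'s overlap: the possibility that $v$'s interval sits strictly to the left of $p_{i-2}$'s is ruled out by applying the inductive hypothesis to $v$'s level-$(i-1)$ neighbor, whose left endpoint already exceeds $p_{i-2}$'s left endpoint; in the complementary case $v$'s left endpoint already exceeds $p_{i-1}$'s left endpoint, so adjacency with $p_{i-1}$ follows directly from the bound on $v$'s left endpoint inherited from its level-$(i-1)$ neighbor. Granting this claim, $P$ is a genuine path and the attachment rule is well-defined. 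The branching bound $\nbv(H) = O(k)$ is then immediate, since each terminal contributes at most two attachment edges, each incident to some $p_j$, so only those $p_j$'s receiving at least one attachment (plus the two endpoints of $P$) can become branching.

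The main work is the distance bound $d_H(t, t') \le d_G(t, t') + 1$. Let $L(t) = i \le L(t') = j$. The crucial auxiliary observation (Claim A) is: if $t$ has any neighbor $v$ with $L(v) = i + 1$, then $t \sim p_i$. This follows because $v \sim p_i$ by the structural claim, while $v \not\sim p_{i-1}$ together with $t \sim p_{i-1}$ forces $t$'s interval to extend far enough right to overlap $p_i$. Now consider two cases. If $t \sim p_i$, the auxiliary edge $(t, p_i)$ lies in $H$, so $d_H(t, t') \le 1 + (j - 1 - i) + 1 = j - i + 1 \le d_G(t, t') + 1$ via $t - p_i - p_{i+1} - \cdots - p_{j-1} - t'$. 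If $t \not\sim p_i$, the path $t - p_{i-1} - \cdots - p_{j-1} - t'$ gives $d_H(t, t') \le j - i + 2$; but Claim A forbids any length-$(j - i)$ shortest $G$-path from $t$ to $t'$ (its first step would hand $t$ a level-$(i+1)$ neighbor), so $d_G(t, t') \ge j - i + 1$, and again $d_H - d_G \le 1$. Symmetric arguments and light bookkeeping handle the cases where $t$ or $t'$ lies on $P$.

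The main obstacle is this last case analysis: the naive attachment to $p_{i-1}$ alone gives only a $+2$ bound. Closing the gap to $+1$ requires both the optional edges $(t, p_i)$ and the Claim A dichotomy, which guarantees that whenever the shortcut through $p_i$ is unavailable, $d_G(t, t')$ is itself already large enough to absorb the extra unit of slack.
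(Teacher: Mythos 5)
Your construction is essentially the same as the paper's: pick an extremal terminal, take a single greedy/BFS ``spine'' of farthest-reaching intervals, and attach every other terminal to the spine with a bounded number of edges (the paper attaches via \emph{all} edges from the spine, using Fact~\ref{bridgeintervalfact} to get $\leq 3$ per terminal; you attach at most $2$ per terminal by explicit choice). Your distance analysis, however, is genuinely different. The paper (\autoref{lemma+1approx}, \autoref{claim+1approx}) fixes a terminal $i$, runs the greedy shortest path from $i$ both in $G$ and in $H_1$, and proves by induction on the step index $p$ that the $p$-th greedy vertex in $H_1$ is equal to or adjacent to the $p$-th greedy vertex in $G$; the $+1$ slack comes from this ``lag by at most one step.'' You instead work with BFS levels from $s$ and a dichotomy (your Claim A): either $t$ is adjacent to the spine vertex $p_{L(t)}$ in its own level, giving a short path $t\to p_{L(t)}\to\cdots\to p_{L(t')-1}\to t'$ of length $L(t')-L(t)+1$, or else $t$ has no neighbour at level $L(t)+1$, which forces $d_G(t,t')\geq L(t')-L(t)+1$ and thus absorbs the extra unit from routing through $p_{L(t)-1}$. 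This is a clean alternative and, I believe, correct; the key ingredients (the structural claim that every level-$i$ vertex is adjacent to $p_{i-1}$, with the accompanying invariant $\lefty(v)>\righty(p_{i-2})$, and Claim A) check out.

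One small gap you should patch: the structural claim is stated for \emph{every} vertex at level $i$, but $s$ is only leftmost among \emph{terminals}, so the BFS from $s$ can also explore vertices lying entirely to the left of $s$, for which your induction does not go through (such a vertex at level $2$ need not overlap $p_1$). This is harmless for the theorem because no inter-terminal shortest path can use a vertex lying entirely to the left of every terminal (by the monotonicity of shortest paths, Fact~\ref{bridgebasicfact}), and such vertices never become a $p_j$, so the fix is just to delete them in a preprocessing step, or to restrict the structural claim to vertices $v$ with $\righty(v)\geq\lefty(s)$ and note that this covers all $p_j$'s, all terminals, and all of their neighbours. You should also make the cases $L(t)=L(t')$ and $t\in P$ explicit rather than deferring them to ``light bookkeeping,'' but they pose no difficulty.
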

	
	We later provide a proof of~\autoref{klintheo} (see the proof of~\autoref{thm:das}).
	
	We now elaborate on our second motivation. The following example\footnote{This is not a real-life problem, though we learnt that minimizing the number of branching vertices in shipping schedules is logistically desirable.} illustrates the relevance of distance-preserving (-approximating) subgraphs for interval graphs.

	\subsection{The Shipping Problem}

	The port of Bandarport is a busy seaport. Apart from ships with routes originating or terminating at Bandarport, there are many ships that dock at Bandarport en route to their final destination. Thus, Bandarport can be considered a hub for many ships from all over the world.
	
	Consider the following shipping problem. A cargo ship starts from some port $X$, and has Bandarport somewhere on its route plan. The ship needs to deliver a freight container to another port $Y$, which is not on its route plan. The container can be dropped off at Bandarport and transferred through a series of ships arriving there until it is finally picked up by a ship that is destined for port $Y$. Thus, the container is transferred from $X$ to $Y$ via some ``intermediate'' ships at Bandarport\footnote{The container cannot be left at the warehouse/storage unit of Bandarport itself beyond a certain limited period of time.}.
	
	However, there is a cost associated with transferring a container from
	one ship to another. This is because each transfer operation requires
	considerable manpower and resources. Thus, the number of ship-to-ship
	transfers that a container undergoes should be as small as possible.
	
	Furthermore, there is an added cost if an intermediate ship receives
	containers from multiple ships, or sends containers to multiple
	ships. This is mainly because of the bookkeeping overhead involved in
	maintaining which container goes to which ship. If a ship is receiving
	all its containers \emph{from} just one ship and sending all those
	containers \emph{to} just one other ship, then the cost associated
	with this transfer is zero (since a container cannot be directed to a
	wrong ship if there is only one option), and this cost increases as
	the number of \emph{to} and \emph{from} ships increases.
	
	Thus, given the docking times of ships at Bandarport, and a small
	subset of these ships that require a transfer of containers between
	each other, our goal is to devise a transfer strategy that meets
	the following objectives.
	
	\begin{itemize}
		\item Minimize the number of transfers for each container. 
		\item Minimize the number of ships that have to deal with multiple transfers.
	\end{itemize}
	
	Representing each ship's visit to the port as an interval on the time
	line, this problem can be modelled using distance-preserving
	(-approximating) subgraphs of interval graphs. In this setting, a shortest path from an earlier interval to a later interval corresponds to a valid sequence of transfers across ships that moves forward in time. The first objective
	corresponds to minimizing pairwise distances between terminals; the
	second objective corresponds to minimizing the number of branching
	vertices.
	
	Let us now quantify this. Suppose that there are a total of $n$ ships
	that dock at the port of Bandarport. Out of these, there are $k$ ships
	that require a transfer of containers between each other (typically
	$k\ll n$). Our results for interval graphs imply the following.
	\begin{enumerate}
		\item If we must make no more than the minimum number of transfers
		required for each container, then there is a transfer strategy in
		which the number of ships that have to deal with multiple transfers
		is $O(k\log k)$.
		
		\item If we are allowed to make \emph{one} more than the minimum
		number of transfers required for each container, then there is a
		transfer strategy in which the number of ships that have to deal
		with multiple transfers is $O(k)$.
		
		\item Neither bound can be improved; that is, there exist scheduling configurations in which $\Omega(k\log k)$ and $\Omega(k)$ ships, respectively, have to deal with multiple transfers.
	\end{enumerate}

	\subsection{Our Techniques}
	
	The linear upper bound mentioned in~\autoref{klintheo} is easy to prove (see~\autoref{thm:das}). However, if we require that distances be preserved exactly, then the problem becomes non-trivial. We now present a broad overview of the techniques involved in proving our main result.
	
	\textit{The Upper Bound:} We may restrict attention to interval graphs that have interval representations where the terminals are intervals of length 0 (their left and right end points are the same) and the non-terminals are intervals of length 1. It is well-known that shortest paths in interval graphs can be constructed using a simple greedy algorithm. We build a subgraph consisting of such shortest paths starting at different terminals and add edges to it so that all inter-terminal shortest paths become available in the subgraph. We use a divide-and-conquer strategy, repeatedly ``cutting'' the graph down the middle into smaller interval graphs. Then we glue the solutions to the two smaller problems together. For this, we need a key observation (which appears to be applicable specifically to interval graphs) that allows one shortest path to ``hop'' onto another. In this, our upper bound method is significantly different from methods used previously for other families of graphs.
	
	\textit{The Lower Bound:} We construct an interval graph and arrange its vertices on a two-dimensional grid instead of the more natural one-dimensional number line. We then show that this grid can be thought of as a matrix, in particular, the bit-reversal permutation matrix (where the ones corresponding to terminals and the zeros to non-terminals). The bit-reversal permutation matrix has seen many applications, most notably in the celebrated Cooley-Tukey algorithm for Fast Fourier Transform~\cite{FFT}. Prior to our work too, it has been used to devise lower bounds (see~\cite{Lynch, Patrascu}). Examining the routes available for shortest paths in our interval graph (constructed using the bit-reversal permutation matrix) requires (i) an analysis of common prefixes of binary sequences, and (ii) building a correspondence between branching vertices and the $k\log k/2$ edges of a $(\log k)$-dimensional Boolean hypercube.

	\section{Distance-preserving Subgraphs of General Graphs}
	
	In this section, we first analyze the problem of finding \emph{optimal} distance-preserving subgraphs of general graphs, and then study distance-preserving subgraphs for \emph{weighted} graphs (including weighted interval graphs).
	
	\subsection{Finding Optimal Distance-preserving Subgraphs}
	
	In this section, we show that the algorithmic task of finding an optimal distance-preserving subgraph of a general graph is $\NP$-hard.
	Consider the following task.
	\begin{description}
		\item{Input:} An undirected, unweighted graph $G$, a set of terminals $R\subseteq V(G)$, and a positive integer $\ell$.
		\item{Output:} Yes, if $(G,R)$ admits a distance-preserving subgraph with at most $\ell$ branching vertices; No, otherwise.
	\end{description}
	
	\begin{figure}
		\begin{center}
				\begin{tikzpicture}[thick,
				every node/.style={},
				fsnode/.style={fill=blue},
				ssnode/.style={fill=green},
				every fit/.style={ellipse,draw,inner sep=-1pt,text width=2cm}
				]
				
				% the vertices of U
				\begin{scope}[start chain=going below,node distance=7mm]
				\foreach \i in {1,2,...,5}
				\node[on chain] (f\i) [] {};
				\end{scope}
				
				% the vertices of S
				\begin{scope}[xshift=4cm,yshift=-0.5cm,start chain=going below,node distance=7mm]
				\foreach \i in {6,7,...,9}
				\node[on chain] (s\i) [] {};
				\end{scope}
				
				\node[] at (0.04,0.4) {$U_1$};
				\node[] at (0.04,-0.6) {$U_2$};
				\node[] at (0,-2.3) {$\vdots$};
				\node[] at (0,-4.3) {$U_{m+1}$};
				
				\vertex at (4,-0.25) [minimum size=3pt, fill, label=left:$S_1$] {};
				\vertex at (4,-0.8) [minimum size=3pt, fill, label=left:$S_2$] {};
				\node[] at (4,-2.1) {$\vdots$};
				\vertex at (4,-3.6) [minimum size=3pt, fill, label=left:$S_m$] {};
				
				\vertex(aa) at (-3.3,-2) [label=left:$t_0$, fill=gray] {};
				\vertex(bb) at (7,-2) [label=right:$t_1$, fill=gray] {};
				
				\draw[thick](aa)--(-0.91,0.23);
				\draw[thick](aa)--(-1.2,-0.58);
				\draw[thick](aa)--(-0.91,-4.24);
				
				\draw[thin](bb)--(4,-0.25);
				\draw[thin](bb)--(4,-0.8);
				\draw[thin](bb)--(4,-3.6);

				\draw(-1.02,0)--(1.02,0);
				\draw(-1.3,-1.1)--(1.3,-1.1);
				\draw(-1.04,-3.9)--(1.04,-3.9);
				
				% the set U
				\node [blue,fit=(f1) (f5),label=above:$\cU$] {};
				% the set S
				\node [green,fit=(s6) (s9),label=above:$\cS$] {};
				
				%		\draw (f5) -- (s6);
				\end{tikzpicture}
				\caption{The graph $\Gset$ that solves the set cover problem. Each $U_i$ is a copy of $U$. Thus, $\cU$ has $n(m+1)$ vertices and $\cS$ has $m$ vertices. $t_0$ is connected to all vertices of $\cU$, and $t_1$ is connected to all vertices of $\cS$. Vertex $(u,i)$ of $U_i$ has an edge to vertex $S_j$ if and only if $u\in S_j$.}
				\label{fig:setcover}
		\end{center}
	\end{figure}
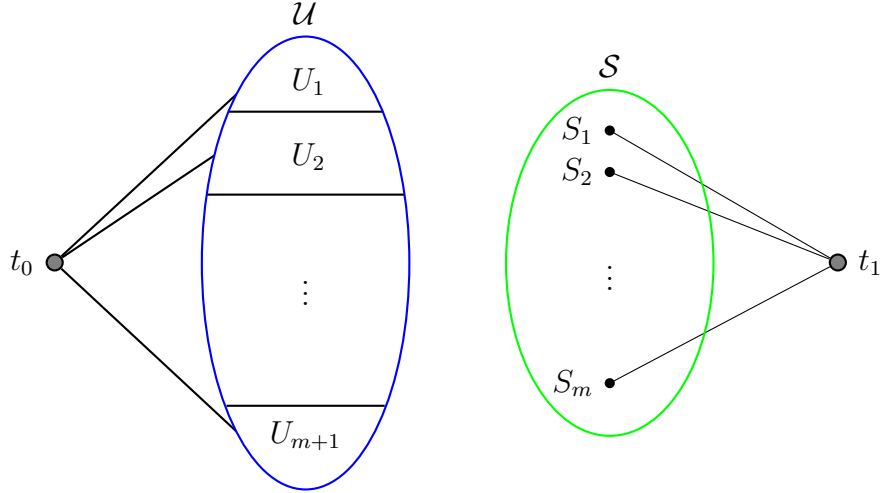

	\begin{Theorem} \label{thm:npcnpc}
		The above decision problem is $\NP$-complete.
	\end{Theorem}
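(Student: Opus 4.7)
The plan is to prove NP-completeness by reducing Set Cover to the decision problem of computing $\bv(G,R)$. Membership in NP is immediate: any candidate subgraph $H$ of $G$ serves as a polynomial-size certificate, verifiable in polynomial time by running BFS from each terminal in both $G$ and $H$ and then counting vertices of $H$ whose degree is at least three.

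For NP-hardness, given a Set Cover instance with universe $\cU$ of size $n$, collection $\cS=\{S_1,\ldots,S_m\}$, and target $k\geq 3$, construct the graph $\Gset$ of~\autoref{fig:setcover} with terminal set $R=\{t_0,t_1\}\cup\bigcup_{i=1}^{m+1}U_i$ and ask whether $\bv(\Gset,R)\leq k+2$. A direct case analysis yields the relevant terminal distances: $d(t_0,(u,i))=1$, $d((u,i),(u',i'))=2$ through $t_0$, $d((u,i),t_1)=2$ through any $S_j\ni u$, and $d(t_0,t_1)=3$. Consequently any distance-preserving subgraph $H$ must contain every edge $t_0(u,i)$ and, for every pair $(u,i)$, a length-$2$ path $(u,i)$--$S_j$--$t_1$ with $u\in S_j$. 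Letting $T^H\subseteq\cS$ be the collection of $S_j$ with $S_jt_1\in E(H)$, the second condition forces $T^H$ to be a set cover of $\cU$.

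Counting branching vertices in $H$: $t_0$ is always branching; each $(u,i)$ can be kept non-branching by wiring it to a single $S_j$; $t_1$ has degree $|T^H|$; and an $S_j\in T^H$ is branching iff it handles at least two pairs. The crucial step is to show that $B^H$, the set of branching $S_j$'s, is itself a set cover of $\cU$. This is where the $(m+1)$-fold duplication earns its keep: for each $u\in\cU$, its $m+1$ copies are each wired to some $S_j\in T^H$ with $u\in S_j$, and since $|T^H\cap\{S_j : u\in S_j\}|\leq m$, the pigeonhole principle forces at least one such $S_j$ to handle two copies of $u$, placing it in $B^H$. Hence $|B^H|\geq\tau$, the minimum set-cover size. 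In any NO instance of Set Cover one has $\tau>k\geq 3$, so $|T^H|\geq\tau\geq 3$ also makes $t_1$ branching, and we obtain $\nbv(H)\geq 1+1+\tau=\tau+2>k+2$.

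For the converse, given a set cover $T^*\subseteq\cS$ of size $\tau\leq k$, choose for each $u\in\cU$ a single $S_{j(u)}\in T^*$ with $u\in S_{j(u)}$, and take $H$ to consist of every edge $t_0(u,i)$, $(u,i)S_{j(u)}$, and $S_{j(u)}t_1$. All terminal distances in $G$ are preserved, every $(u,i)$ has degree $2$, and each chosen $S_j$ handles at least $m+1$ pairs, yielding $\nbv(H)\leq\tau+2\leq k+2$. Combined with the lower bound, $\bv(\Gset,R)\leq k+2$ iff $\tau\leq k$, completing the reduction. The main obstacle is the pigeonhole step establishing that $B^H$ covers $\cU$; this is precisely why the construction uses $m+1$ copies of $\cU$ rather than $m$ or fewer, since otherwise each $u$'s copies could be spread across distinct $S_j$'s without any one of them handling two, collapsing the lower bound.
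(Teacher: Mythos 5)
Your proof uses the same gadget $\Gset$ and the same $(m+1)$-fold-duplication pigeonhole, so it is essentially the paper's reduction. You are a bit more careful about the branching terminals: the paper phrases its claim in terms of non-terminal branching vertices even though the decision problem counts all of them, whereas you explicitly target $\bv(\Gset,R)\leq k+2$, charging for $t_0$ and $t_1$ and keeping each $(u,i)$ at degree~$2$ by wiring it to a single $S_j$.
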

	\begin{proof}
		It is easy to see that the problem is in $\NP$. To show that it is $\NP$-hard, we reduce the set cover problem to the above problem. Consider an instance of the set cover problem on a universe $U$ of size $n$, and subsets
		$S_1,S_2,\ldots, S_m\subseteq U$.
		
		Using this instance of the set cover problem, we construct $\Gset$, a graph on $n(m+1)+m+2$ vertices with $n(m+1)+2$ terminal vertices (\autoref{fig:setcover}). Let $U_1, U_2,\ldots, U_{m+1}$ be $m+1$ copies of $U$.
		\[ U_i = \{ (u,i): u \in U\}.\]
		Let $\cU = \bigcup_i U_i$. Let $\cS = \{S_1,S_2, \ldots, S_m\}$. The vertex set of $\Gset$ is
		$\cU \cup \cS \cup \{t_0,t_1\}$. The edge set of $\Gset$ is $E_0 \cup E_1 \cup
		E_2$, where
		\begin{align*}
		E_0 &= \{ (t_0, (u,i)): (u,i) \in \cU \}; \\
		E_1 &= \{ ((u,i), S_j) : u \in S_j \in \cS \}; \\
		E_2 &= \{ (S_j,t_1): S_j \in \cS \}.
		\end{align*}

		The set of terminals is $\cU \cup \{t_0,t_1\}$.  We claim that $\Gset$ has
		a distance-preserving subgraph with at most $\ell$ non-terminal branching vertices if and only
		if the set cover instance has a cover of size at most $\ell$. The
		\emph{if} direction is straightforward. Simply fix a set cover of size at
		most $\ell$ and consider the subgraph induced by it and the terminals.
		
		For the \emph{only if} direction, suppose there is a distance-preserving
		subgraph $H$ of $\Gset$ that has at most $\ell$ branching vertices.  Clearly,
		in the distance-preserving subgraph $H$, each $(u,i)$ and $t_1$ have a
		common neighbour. If a vertex in $\cS$ has degree at most $2$ in $H$,
		then it can have a neighbour in at most one $U_i$. Since there are
		only $m$ vertices in $\cS$ but $m+1$ sets $U_i$, there is an $i_0$,
		such that each vertex of the form $(u,i_0) \in U_{i_0}$ is a neighbour
		of a branching vertex in $\cS$.  Thus, the (at most $\ell$) branching
		vertices in $\cS$ form a set cover of $U$.
	\end{proof}
	
	\subsection{Distance-preserving Subgraphs of Weighted Graphs}
	
	In this section, we show that $\bv_{\cG_w}(k)=\Theta(k^4)$, where $\cG_w$ is the family of all undirected graphs. This also implies results for unweighted graphs and weighted interval graphs.
	
	\begin{Theorem} \label{thm:bigone}
		If $\cG_w$ is the family of all undirected, weighted graphs, then $\bv_{\cG_w}(k)=\Theta(k^4).$
	\end{Theorem}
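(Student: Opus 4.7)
The plan is to derive both bounds from the results of Krauthgamer-Zondiner~\cite{KraZon} and Nguyen~\cite{Ngu} on distance-preserving minors of weighted graphs, which together establish that the minimum number of Steiner vertices in a distance-preserving minor of a weighted $k$-terminal graph is $\Theta(k^4)$. The strategy is to transfer these bounds to the subgraph setting via a correspondence between minors and subgraphs that matches Steiner vertices with branching vertices up to constant factors.

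For the lower bound $\bv_{\cG_w}(k) = \Omega(k^4)$, I would use Nguyen's extremal weighted graph $G_0$, every distance-preserving minor of which requires $\Omega(k^4)$ Steiner vertices. Given any distance-preserving subgraph $H$ of $G_0$, I would suppress each maximal degree-2 non-terminal path, replacing it by a single edge whose weight equals the total path length. This operation preserves all pairwise terminal distances and produces a distance-preserving minor of $G_0$ whose Steiner vertices are exactly the non-terminal branching vertices of $H$. Nguyen's bound then forces $\nbv(H) = \Omega(k^4)$.

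For the upper bound $\bv_{\cG_w}(k) = O(k^4)$, I would start from an optimal distance-preserving minor $M$ of $G$ with $s = O(k^4)$ Steiner vertices, guaranteed by \cite{KraZon,Ngu}. First I would refine $M$ so that every vertex has degree at most three, by subdividing higher-degree vertices using auxiliary zero-weight edges between new Steiner vertices; this multiplies the Steiner count by only a constant factor. Then $|E(M)| = O(s+k) = O(k^4)$. Next I would lift $M$ back to a subgraph of $G$ as follows: for each minor vertex $v$ with connected preimage $C_v \subseteq V(G)$, identify the (at most three) contact points in $C_v$ where the $G$-edges representing the minor edges incident to $v$ attach, and select a shortest-path tree in $G$ inside $C_v$ spanning those contact points. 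The union of all such trees, together with the representative $G$-edges, forms the desired subgraph $H$ of $G$, and each preimage $C_v$ contributes only $O(1)$ branching vertices, giving $\nbv(H) = O(k^4)$.

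The main obstacle lies in the lifting step: I must ensure that $H$ does not shorten any terminal-to-terminal distance relative to $G$. The key observation is that because $M$ is distance-preserving, the weight of each minor edge equals the shortest-path distance in $G$ between its two contact points, so the trees chosen inside each $C_v$ can be aligned with actual geodesics of $G$ without introducing spurious shortcuts between contact points. Consequently every $M$-path between terminals lifts to a $G$-path in $H$ of the same length, and all terminal distances are preserved exactly.
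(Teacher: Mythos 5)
Your lower bound direction is correct and in fact uses a slightly cleaner reduction than the paper's: given any distance-preserving subgraph $H$ of the extremal graph $G_0$, contracting maximal degree-two non-terminal paths produces a distance-preserving minor of $G_0$ whose Steiner vertices are precisely the non-terminal branching vertices of $H$, so the known $\Omega(k^4)$ Steiner-vertex lower bound for minors of $G_0$ transfers directly. (The paper instead invokes the specific fact that $G_0$ is its own unique distance-preserving subgraph.) Both routes are sound.

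The upper bound direction, however, has a genuine gap. Your plan lifts an \emph{arbitrary} optimal distance-preserving minor $M$ of $G$ back to a subgraph by re-expanding each contracted preimage $C_v$ into a Steiner tree on its contact points. The ``key observation'' on which this rests --- that the weight of each minor edge equals the shortest-path distance in $G$ between its two contact points --- is false in general: by the definition of distance-preserving minor, the weight function $w'$ is constrained only by the requirement that all \emph{terminal} pairwise distances be preserved, and an individual minor edge may carry a weight that has nothing to do with the $G$-weight of the representative edge (for instance, a minor obtained by contracting a long path onto a terminal will assign that terminal's incident minor edge a large weight, while the representative $G$-edge has small weight). Even when the weights happen to match, the shortest-path trees placed inside the preimages $C_v$ have strictly positive total length, so a lifted $M$-path is strictly \emph{longer} than the corresponding $M$-path; you would then obtain $d_H(s,t) > d_M(s,t) = d_G(s,t)$ and fail to preserve distances. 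The paper avoids this entirely: it observes that the $O(k^4)$ minor bound of \cite[Section 2.1]{Ngu} is itself proved by first constructing a distance-preserving \emph{subgraph} and then suppressing its degree-two vertices, so the number of vertices of the resulting minor is $k$ plus the number of branching vertices of that subgraph. The $O(k^4)$ bound on branching vertices is therefore already a byproduct of that proof, and no lifting of minors to subgraphs is required.
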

	\begin{proof}
		Both the upper bound proof and the lower bound proof for $\bv_{\cG_w}(k)$ follow directly from earlier work of Krauthgamer, Nguy{\^e}n and Zondiner~\cite{Ngu}.
		
		First, we prove that $\bv_{\cG_w}(k)=O(k^4)$. In~\cite[Section 2.1]{Ngu}, they show that every undirected graph on $k$ terminals has a distance-preserving minor with at most $O(k^4)$ vertices. They prove this by pointing out that distance-preserving minors can be constructed by first constructing distance-preserving subgraphs, and then replacing the two edges incident on a vertex of degree two by a single new edge\footnote{Suppose $x$ is a degree two vertex, and $u$ and $v$ are its two neighbours. Then, $(u,x)$ and $(x,v)$ are deleted from the minor, $(u,v)$ is added to the minor (if it does not already exist in the minor), and $w((u,v))\triangleq d(u,v)$.}. The number of vertices in the resulting minor is exactly the number of branching vertices in the distance-preserving subgraph. Thus, $\bv_{\cG_w}(k)= O(k^4)$.
		
		Next, we prove that $\bv_{\cG_w}(k)=\Omega(k^4)$. The weighted planar graph (on $O(k)$ terminal vertices and $\Omega(k^4)$ vertices in total) exhibited in~\cite[Section 5]{Ngu} has only one distance-preserving subgraph, namely the graph itself. Thus, $\bv_{\mathcal{P}_w}(k)=\Omega(k^4)$, where $\mathcal{P}_w$ is the family of all undirected, weighted planar graphs. This implies that $\bv_{\cG_w}(k)=\Omega(k^4)$.
	\end{proof}
	
	\begin{Corollary} \label{corobo}
	(Corollaries of~\autoref{thm:bigone}).
	\begin{enumerate}
		\item[(a)] If $\cG$ is the family of all undirected, unweighted graphs, then $\bv_\cG(k)=\Theta(k^4).$
		\item[(b)] If $\cI_w$ is the family of weighted interval graphs, then $\bv_{\cI_w}(k)=\Theta(k^4).$
	\end{enumerate}
	\end{Corollary}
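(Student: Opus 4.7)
My plan is to derive both parts from~\autoref{thm:bigone}: the two upper bounds are immediate from class containment, so the real work lies in transferring the $\Omega(k^4)$ lower bound from $\cG_w$ to $\cG$ for part (a), and to $\cI_w$ for part (b). The upper bounds follow from $\cG \subseteq \cG_w$ (every unweighted graph is trivially weighted with unit weights) and $\cI_w \subseteq \cG_w$, which immediately yield $\bv_\cG(k) \le O(k^4)$ and $\bv_{\cI_w}(k) \le O(k^4)$.

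For the lower bound in part (a), I would take the hard weighted planar graph $G$ from~\cite[Section 5]{Ngu}, which (after scaling) may be assumed to have positive integer edge weights. Subdividing each edge of weight $w_e$ into a path of $w_e$ unit edges produces an unweighted graph $G'$ with the same terminal pairwise distances. Subdivision vertices have degree two and hence are never branching, so there is a natural bijection between distance-preserving subgraphs of $G'$ and those of $G$, obtained by suppressing or reintroducing maximal chains of degree-two vertices; this bijection preserves the branching-vertex count. Hence $\bv_\cG(k) \ge \bv(G', R) = \bv(G, R) = \Omega(k^4)$.

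For the lower bound in part (b), my plan is to exhibit a weighted interval graph on $\Omega(k^4)$ vertices whose unique distance-preserving subgraph has every vertex of degree at least three. The hard part is that interval graphs are structurally restrictive (chordal and AT-free), and the Nguyen planar hard instance is not directly an interval graph. I would attempt one of two routes: (i) verify that the Nguyen construction, possibly after adding zero-weight bridge edges that do not affect any terminal distance, admits a one-dimensional interval representation; or (ii) construct an analogous hard weighted interval graph directly, by placing the terminals as pairwise disjoint intervals on the line and inserting bundles of internal intervals spanning each gap, then tuning the edge weights so that every internal vertex is forced onto a unique shortest terminal-pair path. In either case, the rigidity argument mirrors that of~\cite[Section 5]{Ngu}: since the only distance-preserving subgraph is the whole graph and every non-terminal has degree at least three, the branching-vertex count is $\Omega(k^4)$.
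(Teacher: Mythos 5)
Your upper bounds and your part (a) argument match the paper exactly: containment gives the $O(k^4)$ bounds, and subdividing edges of the weighted planar hard instance of~\cite[Section 5]{Ngu} into unit-weight paths transfers the $\Omega(k^4)$ lower bound to unweighted graphs without creating new branching vertices.

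For part (b), however, you have a genuine gap. Both routes you sketch are unverified and considerably harder than what the paper actually does, and route (i) is almost certainly hopeless: the Nguyen hard instance is a planar grid-like graph and is not AT-free, so it does not admit an interval representation, and adding zero-weight edges to repair this would alter terminal distances. Route (ii) amounts to inventing a fresh hard construction with the Nguyen rigidity property inside the interval-graph class, which is a non-trivial new contribution, not a corollary. The point you missed is that you do not need the hard instance to \emph{look like} an interval graph; you only need to \emph{embed} it in one without disturbing shortest paths. The paper does this with a one-line trick: take the weighted graph $G$ from~\autoref{thm:bigone} on $n$ vertices and add every missing edge with infinite (or sufficiently large) weight, obtaining $K_n$. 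The new edges lie on no shortest path, so the distance-preserving subgraphs and their branching-vertex counts are unchanged, and $K_n$ is trivially an interval graph (all intervals overlapping a common point). This gives $\bv_{\cI_w}(k)=\Omega(k^4)$ immediately, with no structural analysis of interval representations required.
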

	\begin{proof}
	    Since $\cG$ and $\cI_w$ are both sub-families of $\cG_w$, the $O(k^4)$ upper bound is straightforward. We now show the lower bound for both the cases.
	    
	    Proof of (a): It is easy to see that the weighted planar graph of~\cite[Section 5]{Ngu} can be made unweighted (by subdividing the edges) so that every distance-preserving subgraph has $\Omega(k^4)$ branching vertices.
	    
		Proof of (b): \autoref{thm:bigone} implies that there exists a weighted graph $G$ such that every distance-preserving subgraph of $G$ has $\Omega(k^4)$ branching vertices. Let $|V(G)|=n$. Add edges of infinte (or very high) weight to $G$ so that the resulting graph is $K_n$, the complete graph on $n$ vertices. Since $K_n$ is an interval graph, this completes the proof.
	\end{proof}

	%%%%%%%%%%%%%%%%%%%%%%%%%%%%%%%%%%%%%%%%%%%%%%%%%%%%%%%%%%%%%%%%%%%%%%%%%
	
	%%%%%%%%%%% I  N  T  E  R  V  A  L     G  R  A  P  H  S %%%%%%%%%%%%%%%%%
	
	%%%%%%%%%%%%%%%%%%%%%%%%%%%%%%%%%%%%%%%%%%%%%%%%%%%%%%%%%%%%%%%%%%%%%%%%%
	
	\section{Interval Graphs}
	
	We work with the following definition of interval graphs.
	\begin{Definition} \label{intgraphdef}
		An interval graph is an undirected graph
		$G(V,E,\mathsf{left},\mathsf{right})$ with vertex set $V$, edge set
		$E$, and real-valued functions $\mathsf{left}:V \rightarrow
		\mathbb{R}$ and $\mathsf{right}:V \rightarrow \mathbb{R}$ such that:
		\begin{itemize}
			\item $\mathsf{left}(x) \leq \mathsf{right}(x) \quad \forall x \in V$;
			\item $(u,v) \in E \Leftrightarrow [\lefty(u),\righty(u)] \cap [\lefty(v),\righty(v)] \neq \emptyset$.
		\end{itemize}
		We order the vertices of the interval graph according to the
		end points of their corresponding intervals. For simplicity, we assume
		that all the end points of the intervals have distinct values. Define
		relations ``$\preceq$'' and ``$\prec$" on the set of vertices $V$ as
		follows.
		\begin{align*}
		u &\preceq v \Leftrightarrow \mathsf{right}(u)\leq \mathsf{right}(v) &&\forall u,v \in V.\\
		u &\prec v \Leftrightarrow \mathsf{right}(u)< \mathsf{right}(v)  &&\forall u,v \in V.
		\end{align*}
		Note that if $u \prec v$, then $u \neq v$.
		%Note that if $u=v$ then $u\preceq v$ and $v\preceq u$. However,
		%$u\prec v$ implies that $u\neq v$.
	\end{Definition}
	
	\subsection{Shortest Paths in Interval Graphs}
	
	In this section, we state some basic properties of shortest paths in interval graphs. It is well-known that one method of constructing shortest paths in interval graphs is the following greedy algorithm. Suppose we need to construct a shortest path from interval $u$ to interval $v$ (assume $u \prec v$). The greedy algorithm starts at $u$. In each step it chooses the next interval that intersects the current interval and reaches farthest to the right. It stops as soon as the current interval intersects $v$. Let $\greedy(u,v)$ be the shortest path produced by this greedy algorithm between $u$ and $v$ ($u \prec v$).

	We now outline some elementary facts about greedy shortest paths, more generally about shortest paths in interval graphs. All of these facts are easy to prove.
	
	\begin{framed}
		{\setlength{\parindent}{0cm}
			\begin{Fact} \label{bridgebasicfact}
				Given an interval graph $G$ and a shortest path (not necessarily a greedy shortest path) $P_G(v_1,v_r)=(v_1,v_2,\ldots,v_r)$ in $G$, if $v_1\prec v_r$, then $v_i\prec v_{i+1}$ for each $1\leq i<r-1$.
			\end{Fact}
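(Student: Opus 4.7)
The plan is to argue by contradiction. Assume there is some $j\in\{1,\ldots,r-2\}$ with $v_j\succ v_{j+1}$, and let $j^*$ be the smallest such index. The goal is to derive $v_1\succ v_r$, contradicting the hypothesis. The argument has two stages: a forward propagation that builds a descending chain $v_{j^*}\succ v_{j^*+1}\succ\cdots\succ v_r$, and a backward induction that transports this into $v_1\succ v_r$.

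For the forward stage I would prove the key lemma: whenever $v_i\succ v_{i+1}$ with $i\leq r-2$, one also has $v_{i+1}\succ v_{i+2}$. The essential input is the shortest-path property, which rules out the edge $(v_i,v_{i+2})$ and hence forces $v_i\cap v_{i+2}=\emptyset$. Given $v_i\cap v_{i+1}\neq\emptyset$ together with $\righty(v_{i+1})<\righty(v_i)$, the interval $v_{i+1}$ either sits inside $v_i$ or overhangs it on the left. The first case is impossible because then $v_{i+2}$, being a neighbour of $v_{i+1}\subseteq v_i$, would intersect $v_i$. Hence $\lefty(v_{i+1})<\lefty(v_i)$; since $v_{i+2}$ must intersect $v_{i+1}$ but avoid $v_i$, it meets $v_{i+1}$ only inside the leftward overhang $[\lefty(v_{i+1}),\lefty(v_i))$, which forces $\righty(v_{i+2})<\lefty(v_i)\leq\righty(v_{i+1})$, i.e.\ $v_{i+1}\succ v_{i+2}$. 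Iterating from $i=j^*$ produces the claimed chain, and in particular $v_{j^*}\succ v_r$.

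The backward stage is a downward induction showing $v_i\succ v_r$ for each $1\leq i\leq j^*$, with base case $i=j^*$ coming from the forward chain. For the inductive step, the minimality of $j^*$ gives $v_{i-1}\prec v_i$, so $v_{i-1}\cap v_i\neq\emptyset$ yields $\lefty(v_i)\leq\righty(v_{i-1})$. On the other hand, the shortest-path property gives $v_i\cap v_r=\emptyset$ (any edge $(v_i,v_r)$ with $i\leq r-2$ would shortcut the path), and together with the inductive hypothesis $v_i\succ v_r$ this forces $\righty(v_r)<\lefty(v_i)$. Chaining these inequalities, $\righty(v_r)<\lefty(v_i)\leq\righty(v_{i-1})$, so $v_{i-1}\succ v_r$. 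Reaching $i=1$ contradicts the hypothesis $v_1\prec v_r$.

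The main obstacle is the propagation lemma, where one must carefully locate $v_{i+2}$ strictly to the left of $\lefty(v_i)$ using the shortest-path non-adjacency between $v_i$ and $v_{i+2}$. Once this is in place, the downward induction is essentially mechanical and reuses the same observation that two non-adjacent intervals with ordered right endpoints must be completely separated on the real line.
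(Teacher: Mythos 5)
The paper does not supply a proof of this fact---it is listed among facts declared ``easy to prove''---so there is no paper argument to compare against. Your proof is correct. The forward propagation lemma is sound: for $i\leq r-2$ the shortest-path property makes $v_i$ and $v_{i+2}$ non-adjacent, hence disjoint, and combined with $v_i\cap v_{i+1}\neq\emptyset$ and $\righty(v_{i+1})<\righty(v_i)$ you correctly rule out $v_{i+1}$ nesting inside $v_i$ (which would force $v_{i+2}\cap v_i\neq\emptyset$) and conclude $\righty(v_{i+2})<\lefty(v_i)\leq\righty(v_{i+1})$. Iterating gives $v_{j^*}\succ v_r$, and the backward induction cleanly transports this to $v_1\succ v_r$ using the same disjointness observation ($v_i\cap v_r=\emptyset$ for $i\leq r-2$) plus the minimality of $j^*$; note the indices all stay in range because $i\leq j^*\leq r-2$ throughout. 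The two-stage contradiction is a bit more elaborate than strictly necessary---one can also argue directly that all the disjoint pairs $(v_i,v_{i+2})$ must be oriented the same way along the line and then anchor the orientation using $v_1\prec v_r$---but your version is complete and correct, and correctly accounts for the fact that the conclusion is claimed only for $i\leq r-2$ (the last step $v_{r-1},v_r$ can indeed fail to be $\prec$-ordered).
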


			\begin{Fact} \label{bridgepointfact}
				Given an interval graph $G$, a greedy shortest path
				$\greedy(v_1,v_r)=(v_1,v_2,\ldots,v_r)$ in $G$, and a point $a\in
				\mathbb{R}$, let $B_a=\{v_i\in \greedy(v_1,v_r):
				\mathsf{left}(v_i)\leq a\leq\mathsf{right}(v_i)\}$. Then, $|B_a|\leq
				2$.
			\end{Fact}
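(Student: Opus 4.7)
The plan is to argue by contradiction, leveraging only the fact that $\greedy(v_1,v_r)$ is a shortest path (the greedy property is not actually needed here, but the statement is phrased for greedy paths because that is how it will be applied later).

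Suppose for contradiction that $|B_a| \ge 3$ and pick three indices $i < j < k$ with $v_i, v_j, v_k \in B_a$. By definition of $B_a$, each of the intervals $[\lefty(v_i),\righty(v_i)]$ and $[\lefty(v_k),\righty(v_k)]$ contains the point $a$, so these two intervals have non-empty intersection. By the defining property of an interval graph, this means $(v_i, v_k) \in E(G)$, and therefore $d_G(v_i, v_k) \le 1$.

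On the other hand, the subsequence $(v_i, v_{i+1}, \ldots, v_k)$ is a contiguous sub-path of $\greedy(v_1, v_r)$. A standard splicing argument shows that any sub-path of a shortest path is itself a shortest path between its endpoints: if $d_G(v_i, v_k) < k - i$, then replacing the segment from $v_i$ to $v_k$ inside $\greedy(v_1, v_r)$ by a shorter $v_i$-$v_k$ path would yield a shorter $v_1$-$v_r$ walk, contradicting that $\greedy(v_1, v_r)$ has length $r-1$. Hence $d_G(v_i, v_k) = k - i \ge 2$, which contradicts $d_G(v_i, v_k) \le 1$ obtained above. Therefore $|B_a| \le 2$.

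There is no real obstacle in this argument; the proof is elementary, and the only thing worth flagging is the standard lemma that sub-paths of shortest paths are shortest paths, which is used implicitly.
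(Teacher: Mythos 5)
The paper does not actually supply a proof for this Fact (or for the other Facts in the same framed block); it only remarks that all of them are easy to prove, so there is no paper argument to compare against. Your proof is correct and is essentially the natural one: if $v_i, v_j, v_k$ (with $i < j < k$) all contain the point $a$, then in particular $v_i$ and $v_k$ overlap at $a$ and are therefore adjacent, giving $d_G(v_i,v_k) = 1$; but a contiguous sub-path of a shortest path is itself a shortest path, forcing $d_G(v_i,v_k) = k - i \geq 2$, a contradiction. Your observation that the greedy hypothesis is never invoked — only the shortest-path property — is also accurate, and in fact the paper already signals that it is aware of this distinction: the neighbouring Fact~\ref{bridgeintervalfact} is deliberately stated for arbitrary shortest paths, whereas Fact~\ref{bridgedomfact} genuinely needs greediness. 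So you are proving the statement in the same generality as the paper's strongest version of it. The one cosmetic quibble is the reuse of $k$ as an index into the path $(v_1,\ldots,v_r)$, which clashes with the paper's global convention that $k$ is the number of terminals; pick a different letter (e.g.\ $m$) to avoid confusing a reader.
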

			
			\begin{Fact} \label{bridgeintervalfact}
				Given an interval graph $G$, and a shortest path (not necessarily a greedy shortest path) $P_G(v_1,v_r)=(v_1,v_2,\ldots,v_r)$ in $G$, and a vertex $x\in
				V(G)$, let $B_x=\{v_i\in P_G(v_1,v_r): (x,v_i)\in E(G)\}$. Then,
				$|B_x|\leq 3$.
			\end{Fact}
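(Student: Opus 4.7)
The plan is to derive the bound purely from the optimality of $P_G$, without invoking any special interval-graph structure. The core observation is that any sub-path of a shortest path is itself a shortest path between its endpoints, so if $x$ has too many neighbours along $P_G$, then routing through $x$ would give a short-cut.

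More concretely, suppose $B_x = \{v_{i_1}, v_{i_2}, \ldots, v_{i_m}\}$ with $i_1 < i_2 < \cdots < i_m$. I would focus on the two extreme neighbours $v_{i_1}$ and $v_{i_m}$. Since $x$ is adjacent to both, the concatenation $v_{i_1}, x, v_{i_m}$ is a walk of length $2$, so $d_G(v_{i_1}, v_{i_m}) \leq 2$. On the other hand, the sub-path $(v_{i_1}, v_{i_1+1}, \ldots, v_{i_m})$ of $P_G(v_1,v_r)$ has length $i_m - i_1$, and it must be a shortest path between its endpoints (otherwise we could splice a shorter connector in its place to obtain a path strictly shorter than $P_G(v_1,v_r)$, contradicting the hypothesis that $P_G$ is a shortest path). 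Therefore $d_G(v_{i_1}, v_{i_m}) = i_m - i_1 \geq m - 1$. Combining, $m - 1 \leq 2$, i.e.\ $|B_x| = m \leq 3$.

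I do not expect a genuine obstacle here: the argument is a two-line consequence of the sub-path principle, and the bound $3$ is tight (realized by $x$ intersecting three consecutive intervals on the path). It is worth noting that, unlike~\autoref{bridgepointfact}, this argument uses neither the greediness of the path nor the interval representation; it holds for shortest paths in arbitrary graphs. The interval-graph setting merely supplies the ambient context in which this fact will be applied later.
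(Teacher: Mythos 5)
Your proof is correct, and it is the natural argument one would give: the paper itself does not spell out a proof of this fact (it lists it among several facts and remarks only that ``all of these facts are easy to prove''), so there is no competing proof to compare against. Your observation that the statement has nothing to do with interval graphs---it follows for shortest paths in arbitrary graphs from the sub-path principle together with the two-step detour through $x$---is accurate and worth noting, since it explains why this fact needs a bound of $3$ (two-step detour) whereas \autoref{bridgepointfact}, which genuinely uses the interval representation and the greedy structure, gets away with a bound of $2$. One small point to make explicit for completeness: the argument as written assumes $m\geq 2$ (so that $v_{i_1},x,v_{i_m}$ is a nondegenerate walk), but the cases $m\leq 1$ are trivial, so nothing is lost.
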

			
			\begin{Fact} \label{bridgedomfact}
				Given an interval graph $G$, a greedy shortest path
				$\greedy(v_1,v_r)=(v_1,v_2,\ldots,v_r)$ in $G$, and two vertices
				$x,y\in V(G)$ such that $\mathsf{left}(x)<\mathsf{left}(y)$ and
				$\mathsf{right}(x)>\mathsf{right}(y)$. Then, $y\in \greedy(v_1,v_r)$
				if and only if $v_1=y$ or $v_r=y$.
			\end{Fact}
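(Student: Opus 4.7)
The plan is to handle the two directions of the biconditional separately. The reverse direction is immediate (if $y = v_1$ or $y = v_r$ then $y \in \greedy(v_1,v_r)$ by definition), so the content lies in the forward direction. I would assume for contradiction that $y = v_i$ for some $1 < i < r$ and derive a contradiction from the greedy construction. The crucial geometric observation is that the hypothesis $\lefty(x) < \lefty(y)$ combined with $\righty(x) > \righty(y)$ says that the interval of $y$ is strictly contained in that of $x$, so every vertex whose interval meets $y$'s interval either equals $x$ or is adjacent to $x$.

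I would then split into two cases according to whether $v_{i-1} = x$. In the typical case $v_{i-1} \neq x$, the vertex $v_{i-1}$ is adjacent to $y = v_i$ and hence adjacent to $x$ as well, so $x$ is a genuine candidate for the greedy choice at step $i-1$. Since $\righty(x) > \righty(y) = \righty(v_i)$, the greedy rule (which selects the neighbor of $v_{i-1}$ with maximum right endpoint) could not have picked $v_i = y$. The only escape clause is that the algorithm might have stopped at step $i-1$ by setting $v_i := v_r$, but that would require $v_{i-1}$ to be adjacent to $v_r$ and would already yield $v_i = v_r \neq y$.

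In the degenerate case $v_{i-1} = x$, I would shortcut the path. Since $v_{i+1}$ is adjacent to $y$ and $y$'s interval is contained in $x$'s, $v_{i+1}$ is adjacent to $x = v_{i-1}$; moreover $v_{i+1} \neq v_{i-1}$ because a shortest path has distinct vertices. Hence $v_1,\ldots,v_{i-1},v_{i+1},\ldots,v_r$ is a walk from $v_1$ to $v_r$ of length $r-2$, contradicting the fact that $\greedy(v_1,v_r)$ is a shortest path of length $r-1$.

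The only nontrivial (but mild) obstacle is being careful with the greedy algorithm's stopping rule in the first case, to confirm that $x$ really is available as a candidate that the greedy criterion would have strictly preferred over $y$; the case analysis above is designed precisely to dispose of this subtlety.
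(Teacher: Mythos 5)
Your proof is correct. The paper itself offers no written proof for the Facts in that box (it only remarks that "all of these facts are easy to prove"), so there is nothing to compare against directly, but your argument is sound and handles the relevant subtleties. The key geometric observation---that strict containment of $y$'s interval in $x$'s interval forces every neighbour of $y$ other than $x$ itself to also be a neighbour of $x$---is exactly the leverage the statement calls for, and you correctly dispose of the greedy algorithm's stopping rule: if $v_{i-1}$ were already adjacent to $v_r$, the next vertex would be $v_r$, not $y$.

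One small remark: the degenerate case $v_{i-1} = x$ can be dispatched even more quickly via \autoref{bridgebasicfact}, which gives $v_{i-1} \prec v_i$ along a greedy path; but $\righty(x) > \righty(y)$ says $v_i = y \prec x = v_{i-1}$, a contradiction. Your shortcutting argument (dropping $v_i$ and joining $v_{i-1}=x$ directly to $v_{i+1}$) is also valid since greedy paths are shortest paths, so this is a matter of taste rather than a gap. You should perhaps state explicitly that the fact is read with $v_1 \prec v_r$ (the setting in which $\greedy$ is defined), since that is what makes the greedy rule go "rightward" and makes $x$ a strictly preferable candidate; but this is implicit in the paper's conventions.
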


	}\end{framed}
	We now proceed to prove the $O(k)$ upper bound for distance-approximating subgraphs of interval graphs.
	
	\subsection{Distance-approximating Subgraphs of Interval Graphs}
	
	In this section, we show that a simple greedy technique yields a distance-approximating subgraph for any interval graph. Let us restate~\autoref{klintheo}.
	
	\begin{Theorem} \label{thm:das}
		If $\mathcal{I}$ is the family of all interval graphs, then there exists a subgraph $H$ of $G$ such that $\nbv(H)=O(k)$ and for all terminals $u$ and $v$, we have $d_G(u,v)\leq d_H(u,v)\leq 1+d_G(u,v)$.
	\end{Theorem}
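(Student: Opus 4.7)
The plan is to construct $H$ as the greedy shortest path from the leftmost terminal to the rightmost terminal, augmented with one pendant edge per intermediate terminal. After invoking the standard reduction used in the main upper-bound proof, I assume the interval representation is canonical: each terminal is a length-$0$ interval (a point), and each non-terminal is a length-$1$ interval. I sort the terminals as $t_1 \prec t_2 \prec \cdots \prec t_k$ and set $P^\ast = v_0, v_1, \ldots, v_r$ to be the greedy shortest path from $v_0 = t_1$ to $v_r = t_k$. For each intermediate terminal $t_i$ (with $2 \leq i \leq k-1$), I define $f(i) := \min\{\ell : t_i \in v_\ell\}$, i.e., the smallest index on $P^\ast$ whose interval contains the point $t_i$; such an $\ell$ exists because the right endpoints on $P^\ast$ sweep from $t_1$ to $t_k$ and $t_i$ lies in between. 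Then $H := P^\ast \cup \{(t_i, v_{f(i)}) : 2 \leq i \leq k-1\}$.

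For the branching-vertex count, terminals have degree at most $1$ in $H$, interior vertices of $P^\ast$ have degree $2$ from $P^\ast$ itself, and a vertex of $P^\ast$ becomes branching only if at least one terminal is attached to it. At most $k-2$ distinct vertices of $P^\ast$ receive such an attachment, so $\nbv(H) \leq k-2 = O(k)$.

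The lower distance bound $d_G(t_i, t_j) \leq d_H(t_i, t_j)$ is trivial. For the upper bound, observe that for $t_i \prec t_j$ the walk $t_i,\, v_{f(i)},\, v_{f(i)+1},\, \ldots,\, v_{f(j)},\, t_j$ lies in $H$ and has length $f(j) - f(i) + 2$, so $d_H(t_i, t_j) \leq f(j) - f(i) + 2$. In the canonical representation, greediness of $P^\ast$ yields $d_G(t_1, t_i) = f(i) + 1$, and so the required bound $d_H(t_i, t_j) \leq d_G(t_i, t_j) + 1$ reduces to the following \emph{saved-edge inequality}: for any terminals $t_a \prec t_b \prec t_c$,
\[
d_G(t_a, t_c) \leq d_G(t_a, t_b) + d_G(t_b, t_c) - 1.
\]
To prove this, take any shortest paths $P_{ab} = (\ldots, u, t_b)$ and $P_{bc} = (t_b, w, \ldots)$. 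Since $t_b$ is a point, both $u$ and $w$ contain this point, so $u$ and $w$ intersect (sharing $t_b$) and are hence adjacent in $G$. Concatenating $P_{ab}$ up to $u$, traversing the edge $(u, w)$, and continuing along $P_{bc}$ from $w$ bypasses $t_b$ and produces a $t_a$-to-$t_c$ walk of length $d_G(t_a, t_b) + d_G(t_b, t_c) - 1$. Applying the inequality with $(t_a, t_b, t_c) = (t_1, t_i, t_j)$ yields $f(j) - f(i) \leq d_G(t_i, t_j) - 1$, completing the distance bound.

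The main obstacle is justifying that the canonical reduction can be invoked here without loss of generality: the saved-edge inequality genuinely fails when $t_b$ is a wide interval whose two designated neighbors need not pairwise intersect, so without the reduction the construction can give a $+2$ (rather than $+1$) approximation. I would either explicitly perform the canonical reduction (subdividing long non-terminal intervals and representing each terminal as a point while preserving pairwise terminal distances) or absorb this reduction into the upper-bound framework already used for the main theorem.
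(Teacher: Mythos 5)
Your proof is correct (modulo the canonical-reduction step you correctly flag) and it takes a genuinely different route from the paper's. The paper's $H_1$ adds \emph{every} edge joining an intermediate terminal to a vertex of the greedy path $\greedy(1,k)$; the $O(k)$ branching-vertex bound then follows from \autoref{bridgeintervalfact} (each terminal touches at most three vertices of a shortest path), and the $+1$ guarantee is established in \autoref{claim+1approx} by marching the greedy shortest paths of $G$ and $H_1$ in lockstep and showing they remain adjacent at every step. Your construction is leaner---a single pendant edge per intermediate terminal, to the earliest vertex of $P^\ast$ covering it---and you obtain the $+1$ guarantee from the saved-edge inequality $d_G(t_a,t_c) \leq d_G(t_a,t_b) + d_G(t_b,t_c) - 1$, which is specific to point terminals (the last hop into $t_b$ and the first hop out both cover the point, hence are adjacent). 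This yields a sharper constant (at most $k-2$ branching vertices in the canonical form, nearly matching the $k-3$ lower bound of \autoref{lemma+1lb}) at the price of requiring the reduction: the paper's proof of \autoref{lemma+1approx} works directly on general interval graphs, whereas, as you observe, the saved-edge inequality genuinely fails for wide-interval terminals. To close the gap, carry out the reduction of Section 4.2 (split each terminal into two point terminals, delete dominated non-terminals, pass to unit intervals via Gardi's result) and verify that converting the resulting subgraph back to a subgraph of $G$ adds only $O(k)$ branching vertices and does not worsen the additive distortion; this mirrors the paper's own treatment of the distance-preserving case.
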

	
	Let $G(V,E,\mathsf{left},\mathsf{right})$ be an interval graph on $k$ terminals indexed by the set $[k]$. For any two vertices $u\preceq v$ of $G$, let $\greedy(u,v)$ be the greedy shortest path between $u$ and $v$, as defined in the previous section. %Let $Q(u,v)$ be the set of intermediate vertices along the path $P(u,v)$, that is, if $P(u,v)=(u,x_1,x_2,\ldots,x_l,v)$, then $Q(u,v)=(x_1,x_2,\ldots,x_l)$. If $(u,v)$ is an edge, then $Q(u,v)$ is empty.
	For each $1\leq i<k$, define the tree $T_i$ as follows.
	
	$$T_i=\displaystyle\bigcup_{i<j\leq k}\greedy(i,j)$$
	
	%However, the definition of $T_i$ is slightly different for $i=k$.
	%\begin{equation}
	%T_k=\displaystyle\bigcup_{1\leq i<k}P(i,k)
	%\end{equation}
	
	Thus, $T_1$ is a shortest-path tree rooted at terminal $1$. %and $T_k$ is a shortest path tree rooted at terminal $k$.
	We are now set to define $H_1$. This is the distance-approximating subgraph of $G$.
	
	$$H_1=T_1\cup\{(v,i)\in E(G) : 1<i<k, v\in V(T_1)\}$$
	
	Assume that $v_{\mathrm{last}}=k$. Then, $H_1$ may alternatively be defined as follows.
	
	$$H_1=\greedy(1,k)\cup\{(v,i)\in E(G) : 1<i<k, v\in V(T_1)\}$$
	
	It is easy to check that both these definitions are equivalent. The following theorem proves that $H_1$ approximates terminal distances in $G$ up to an additive term of $+1$.
	
	\begin{Lemma} \label{lemma+1approx}
		$d_G(i,j)\leq d_{H_1}(i,j)\leq d_G(i,j)+1 \qquad \forall\, 1\leq i<j\leq k$.
	\end{Lemma}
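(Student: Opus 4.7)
The lower bound is immediate since $H_1\subseteq G$. For the upper bound, I would first dispose of two easy boundary cases. When $i=1$, $\greedy(1,j)\subseteq T_1\subseteq H_1$ gives $d_{H_1}(1,j)=d_G(1,j)$. When $i\geq 2$ and $(i,j)\in E(G)$, the hypothesis $i<j\leq k$ yields $1<i<k$, so $(i,j)$ lies in $H_1$ as an added edge (using $j\in V(T_1)$), giving $d_{H_1}(i,j)=1$.

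In the main case where $i\geq 2$ and $(i,j)\notin E(G)$, we have $\lefty(j)>\righty(i)$ and $d_G(i,j)\geq 2$. My plan is to use $\greedy(1,j)=(u_0,\ldots,u_d=j)$ as a scaffold. \autoref{bridgebasicfact} gives $\righty(u_0)<\cdots<\righty(u_{d-1})$, and these right endpoints straddle $\righty(i)$ because $\righty(u_0)=\righty(1)<\righty(i)<\lefty(j)\leq\righty(u_{d-1})$; an intermediate-value argument along the sequence then finds some $u_t$ with $t\leq d-1$ adjacent to $i$. Let $s$ be the smallest such index, so $\greedy(1,i)=(u_0,\ldots,u_s,i)$ and $d_G(1,i)=s+1$. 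The analysis splits on whether $u_{s+1}$ is adjacent to $i$. In case (a), the path $i,u_{s+1},\ldots,u_d$ of length $d-s$ lies in $H_1$ (first edge added, rest in $T_1$), and the triangle inequality $d\leq(s+1)+d_G(i,j)$ yields $d-s\leq d_G(i,j)+1$. In case (b), the path $i,u_s,u_{s+1},\ldots,u_d$ of length $d-s+1$ lies in $H_1$, and I need the sharper bound $d-s\leq d_G(i,j)$.

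The main obstacle is case (b). When $s+1=d$ the bound is trivial since $d-s=1\leq 2\leq d_G(i,j)$; otherwise $s+1\leq d-1$ and my plan rests on the following structural observation. The adjacency of $u_{s+1}$ with $u_s$ together with $\lefty(u_{s+1})>\righty(i)$ forces $\righty(u_s)\geq\lefty(u_{s+1})>\righty(i)$ (and hence $s\geq 1$); the greedy choice of $u_{s+1}$ over $u_s$ then rules out any neighbour $z$ of $i$ with $\righty(z)>\righty(u_{s+1})$, because such a $z$ would satisfy $\lefty(z)\leq\righty(i)<\righty(u_s)$ and $\righty(z)>\righty(u_{s+1})>\righty(u_s)$ (the last inequality from \autoref{bridgebasicfact}), making $z$ a neighbour of $u_s$ that beats the greedy choice. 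Applying this observation to the first internal vertex $y_1$ of $\greedy(i,j)=(i=y_0,y_1,\ldots,y_\ell=j)$, where $\ell=d_G(i,j)\geq 2$ and \autoref{bridgebasicfact} gives $\righty(y_1)>\righty(i)$, a direct endpoint check shows $y_1$ is adjacent to $u_s$. Splicing then produces the walk $(u_0,\ldots,u_s,y_1,\ldots,y_\ell)$ from $1$ to $j$ of length $s+\ell$, proving $d\leq s+\ell$, i.e., $d-s\leq d_G(i,j)$, which closes case (b).
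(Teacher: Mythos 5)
Your argument is correct but takes a genuinely different route from the paper's. The paper compares the two greedy paths $\greedy(i,k)$ (taken in $G$) and $\greedyH(i,k)$ (taken in $H_1$), showing by induction (\autoref{claim+1approx}) that their $p$-th vertices are always equal or adjacent in $G$, and then absorbs the $+1$ as a one-step detour at the very end. You instead work entirely inside $G$: you use $\greedy(1,j)\subseteq T_1$ as a scaffold, locate the first scaffold vertex $u_s$ adjacent to $i$, and build an explicit $H_1$-walk by hopping from $i$ onto $u_{s+1}$ (case (a), finished by the triangle inequality through $d_G(1,i)=s+1$) or onto $u_s$ (case (b), finished by splicing the tail of $\greedy(i,j)$ onto $u_s$ at $y_1$). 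A concrete advantage of your route is that it never needs to reason about the greedy procedure on the non-interval graph $H_1$; the paper's proof, by contrast, tacitly relies on properties of $\greedyH$ (for instance, that the length-$\le 2$ detour from $v_{H_1}(i,p-1)$ to $j$ uses only $H_1$-edges) that it leaves implicit.

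Two small points to tighten. First, dispose up front of the degenerate possibility $i\in V(\greedy(1,j))$: in that case the suffix of $\greedy(1,j)$ starting at $i$ lies in $T_1\subseteq H_1$ and already preserves $d_G(i,j)$, and this case must be set aside because your case (b) step $\lefty(u_{s+1})>\righty(i)$ is false when $u_{s+1}=i$. Second, the ``structural observation'' ruling out neighbours $z$ of $i$ with $\righty(z)>\righty(u_{s+1})$ is never actually used: once you have $\righty(u_s)>\righty(i)$, the adjacency of $y_1$ and $u_s$ is immediate, since both intervals contain the point $\righty(i)$ (each has left endpoint $\leq\righty(i)$ by adjacency to $i$, and right endpoint $>\righty(i)$ by \autoref{bridgebasicfact} and your derived bound on $\righty(u_s)$). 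Dropping that observation shortens case (b) without loss.
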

	\begin{proof}
		For $i=1$, $d_G(1,j)=d_{H_1}(1,j)$ since $T_1\subseteq H_1$. Also when $(i,j)\in E(G)$, it is easily verifiable that $d_{H_1}(i,j)\leq 2$.
		
		Now suppose $i\neq 1$. We show that for any $j$ such that $i<j\leq k$ and $(i,j)\notin E(G)$, $d_{H_1}(i,j)\leq d_G(i,j)+1$. Let $\greedyH(i,k)$ be the greedy shortest path from $i$ to $k$ in $H_1$. For integer $p\geq 1$, let  $v_G(i,p)$ be the $p$-th vertex on the path $\greedy(i,k)$ ($i$ itself being the 0-th vertex). $v_{H_1}(i,p)$ is similarly defined. Note that $\mathsf{right}(v_G(i,p))\geq\mathsf{right}(v_{H_1}(i,p))$, with equality occurring when $v_G(i,p)=v_{H_1}(i,p)$.
		
		Suppose $d_G(i,j)=p$. Then, $(v_G(i,p-1),j)\in E(G)$. Using~\autoref{claim+1approx}, we know that either $v_G(i,p-1)=v_{H_1}(i,p-1)$ or $(v_G(i,p-1),v_{H_1}(i,p-1))\in E(G)$. In the first case, $d_G(i,j)=d_{H_1}(i,j)=p$ and we are done. In the second case, there is a path of length at most $2$ from $v_{H_1}(i,p-1)$ to $j$. Thus, $d_{H_1}(i,j)\leq (p-1)+2=p+1$. This completes the proof.
	\end{proof}
	
	\begin{Claim} \label{claim+1approx}
		Let $v_G(i,p)$ and $v_{H_1}(i,p)$ be as defined in the proof of~\autoref{lemma+1approx}. Then for all $p\geq 1$, either $v_G(i,p)=v_{H_1}(i,p)$ or $(v_G(i,p),v_{H_1}(i,p))\in E(G)$. 
	\end{Claim}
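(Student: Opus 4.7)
The plan is to induct on $p$, maintaining the auxiliary inequality $\mathsf{right}(v_G(i, p)) \geq \mathsf{right}(v_{H_1}(i, p))$ along the way. The key to the induction is a structural observation: denoting $\greedy(1, k) = (s_0 = 1, s_1, \ldots, s_L = k)$, the greedy walk $\greedyH(i, k)$ in $H_1$ has the form $(i, s_{m_1}, s_{m_1 + 1}, \ldots, s_{L-1}, k)$, where $m_1$ is the index of the rightmost spine vertex adjacent to $i$ in $G$. This is because $V(H_1) = V(T_1) = V(\greedy(1, k)) \cup R$ (with $R$ the terminal set), and for any spine vertex $s_m$ with $m < L-1$, the rightmost neighbor of $s_m$ in $G$ is $s_{m+1}$ by definition of the spine, an edge that remains in $H_1$ via $T_1$; hence $s_{m+1}$ is also the rightmost $H_1$-neighbor of $s_m$. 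At the boundary, $k$'s only $H_1$-edge is the tree edge $(s_{L-1}, k)$ (no added edges involve $k$), so the rightmost $H_1$-neighbor of $s_{L-1}$ is $k$.

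For the base case ($p = 1$), let $w$ be the spine vertex of smallest index with $\mathsf{right}(w) \geq \mathsf{right}(i)$; the predecessor of $w$ on the spine has right endpoint strictly below $\mathsf{right}(i)$, so $\mathsf{left}(w) < \mathsf{right}(i) \leq \mathsf{right}(w)$, making $w$ overlap $i$. Since $w \in V(T_1) = V(H_1)$, $w$ is a neighbor of $i$ in $H_1$, giving $\mathsf{right}(v_{H_1}(i, 1)) \geq \mathsf{right}(w) \geq \mathsf{right}(i)$. Combined with $\mathsf{left}(v_G(i, 1)) \leq \mathsf{right}(i)$ from $v_G(i, 1) \sim i$, and $\mathsf{right}(v_{H_1}(i, 1)) \leq \mathsf{right}(v_G(i, 1))$, the intervals $v_G(i, 1)$ and $v_{H_1}(i, 1)$ overlap in $G$.

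For the inductive step ($p \geq 2$), let $u = v_G(i, p-1)$ and $u' = v_{H_1}(i, p-1) = s_m$. By the IH, $u \sim s_m$ (or $u = s_m$) in $G$, so $\mathsf{right}(u) \leq \mathsf{right}(s_{m+1})$ since $s_{m+1}$ is the rightmost neighbor of $s_m$ in $G$. If $m < L - 1$, then $v_{H_1}(i, p) = s_{m+1}$, and $v_G(i, p) \sim u$ yields $\mathsf{left}(v_G(i, p)) \leq \mathsf{right}(u) \leq \mathsf{right}(s_{m+1}) = \mathsf{right}(v_{H_1}(i, p))$, while $\mathsf{left}(s_{m+1}) \leq \mathsf{right}(s_m) \leq \mathsf{right}(u) < \mathsf{right}(v_G(i, p))$, so the two intervals overlap in $G$. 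For the edge case $m = L - 1$, we have $v_{H_1}(i, p) = k$ so $p = d_{H_1}(i, k) \geq d_G(i, k)$; since $v_G(i, p)$ is defined we must have $p = d_G(i, k)$ and $v_G(i, p) = k = v_{H_1}(i, p)$.

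The main technical obstacle lies in establishing the structural observation, since it requires a careful analysis of $E(H_1)$ (a mix of tree edges from $T_1$ and added terminal edges) to verify that no $H_1$-neighbor of a spine vertex $s_m$ has a larger right endpoint than $s_{m+1}$. Once this observation is in hand, both the base case and the inductive step reduce to straightforward interval-overlap inequalities, invoking $v_G(i, p) \sim v_G(i, p-1)$ and the spine property of $\greedy(1, k)$.
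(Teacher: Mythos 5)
Your proof follows the same overall strategy as the paper: induction on $p$, maintaining the auxiliary inequality $\mathsf{right}(v_G(i,p)) \geq \mathsf{right}(v_{H_1}(i,p))$, and using the greedy (spine) structure of $\greedy(1,k)$ to bound how far $v_{H_1}(i,p)$ can fall behind $v_G(i,p)$. In that sense the approach matches the paper's, with more of the bookkeeping spelled out.

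However, there is a genuine gap in the ``structural observation.'' You assert $\greedyH(i,k) = (i, s_{m_1}, s_{m_1+1}, \ldots, s_{L-1}, k)$ and, crucially, use $v_{H_1}(i,p-1) = s_m$ (a spine vertex) as a hard assumption in the inductive step. Your justification covers only the propagation part (once on the spine, stay on the spine, since $s_{m+1}$ is the rightmost $G$-neighbor of $s_m$ and is in $H_1$); it never establishes the entry point $v_{H_1}(i,1) = s_{m_1}$. Since \autoref{claim+1approx} is stated for \emph{general} interval graphs (terminals are arbitrary intervals, not point intervals), $V(H_1) = V(\greedy(1,k)) \cup R$ contains terminals that can overlap one another, and $H_1$ contains all edges of the form $(v,j)$ with $v\in V(T_1)$ and $j$ a non-extreme terminal. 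So $i$ can have another terminal $j$ as an $H_1$-neighbor with $\mathsf{right}(j)$ strictly larger than $\mathsf{right}(s)$ for every spine vertex $s$ that overlaps $i$: the spine vertex that covers $\mathsf{right}(j)$ need not overlap $i$ at all (its left endpoint may lie in $(\mathsf{right}(i), \mathsf{right}(j)]$). In that case $v_{H_1}(i,1)$ is the terminal $j$, not $s_{m_1}$, and your inductive step at $p=2$ cannot invoke the spine property of $v_{H_1}(i,1)$. This is precisely the case your proof does not handle; the paper's own terse argument (in particular the unjustified inequality $\mathsf{right}(v_G(i,r)) \leq \mathsf{right}(v_{H_1}(i,r+1))$ and the assertion ``in the first case $v_G(i,r+1)=v_{H_1}(i,r+1)$'') relies on the same unstated assumption, but your more explicit framing makes the missing case visible. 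To close the gap you would need to either show that the rightmost $H_1$-neighbor is always a spine vertex (which fails in general), or handle the case where the greedy walk in $H_1$ passes through a terminal by exhibiting, for such a terminal $j$, a spine vertex that is an $H_1$-neighbor of $j$ and reaches at least as far right as $v_G(i,p-1)$.
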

	
	\begin{proof}
		We prove this claim by inducting on $p$. For $p=1$, the claim is trivially true. Our goal is to prove that the claim is true for $p=r+1$, assuming that the claim is true for $p=r$. Thus, our induction hypothesis is that either $v_G(i,r)=v_{H_1}(i,r)$ or $(v_G(i,r),v_{H_1}(i,r))\in E(G)$. In the first case, we have $v_G(i,r+1)=v_{H_1}(i,r+1)$, and we are done. In the second case, assume that $(v_G(i,r),v_{H_1}(i,r))\in E(G)$. Then, we have the following.
		\begin{align*}
		v_G(i,r+1)&=\underset{x}{\operatorname{argmax}}\{\mathsf{right}(x)\mid x\in V(G),(v_G(i,r),x)\in E(G)\}\\
		v_{H_1}(i,r+1)&=\underset{x}{\operatorname{argmax}}\{\mathsf{right}(x)\mid x\in V(H_1),(v_{H_1}(i,r),x)\in E(H_1)\}
		\end{align*}
		If $v_{G}(i,r+1)=v_{H_1}(i,r+1)$, then we are done. Otherwise, $\mathsf{left}(v_{H_1}(i,r+1))<\mathsf{right}(v_{H_1}(i,r))<\mathsf{right}(v_{G}(i,r))\leq\mathsf{right}(v_{H_1}(i,r+1))$. Thus, the point $\mathsf{right}(v_{G}(i,r))$ is present in the interval corresponding to $v_{H_1}(i,r+1)$ as well as in the interval corresponding to $v_{G}(i,r+1)$, which implies that $(v_G(i,r+1),v_{H_1}(i,r+1))\in E(G)$. This completes the proof of the claim.
	\end{proof}
	
	Thus, $H_1$ approximates terminal distances in $G$ up to an additive term of $+1$. We now prove that the number of branching vertices in $H_1$ is linear in $k$.
	\begin{Lemma} \label{lemma+1ub}
		$H_1$ has $O(k)$ branching vertices.
	\end{Lemma}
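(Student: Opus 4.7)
The plan is to separate the branching vertices of $H_1$ into terminals and non-terminals and bound each count. The terminal contribution is at most $k$, so the substantive work lies in bounding the non-terminal count.

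First I would establish a structural observation: the greedy step---move to the rightmost-reaching neighbor of the current interval---does not depend on the target, so when the greedy algorithm is started at terminal $1$ it traces a single deterministic sequence of intervals. Assuming, as in the alternative definition of $H_1$, that $k$ is the rightmost terminal, this sequence is exactly $\greedy(1,k)$. For any terminal $j$, $\greedy(1,j)$ must therefore be a prefix of $\greedy(1,k)$, possibly with $j$ appended as a single extra final vertex. Consequently $T_1=\bigcup_{1<j\le k}\greedy(1,j)$ is simply the ``spine'' $\greedy(1,k)$ together with any middle terminals attached as pendants to their parents on this spine, and every non-terminal vertex of $V(T_1)$ lies on $\greedy(1,k)$.

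Next I would count. A non-terminal $v\in V(T_1)$ has exactly two $T_1$-neighbors, namely its spine neighbors on $\greedy(1,k)$; its remaining $H_1$-neighbors are precisely the middle terminals $i\in\{2,\ldots,k-1\}$ with $(v,i)\in E(G)$. Hence $v$ is a branching vertex of $H_1$ only if at least one middle terminal is a $G$-neighbor of $v$, so the number of non-terminal branching vertices is at most the number of $G$-edges between middle terminals and $V(\greedy(1,k))$. Applying~\autoref{bridgeintervalfact} to the shortest path $\greedy(1,k)$, each middle terminal has at most three neighbors on this path, and summing over the $k-2$ middle terminals yields at most $3(k-2)$ such edges. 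Therefore $\nbv(H_1)\le k+3(k-2)=O(k)$.

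The one nontrivial step is the structural observation that $T_1$ collapses to a single greedy path with pendant terminals; without it, one cannot apply~\autoref{bridgeintervalfact} directly, since the ``at most three neighbors'' bound is specific to a single shortest path and fails for an arbitrary shortest-path tree. Once the spine structure is in hand, the remainder is routine bookkeeping.
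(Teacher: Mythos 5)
Your proof is correct and follows essentially the same approach as the paper's: both bound $\nbv(H_1)$ by observing that every non-terminal of $H_1$ lies on the single greedy path $\greedy(1,k)$ and then applying \autoref{bridgeintervalfact} to charge each middle terminal at most three incidences along that path, for an $O(k)$ total. You supply in more detail the justification (greedy steps are target-independent) for collapsing $T_1$ to a spine plus pendant terminals, which the paper states as ``easy to check''; one tiny imprecision is that a non-terminal on the spine can have pendant terminals as additional $T_1$-neighbors, so it need not have ``exactly two'' $T_1$-neighbors, but this does not affect the count since those edges are exactly the middle-terminal incidences you go on to bound.
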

	\begin{proof}
		For $i\in[k]$, let $B(1)_i=\{v\in \greedy(1,k)\mid (i,v)\in E(G)\}$. Then by~\autoref{bridgeintervalfact}, $|B(1)_i|\leq 3$. In other words, each terminal can contribute at most $3$ branching vertices to $H_1$. Summing over all terminals,
		
		$$\sum\limits_{i=1}^{k}{|B(1)_i|}\leq 3k$$
		
		$\greedy(1,k)$ is a simple path and thus contributes no branching vertices of its own to $H_1$. Since $H_1=\greedy(1,k)\cup\{(v,i)\in E(G) : 1<i<k, v\in V(T_1)\}$, $H_1$ has at most $3k$ branching vertices, completing the proof.
	\end{proof}
	
	\autoref{lemma+1approx} and~\autoref{lemma+1ub} together complete the proof of~\autoref{thm:das}. Finally, we prove that this upper bound is tight by providing a matching lower bound.
	\begin{Lemma} \label{lemma+1lb}
		For every positive integer $k$, there exists an interval graph $\Ghard$ on $k$ terminals such that if $H_1$ is a subgraph of $\Ghard$ and $H_1$ approximates distances in $\Ghard$ up to an additive distortion of $+1$, then $H_1$ has $\Omega(k)$ branching vertices.
	\end{Lemma}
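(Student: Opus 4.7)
The plan is to exhibit a simple caterpillar-like interval graph $\Ghard$ and argue that any subgraph with $+1$ additive distortion must retain almost all of its structure. Concretely, place $k$ degenerate terminal intervals $t_i = [2i, 2i]$ for $i \in [k]$ together with $k-1$ non-terminal ``bridges'' $u_i = [2i - \tfrac{1}{2}, 2i + \tfrac{5}{2}]$ for $i \in [k-1]$. A short interval-intersection computation gives the adjacencies $t_i \sim u_{i-1}$, $t_i \sim u_i$ (when these exist), and $u_i \sim u_{i \pm 1}$, and no others; in particular, $d_{\Ghard}(t_i, t_j) = |i-j|+1$. So $\Ghard$ is an interval graph whose non-terminals form a single path $u_1 u_2 \cdots u_{k-1}$, with terminals $t_i$ and $t_{i+1}$ hanging off each $u_i$.

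Let $H$ be any subgraph of $\Ghard$ that approximates terminal distances to within $+1$. I would argue in three steps. First, every $u_i$ must appear in $H$, since deleting $u_i$ from $\Ghard$ disconnects $\{t_1, \dots, t_i\}$ from $\{t_{i+1}, \dots, t_k\}$, which would give $d_H(t_i, t_{i+1}) = \infty$. Second, for each $i$, at least one of the edges $(t_i, u_i), (u_i, t_{i+1})$ lies in $H$: any $t_i$-to-$t_{i+1}$ path using neither of these two edges must have the form $t_i - u_{i-1} - \cdots - u_{i+1} - t_{i+1}$ and hence has length at least $4$, whereas the $+1$-approximation bound forces $d_H(t_i, t_{i+1}) \le 3$. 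So each $u_i$ has at least one ``terminal edge'' in $H$.

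Third, $H$ can omit at most one backbone edge. The greedy $t_1$-to-$t_k$ path $t_1 u_1 u_2 \cdots u_{k-1} t_k$ has length $k$ in $\Ghard$, and every missing edge $(u_j, u_{j+1})$ forces a detour through the shared terminal $t_{j+1}$, namely $u_j - t_{j+1} - u_{j+1}$, which adds exactly $1$ to the path's length; since $d_H(t_1, t_k) \le k + 1$, at most one backbone edge may be absent. Combining the three observations: every interior $u_i$ (with $1 < i < k-1$) whose two backbone edges are both present has degree at least $2 + 1 = 3$ in $H$, and is therefore a branching vertex. Excluding at most four exceptional indices (the two boundary vertices $u_1, u_{k-1}$, plus the at most two vertices $u_j, u_{j+1}$ incident to the missing backbone edge), at least $k - 5$ of the $u_i$'s are branching vertices, giving $\nbv(H) = \Omega(k)$.

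The main thing to verify carefully is the enumeration of short alternative routes when an edge or vertex is deleted; but because every $t_i$ has degree at most $2$ in $\Ghard$ and each $u_i$ has only four neighbours, this is a brief case analysis. The rest is a degree count.
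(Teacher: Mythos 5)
Your proof is correct and takes essentially the same approach as the paper: both build a caterpillar-like interval graph whose non-terminals form a path with terminals hanging off it, show that the $+1$ approximation forces almost all spine edges to be present, and then degree-count the spine vertices. The paper's $\Ghard$ uses $2k-2$ non-terminals (consecutive terminals at distance $3$, sharing no neighbor) whereas yours uses $k-1$ (consecutive terminals at distance $2$, sharing a neighbor), but the bookkeeping and conclusion are the same.
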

	\begin{proof}
		Let us describe the construction of $\Ghard(V,E,\mathsf{left},\mathsf{right})$. Fix $\epsilon=0.01$. $\Ghard$ has $2k-2$ non-terminal vertices $\{v_1,v_2,\ldots,v_{2k-2}\}$ and $k$ terminals vertices indexed by the set $[k]$.
		\begin{align*}
		&\mathsf{left}(v_i)=i-\epsilon, &&\mathsf{right}(v_i)=i+\epsilon+1 &\forall\, 1\leq i\leq 2k-2.\\
		&\mathsf{left}(j)=2j-1.5, &&\mathsf{right}(j)=2j-0.5 &\forall\, j\in[k].
		\end{align*}
		Suppose $H_1$ approximates distances in $\Ghard$ up to an additive distortion of $+1$. For odd $i$, $(v_i,v_{i+1})\in V(H_1)$ (otherwise the terminals become disconnected in $H_1$). For even $i$, define the set $S$ as follows (let $j=i/2$).
		$$S=\{(v_{2j},v_{2j+1}) : 1\leq j\leq k-2, (v_{2j},v_{2j+1})\in E(H_1)\}$$
		Thus, $|S|\leq k-2$. Consider any $(v_{2j},v_{2j+1})\in S$. Then either $(v_{2j},j+1)\in E(H_1)$ or $(j+1,v_{2j+1})\in E(H_1)$ (otherwise $j+1$ becomes isolated in $H_1$). Since $(v_{2j-1},v_{2j})\in E(H_1)$ and $(v_{2j+1},v_{2j+2})\in E(H_1)$, either $v_{2j}$ or $v_{2j+1}$ must be a branching vertex in $H_1$. Thus, for every edge in $S$, at least one of its end points must be a branching vertex. Using~\autoref{claimsizeofS}, $|S|\geq k-3$. Combined with the fact that all the edges of $S$ are vertex disjoint, this means that $H_1$ has at least $k-3$ branching vertices, completing the proof.
	\end{proof}
	
	\begin{Claim} \label{claimsizeofS}
		Let $S$ be as defined in the proof of~\autoref{lemma+1lb}. Then $|S|\geq k-3$.
	\end{Claim}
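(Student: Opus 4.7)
The idea is to use the single pairwise distance constraint $d_{H_1}(1,k)\le d_G(1,k)+1$ between the two extreme terminals to bound the number $t:=(k-2)-|S|$ of indices $j\in\{1,\dots,k-2\}$ for which $(v_{2j},v_{2j+1})\notin E(H_1)$.

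First I would read off from the interval representation of $\Ghard$ that for each such $j$, the only vertex adjacent to both $v_{2j}$ and $v_{2j+1}$ is the terminal $j+1$; in particular they share no non-terminal common neighbour, and the only other terminal incident to $v_{2j}$ or $v_{2j+1}$ is $j+1$ itself. Consequently, if $(v_{2j},v_{2j+1})$ is missing from $H_1$, the unique local bypass between its endpoints is the length-$2$ detour $v_{2j}\to(j+1)\to v_{2j+1}$, which costs exactly one extra edge compared to the direct edge.

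To globalise this penalty, I would introduce a level function $\phi:V(H_1)\to\mathbb{Z}$ defined by
\[
\phi(v_i) \;=\; i \;+\; \bigl|\{j : 1\le j\le k-2,\ 2j+1\le i,\ (v_{2j},v_{2j+1})\notin E(H_1)\}\bigr|,
\]
together with $\phi(m)=\phi(v_{2m-1})-1$ for $1\le m\le k-1$ and $\phi(k)=\phi(v_{2k-2})+1$. Using the fact (already established at the start of the proof of~\autoref{lemma+1lb}) that every odd-indexed chain edge $(v_{2m-1},v_{2m})$ lies in $E(H_1)$, a case-check on the remaining edge types shows that $|\phi(u)-\phi(v)|\le 1$ for every $(u,v)\in E(H_1)$. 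Since $\phi(1)=0$ and $\phi(k)=2k-1+t$, this yields $d_{H_1}(1,k)\ge 2k-1+t$; combining with the $+1$-approximation hypothesis $d_{H_1}(1,k)\le d_G(1,k)+1=2k$ forces $t\le 1$, i.e., $|S|\ge k-3$.

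The main obstacle will be the edge-by-edge Lipschitz verification of $\phi$, particularly for the terminal-to-non-terminal edges $(m,v_{2m-2})$: the $\phi$-jump along the (possibly missing) even-indexed edge $(v_{2m-2},v_{2m-1})$ equals $1$ or $2$ depending on whether it lies in $H_1$, but anchoring $\phi(m)$ exactly one unit below $\phi(v_{2m-1})$ keeps both $(m,v_{2m-2})$ and $(m,v_{2m-1})$ within unit Lipschitz distance in either case. Once this verification is done, the claim follows from the arithmetic above.
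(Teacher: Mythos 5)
Your proposal is correct and reaches the same bound, but by a more elaborate route than the paper. The paper's proof is a short argument by contradiction: if $|S|<k-3$, there are two distinct indices $j_1<j_2$ with $(v_{2j_1},v_{2j_1+1})\notin E(H_1)$ and $(v_{2j_2},v_{2j_2+1})\notin E(H_1)$; each missing edge forces its endpoints to be at distance $2$ in $H_1$ (via the unique common neighbour, the terminal $j+1$), and since the unique shortest $1$--$k$ path in $\Ghard$ must traverse all the $v_i$ in order, these two detours accumulate, giving $d_{H_1}(1,k)\geq d_{\Ghard}(1,k)+2$ --- a contradiction. Your potential-function argument formalizes exactly this additivity: the Lipschitz property $|\phi(u)-\phi(v)|\leq 1$ along every edge of $H_1$, together with $\phi(1)=0$ and $\phi(k)=2k-1+t$, directly certifies $d_{H_1}(1,k)\geq 2k-1+t$, from which $t\leq 1$ and $|S|\geq k-3$. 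The trade-off is clear: your approach spends more effort (defining $\phi$ carefully at terminals, case-checking the Lipschitz inequality on the even-indexed and terminal edges) in exchange for a more rigorous and direct derivation; in particular, you make explicit why the missing-edge penalties add, a point the paper asserts but does not elaborate. Both proofs hinge on the same structural facts about $\Ghard$ (the only common neighbour of $v_{2j}$ and $v_{2j+1}$ is terminal $j+1$, and the odd-indexed chain edges are unavoidable), so this is a legitimate alternative presentation rather than a fundamentally different idea.
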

	\begin{proof}
		Suppose $|S|<k-3$. Then, there exist terminals $j_1,j_2$ such that $1\leq j_1<j_2\leq k-2$ and $(v_{2j_1},v_{2j_1+1})\notin E(H_1), (v_{2j_2},v_{2j_2+1})\notin E(H_1)$. This means that $d_{H_1}(v_{2j_1},v_{2j_1+1})=2=1+d_{\Ghard}(v_{2j_1},v_{2j_1+1})$ and $d_{H_1}(v_{2j_2},v_{2j_2+1})=2=1+d_{\Ghard}(v_{2j_2},v_{2j_2+1})$. Thus, $d_{H_1}(1,k)\geq 2+d_{\Ghard}(1,k)$, and $H_1$ does not approximate distances in $\Ghard$ up to an additive distortion of $+1$, which is a contradiction. This completes the proof of the claim.
	\end{proof}
	Note that our proof of the $O(k)$ upper bound naturally translates into an algorithm. In other words, given an interval graph $G$ on $n$ vertices, it produces a distance-approximating subgraph $H$ of $G$ in running time polynomial in $n$. We now move on to distance-preserving subgraphs of interval graphs.
	
	%%%%%%%%%%%%%%%%%%%%%%%%%%%%%%%%%%%%%%%%%%%%%%%%%%%%%%%%%%%%%%%%%%%%%%%%%
	
	%%%%%%%%%%% U  P  P  E  R     B  O  U  N  D %%%%%%%%%%%%%%%%%%%%%%%%%%%%%
	
	%%%%%%%%%%%%%%%%%%%%%%%%%%%%%%%%%%%%%%%%%%%%%%%%%%%%%%%%%%%%%%%%%%%%%%%%%
	
	\section{Proof of the Upper Bound}
	
	In this section, we show that any interval graph $G$ with $k$ terminals has a distance-preserving subgraph with $O(k\log k)$ branching vertices, which is simply~\autoref{thm:main} (a), restated here for completeness.
	\begin{Theorem} \label{thm:ub}
		If $\cI$ is the family of all interval graphs, then $\bv_\cI(k)=O(k\log k).$
	\end{Theorem}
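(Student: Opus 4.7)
The plan is to prove the bound by strong induction on $k$, establishing the recurrence $T(k) \leq 2T(\lceil k/2\rceil) + O(k)$, which resolves to $T(k) = O(k \log k)$. First I would apply the standard normalization, replacing the interval representation of $G$ with an equivalent one in which every terminal is a point (length-zero interval) and every non-terminal is a unit-length interval; this does not change the graph or the terminal distances. Order the terminals $t_1 \prec t_2 \prec \cdots \prec t_k$ by position, set $m = \lceil k/2 \rceil$, and cut the real line at a point strictly between $\righty(t_m)$ and $\lefty(t_{m+1})$. Let $G_L$ be the sub-interval-graph spanned by intervals meeting the left half and $G_R$ the analogous right sub-interval-graph. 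By the monotonicity of shortest paths (Fact~\ref{bridgebasicfact}), any shortest path between two left terminals stays inside $G_L$, and symmetrically on the right; applying the inductive hypothesis yields distance-preserving subgraphs $H_L \subseteq G_L$ and $H_R \subseteq G_R$, each with $O(m \log m)$ branching vertices.

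The crux is the gluing step: we must add structure to $H_L \cup H_R$ that preserves every cross-distance $d_G(t_i, t_j)$ with $i \le m < j$ while introducing only $O(k)$ additional branching vertices. I would form a bridge set $B$ by taking, for each left terminal $t_i$, the greedy shortest path $\greedy(t_i, t_k)$ and for each right terminal $t_j$, the greedy shortest path $\greedy(t_1, t_j)$, and set $H = H_L \cup H_R \cup B$ (possibly augmented by $O(k)$ ``hop'' edges described below). The correctness claim, which is the hopping observation mentioned in the sketch, is that $B$ (together with the hop edges) contains a $t_i$-to-$t_j$ path of length exactly $d_G(t_i, t_j)$ for every cross-pair. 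The idea is that after $d_G(t_i, t_j) - 1$ greedy steps, $\greedy(t_i, t_k)$ reaches an interval $x$ whose right endpoint has crossed $\lefty(t_j)$, so $x$ is adjacent to $t_j$; symmetrically the last non-terminal $y$ on $\greedy(t_1, t_j)$ is adjacent to $t_j$. In the interval graph, two intervals both containing the point $t_j$ must themselves intersect, so $x$ and $y$ are equal or adjacent, which lets us splice the two greedy paths through the middle region and recover a path of the required length.

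The critical branching-vertex bound is that $B$ contributes only $O(k)$ branching vertices. The structural reason is that the greedy rule at any vertex is deterministic (pick the adjacent interval reaching farthest right), so if two greedy paths destined for a common terminus ever share an intermediate vertex, they coincide from that vertex onward. Consequently, $\bigcup_{i \le m} \greedy(t_i, t_k)$, with edges oriented toward $t_k$, forms a tree rooted at $t_k$ with at most $m$ leaves, hence at most $m - 1$ branching vertices. Symmetrically, each $\greedy(t_1, t_j)$ is a prefix of the greedy path issuing from $t_1$ with $t_j$ hung off as a leaf, so $\bigcup_{j > m} \greedy(t_1, t_j)$ contributes at most $k - m$ branching vertices along the shared spine. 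Vertices in $H_L \cup H_R$ that acquire an extra edge from $B$ (and so become branching in the union) contribute $O(k)$ more; this is bounded using Fact~\ref{bridgeintervalfact}, which limits the number of $B$-neighbours any single vertex can acquire.

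I expect the main technical obstacle to be making the hopping step precise --- specifically, showing that the spliced path has length \emph{exactly} $d_G(t_i, t_j)$ rather than $d_G(t_i, t_j)+1$, and that the splicing can be realised using only $O(k)$ additional edges rather than one per cross-pair. This is where the interval-graph structure is indispensable, and the combination of Facts~\ref{bridgepointfact},~\ref{bridgeintervalfact}, and~\ref{bridgedomfact} with the right-reach property of the greedy algorithm is the key: each of these facts controls how different greedy paths can intersect in the middle region, and together they let us charge the hop edges to the tree structure already accounted for. Once the hopping lemma is established, substituting the $O(k)$ gluing cost into $T(k) \le 2T(\lceil k/2 \rceil) + O(k)$ gives $T(k) = O(k\log k)$, completing the induction.
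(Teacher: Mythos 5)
Your proposal captures the right \emph{shape} of the argument --- normalize to point terminals and unit non-terminals, split at the median terminal, recurse, glue --- and you correctly identify that the whole difficulty is the gluing step. But you then explicitly leave that step open (``the main technical obstacle\ldots I expect\ldots to be making the hopping step precise\ldots and that the splicing can be realised using only $O(k)$ additional edges''), which is precisely where all the mathematical content of the theorem lives. A proof that says ``the key lemma can presumably be established using Facts~\ref{bridgepointfact},~\ref{bridgeintervalfact}, and~\ref{bridgedomfact}'' is not yet a proof: none of those facts, singly or together, yield the statement you need.

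The missing ingredient is the paper's~\autoref{lm:link}: if $v\prec w$, $d(v,w)=1$, and the greedy path from $v$ is overtaken by the greedy path from $w$ (i.e.\ $\righty(v_\ell)<\righty(w_{\ell'})$), then the $v$-path uses at most as many steps, $\ell\le\ell'$. This is what makes the hop cost-free, and it is not an instance of any of the stated facts. Using it, the paper picks, for each right terminal $t_j$, the \emph{earliest} (w.r.t.\ $\prec$) interval $v$ among the left greedy paths that contains the point $t_j$, and adds the single edge $(v,t_j)$; the induced subgraph $H_A$ on the $O(\ell)$ intervals crossing the cut strip $[x,x+1]$ then lets any left terminal's greedy path jump onto the $v$-carrying path without losing a step (Claim~\ref{joinnconn}). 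Your proposal has no analogue of $H_A$ (the middle induced subgraph that makes the hop available) nor of the ``earliest containing interval'' rule (the choice that makes~\autoref{lm:link} applicable), so you cannot yet charge the cross-pair preservation to $O(k)$ edges.

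One further bookkeeping issue: you re-include $\bigcup_{i\le m}\greedy(t_i,t_k)$ inside each recursive level, with $t_k$ the \emph{global} rightmost terminal. These are long paths crossing both halves, so the recursion is not cleanly contained, and naive summation over levels overcounts. The paper sidesteps this by building $H_0=\bigcup_{i<k}\greedy(t_i,t_k)$ \emph{once} globally (a shortest-path tree with $O(k)$ branching vertices), defining $f(a,b)$ to count only the non-terminals in $H^*$ needed \emph{in addition} to $H_0$, and running the recurrence $f(2\ell)\le 2f(\ell)+O(\ell)$ on $f$ alone. You would need to replicate this separation --- or argue very carefully that the global bridge paths, accumulated across levels, still contribute only $O(k)$ branching vertices --- before the recurrence $T(k)\le 2T(\lceil k/2\rceil)+O(k)$ is legitimate.
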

	
	The following notation will be used in our proof. Given real numbers $a,b \in \mathbb{R}$ such that $a\leq b$, let $G[a,b]$ be the induced subgraph on those vertices $v$ of $G$ such that $[\lefty(v),\righty(v)] \cap [a,b] \neq \emptyset$.  Similarly, let $G[a,b)$ be the induced subgraph on those vertices $v$ of $G$ such that $[\lefty(v),\righty(v)] \cap [a,b) \neq \emptyset$.
	
	We first prove the upper bound for a special case of interval graphs, and later show that the same upper bound holds (up to constants) for all interval graphs.
	
	\subsection{Unit Interval Graphs with Point Terminals}
	
	Let $G$ be an interval graph on $k$ terminals such that all terminals in $G$ are zero-length intervals (or point intervals) and all non-terminals are unit intervals. Our goal is to obtain a distance-preserving subgraph $H$ of $G$ with $O(k \log k)$ branching vertices. Note that the $H$ that we obtain is not necessarily an interval graph. This is because $H$ need not be an induced subgraph of $G$.
	
	Consider the greedy path $\greedy(t_i,t_k)$ ($i < k$), where $t_k$ is
	the rightmost terminal. Our distance-preserving subgraph includes
	greedy paths from $t_i$ to $t_k$ for all $1\leq i<k$.  Let
	\begin{equation}
	H_0 = \bigcup_{1\leq i<k} \greedy(i,k).
	\end{equation}
	%Note that if two such paths $\greedy(i,k)$ and $\greedy(j,k)$ meet,
	%then they stay together. 
	Now, $H_0$ already provides for shortest paths from each terminal
	$t_i$ to $t_k$. In fact, it can be viewed as a shortest path tree with
	root $t_k$, but constructed backwards.  Thus, the total number of
	branching vertices in $H_0$ is $O(k)$.  We still need to arrange for
	shortest paths between other pairs of terminals $(t_i,t_j)$.  The path
	$\greedy(t_i,t_j)$ (for $i< j < k$) is either entirely contained in $\greedy(t_i,t_k)$, or it follows $\greedy(t_i,t_k)$ until it reaches a neighbour of $t_j$ and then \emph{branches off} to connect to $t_j$. We
	can consider including all paths of the form $\greedy(t_i,t_j)$ in
	$H_0$. That is, we need to link each such $t_j$ to vertices from
	$H_0$ so that each path $\greedy(t_i,t_j)$ becomes available. If this
	is done without additional care, we might end up introducing
	$\Omega(k)$ additional branching vertices per terminal, and
	$\Omega(k^2)$ branching vertices in all, far more than we claimed.
	
	The crucial idea for overcoming this difficulty is contained in the
	following lemma.
	\begin{Lemma} \label{lm:link}
		Suppose $v \prec w$ and $d(v,w)=1$. Let $(v,v_1,v_2,\ldots,v_\ell)$
		and $(w,w_1,w_2,\ldots,w_{\ell'})$ be greedy shortest paths starting
		from $v$ and $w$ respectively. Suppose $\righty(v_{\ell}) <
		\righty(w_{\ell'})$. Then, $\ell \leq \ell'$.
	\end{Lemma}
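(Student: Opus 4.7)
The plan is to reduce $\ell \leq \ell'$ to a greedy-optimality property of the greedy shortest paths combined with the fact that $v$ and $w$ are adjacent. Concretely, I would first establish the following property $(\star)$: for every $0 \leq i \leq \ell$ and every vertex $y$ with $d_G(v,y) \leq i$, one has $\righty(y) \leq \righty(v_i)$; in other words, $v_i$ reaches as far to the right as any vertex at distance at most $i$ from $v$. Given $(\star)$, the lemma follows in three lines. Since $v$ and $w$ are adjacent, the triangle inequality gives $d_G(v, w_{\ell'}) \leq 1 + d_G(w, w_{\ell'}) = \ell' + 1$. If we assume for contradiction that $\ell \geq \ell' + 1$, then $v_{\ell'+1}$ is a well-defined vertex on the greedy path and $(\star)$ applied with $i = \ell'+1$ and $y = w_{\ell'}$ gives $\righty(w_{\ell'}) \leq \righty(v_{\ell'+1})$. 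But \autoref{bridgebasicfact}, applied to the greedy path $(v, v_1, \ldots, v_\ell)$ (whose endpoints satisfy $v \prec v_\ell$ since $\greedy$ is defined only for rightward targets), forces $\righty(v_0) < \righty(v_1) < \cdots < \righty(v_\ell)$, and hence $\righty(v_{\ell'+1}) \leq \righty(v_\ell)$. Chaining these inequalities contradicts the hypothesis $\righty(v_\ell) < \righty(w_{\ell'})$, so $\ell \leq \ell'$.

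To prove $(\star)$ I would induct on $i$. The base case $i = 0$ is immediate. For the inductive step, fix $y$ with $d_G(v,y) \leq i$ and take a shortest path $v = u_0, u_1, \ldots, u_j = y$ with $j \leq i$. If $\righty(y) \leq \righty(v_{i-1})$, monotonicity finishes the job. Otherwise $\righty(y) > \righty(v_{i-1}) \geq \righty(v)$, so $v \prec y$; \autoref{bridgebasicfact} applied to $u_0, \ldots, u_j$ then yields $u_{j-1} \prec y$, whence $\lefty(y) \leq \righty(u_{j-1})$. The inductive hypothesis applied to $u_{j-1}$ (which is at distance at most $i-1$ from $v$) gives $\righty(u_{j-1}) \leq \righty(v_{i-1})$, so $\lefty(y) \leq \righty(v_{i-1})$. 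Combined with $\righty(y) > \righty(v_{i-1}) \geq \lefty(v_{i-1})$, this shows that $y$'s interval intersects $v_{i-1}$'s; hence $y$ is a neighbor of $v_{i-1}$, and since $v_i$ is by construction the neighbor of $v_{i-1}$ with largest $\righty$, we conclude $\righty(y) \leq \righty(v_i)$.

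The substantive step is the induction for $(\star)$: one must show that any $y$ reachable from $v$ in $i$ steps and lying strictly to the right of $v_{i-1}$ is in fact a neighbor of $v_{i-1}$, before the greedy selection rule can be invoked to bound $\righty(y)$. \autoref{bridgebasicfact} is precisely what transports the bound $\righty(u_{j-1}) \leq \righty(v_{i-1})$ from an arbitrary shortest path to the greedy path. Once $(\star)$ is in place, the adjacency of $v$ and $w$ and the monotonicity of $\righty$ along the greedy path do the rest.
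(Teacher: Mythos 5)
Your proposal is correct and follows essentially the same route as the paper: the paper's one-sentence justification---that the greedy strategy from $v$ reaches at least as far in $j+1$ steps as the greedy strategy from $w$ does in $j$ steps---is exactly the instance of your property $(\star)$ with $y=w_j$ and $i=j+1$ (the triangle inequality giving $d_G(v,w_j)\leq j+1$), and you supply the inductive proof of this greedy-optimality principle that the paper leaves implicit. One minor mis-citation worth flagging: \autoref{bridgebasicfact} asserts $v_i\prec v_{i+1}$ only for $1\leq i<r-1$, so it orders the internal vertices of a shortest path but does not yield $u_{j-1}\prec u_j$, nor does it by itself give the final link $\righty(v_{\ell-1})<\righty(v_\ell)$ in your chain; however, the inequality you actually use, $\lefty(y)\leq\righty(u_{j-1})$, is immediate from the adjacency of the consecutive path vertices $u_{j-1}$ and $u_j=y$ (their intervals intersect), and strict monotonicity of $\righty$ along a greedy path follows directly from the greedy selection rule, so both invocations are harmless and the argument goes through.
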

	\begin{proof}
		Since $d(v,w)=1$, the greedy strategy reaches at least as far in $j+1$
		steps from $v$ as it does in $j$ steps from $w$. Suppose for
		contradiction that $\ell > \ell'$ (that is, $\ell \geq \ell'+1$).
		Then, we have $\righty(w_{\ell'}) \leq \righty(v_{\ell'+1}) \leq
		\righty(v_{\ell})$, contradicting our assumption that
		$\righty(v_{\ell}) < \righty(w_{\ell'})$.
	\end{proof}
	The above lemma is crucial for the construction of our subgraph
	$H$. %We first informally sketch the idea. Suppose we have two
	%terminals $t_i$ and $t_j$. Suppose $H$ contains the shortest path from
	%$t_i$ to some vertex $v_j$ on the greedy path
	%$\greedy(t_j,t_k)$. Then, for all terminals $t_r \succ v_j$, if $H$
	%includes the path $\greedy(t_j,t_r)$, then $H$ preserves the distance
	%between $t_i$ and $t_r$ as well. 
	For example, suppose $t_i$ and $t_j$ both need to reach $t_r$ via a shortest path. Suppose $(w_i,t_r)$ is the last edge of
	$\greedy(t_i,t_r)$ and $(w_j,t_r)$ is the last edge of $\greedy(t_j,t_r)$. We claim that it is sufficient to include \textbf{only} one of these edges in $H$. If $\righty(w_j) < \righty(w_i)$, then it is enough to include the edge $(w_j,t_r)$ in $H$; as long as $t_i$ has a shortest path to $w_j$, this edge serves for shortest paths \emph{to} $t_r$ \emph{from} both $t_i$ and $t_j$.  In the construction below, we add links to the greedy paths of $H_0$ so that we need to provide only one such edge per terminal. This idea forms the basis of the divide-and-conquer strategy which we present below.
	
	Suppose $G$ has $2\ell$ terminals. We find a point $x$ so that both
	$\Gleft = G[-\infty,x]$ and $\Gright = G[x, \infty]$ have $\ell$
	terminals. By induction, we find distance-preserving subgraphs
	$\Hleft$ and $\Hright$ of $\Gleft$ and $\Gright$ with at most $f(\ell)$
	branching vertices each. The union of $\Hleft$ and $\Hright$ has just
	$2f(\ell)$ branching vertices, but it does not yet guarantee shortest
	paths from terminals in $\Hleft$ to terminals in $\Hright$. Using~\autoref{lm:link} and the discussion above, we connect each terminal $t_j$ in $\Hright$ to \textbf{only} one of the greedy shortest paths
	of terminals from $\Hleft$, and ensure that shortest paths to $t_j$ are
	preserved from all terminals $t_i$ in $\Hleft$. This creates
	$O(\ell)$ additional branching vertices and give us a recurrence of the
	form
	\[ f(2\ell) \leq 2 f(\ell) + O(\ell),\]
	and the desired upper bound of $O(k \log k)$.  Unfortunately, there
	are technical difficulties in implementing the above strategy as
	stated. It is therefore helpful to augment $H_0$ by adding all
	greedy paths $\greedy(t_i,t_j)$, where $d(i,j) \leq 4$. As a result,
	for each terminal $t_i$, the first three vertices on
	$\greedy(t_i,t_k)$ might become branching vertices. In all, this adds
	a \emph{one-time cost} of $O(k)$ branching vertices to our
	subgraph. We now present the argument formally.

	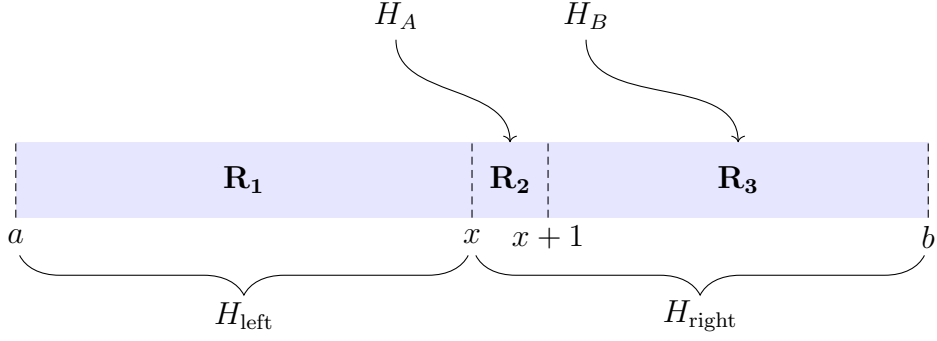
\begin{figure}
		\begin{center}
			\begin{tikzpicture}

			\fill[fill=blue!10](0,0) rectangle +(12,1);
			
			\draw[densely dashed](0,0) -- (0,1);
			\draw[densely dashed](6,0) -- (6,1);
			\draw[densely dashed](7,0) -- (7,1);
			\draw[densely dashed](12,0) -- (12,1);
			
			%\node at (7.5-\o,\y-0.8) (y) {$\bf 1$};
			%\node at (12-\o,\y-0.6) (y) {The set of intervals $\cJ$};
			%\node at (2.85,4.35) {The 2d array $\mathbf{A}$};
			\node at (0,-0.25) {$a$};
			\node at (6,-0.25) {$x$};
			\node at (7,-0.25) {$x+1$};
			\node at (12,-0.25) {$b$};
			
			\node at (3,0.5) {$\bf R_1$};
			\node at (6.5,0.5) {$\bf R_2$};
			\node at (9.5,0.5) {$\bf R_3$};
			
			\draw [decorate,decoration={brace,amplitude=15pt}] (5.95,-0.5) -- (0.05,-0.5) node [black,midway,yshift=-0.8cm] {$\Hleft$};
			\draw [decorate,decoration={brace,amplitude=15pt}] (11.95,-0.5) -- (6.05,-0.5) node [black,midway,yshift=-0.8cm] {$\Hright$};
			
			\node[anchor=north] at (5,3) (ha) {$H_A$};
			\draw (ha) edge[out=-90,in=90,->] (6.5,1);
			\node[anchor=north] at (7.5,3) (hb) {$H_B$};
			\draw (hb) edge[out=-90,in=90,->] (9.5,1);

			\end{tikzpicture}
			\caption{The interval graph $G[a,b]$ has $2\ell$ terminals and is ``cut" into three regions, $R_1, R_2$ and $R_3$. By induction, $\Hleft$ preserves distances when both terminals lie in $R_1$, and $\Hright$ preserves distances when both terminals lie in $R_2\cup R_3$. In addition, $H_A$ preserves distances when one terminal lies in $R_1$ and the other in $R_2$ (by introducing at most $O(\ell)$ additional branching vertices in $R_2$), and $H_B$ preserves distances when one terminal lies in $R_1$ and the other in $R_3$ (by introducing at most $O(\ell)$ additional branching vertices in $R_3$).}
			\label{fig:oned}
		\end{center}
	\end{figure}

	For each $(a,b)$, let $f(a,b)$ be the minimum number of non-terminals in a
	subgraph $H^*$ of $G[a,b]$ such that $H_0 \cup H^*$ preserves all
	inter-terminal distances in $G[a,b]$; let
	\[ f(\ell) = \max_{(a,b)} f(a,b),\]
	where $(a,b)$ ranges over all pairs such that $G[a,b]$ has at most
	$\ell$ terminals. The following lemma is the basis of our induction.
	\begin{Lemma}\label{lm:induction}
		(i) $f(1) = 0$; (ii) $f(2\ell) \leq 2 f(\ell) + O(\ell)$.
	\end{Lemma}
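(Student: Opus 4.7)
The plan is as follows. Part (i) is immediate: a one-terminal induced subgraph has no inter-terminal pairs whose distance needs to be preserved, so we may take $H^{*}=\emptyset$.

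For part (ii) I would carry out the divide-and-conquer strategy outlined just before the lemma and depicted in \autoref{fig:oned}. First, choose a cut point $x$ so that both $\Gleft := G[a,x]$ and $\Gright := G[x,b]$ contain exactly $\ell$ of the $2\ell$ terminals of $G[a,b]$; such an $x$ exists by placing it between the $\ell$-th and $(\ell+1)$-th terminal in left-to-right order. Set $R_1=[a,x]$, $R_2=[x,x+1]$, $R_3=(x+1,b]$, so that the $\ell$ terminals of $\Gright$ are partitioned across $R_2 \cup R_3$. Apply the inductive hypothesis to $\Gleft$ and $\Gright$ separately to obtain subgraphs $H^{*}_{\text{left}} \subseteq \Gleft$ and $H^{*}_{\text{right}} \subseteq \Gright$, each with at most $f(\ell)$ non-terminals, such that $H_0 \cup H^{*}_{\text{left}}$ preserves every inter-terminal distance within $\Gleft$ and $H_0 \cup H^{*}_{\text{right}}$ does the same within $\Gright$.

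It then remains to preserve distances for pairs $(t_i,t_j)$ with $t_i \in R_1$ and $t_j \in R_2 \cup R_3$. I would handle these via two further subgraphs $H_A$ (for $t_j \in R_2$) and $H_B$ (for $t_j \in R_3$), each of size $O(\ell)$. The key device is \autoref{lm:link}: for each such $t_j$, the greedy shortest paths $\greedy(t_i,t_k)$ emanating from different sources $t_i \in R_1$ advance to the right in lock-step, so it suffices to attach $t_j$ to $H_0$ by a single well-chosen edge $(w_j,t_j)$, where $w_j$ is a neighbour of $t_j$ already sitting on a greedy path of $H_0$. Adding only this one edge per terminal $t_j$ contributes $O(\ell)$ non-terminals across $H_A \cup H_B$, giving the recurrence $f(2\ell)\leq 2f(\ell)+O(\ell)$.

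The main obstacle will be verifying rigorously that a single link vertex $w_j$ really serves every $t_i \in R_1$ simultaneously --- i.e., that from each such $t_i$ there is a path of length exactly $d_G(t_i,t_j)$ to $t_j$ inside $H_0 \cup H^{*}_{\text{left}} \cup \{(w_j,t_j)\}$. The intended choice is the neighbour of $t_j$ (on a greedy path in $H_0$) with the smallest right endpoint, so that the greedy path from any other $t_i$ reaches, in $d_G(t_i,t_j)-1$ steps, a vertex whose right endpoint is at least $\righty(w_j)$; \autoref{lm:link} applied to pairs of greedy paths from nearby source terminals then guarantees that one can ``hop'' onto $w_j$ without overshooting. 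A minor technicality is that terminals $t_j$ within distance four of $R_1$ may not allow the greedy paths to align cleanly; these short-range pairs are absorbed by the one-time preprocessing step that augments $H_0$ with all greedy paths $\greedy(t_i,t_j)$ having $d_G(t_i,t_j)\le 4$, a cost paid once outside the recursion and already counted in the $O(k)$ additive term.
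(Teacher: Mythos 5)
Your overall plan matches the paper's: cut at a point $x$, recurse on both halves, and stitch the halves back together with two auxiliary pieces $H_A$ (for $R_2$) and $H_B$ (for $R_3$), invoking \autoref{lm:link} and charging the short-range pairs to the one-time ``$d \leq 4$'' preprocessing. The choice in $H_B$ of the neighbour with smallest right endpoint (the earliest interval of $\Pleft$ containing $t_j$) is exactly right, and the resulting recurrence is the right shape.

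There is, however, a genuine gap in how you construct $H_A$, and it is precisely the hole your last paragraph worries about. You describe $H_A$ the same way as $H_B$ --- one well-chosen edge $(w_j,t_j)$ per terminal --- but a single edge per terminal cannot make the ``hop'' happen. The link vertex $w_j$ that you attach to $t_j$ sits on $\greedy(t_{j'},t_k)$ for some particular $t_{j'}$; if the source terminal $t_i$ is a different terminal, nothing in $H_0$ or in your one-edge-per-terminal $H_A$ lets the path that starts along $\greedy(t_i,t_k)$ jump sideways onto $\greedy(t_{j'},t_k)$ to reach $w_j$. Choosing $w_j$ with minimal right endpoint does not help: \autoref{lm:link} gives a distance inequality along two greedy paths but supplies no edge joining them. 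What the paper does instead is make $H_A$ the \emph{induced subgraph} on $V_A \cup \{\text{terminals in }G[x,x+1]\}$, where $V_A$ is the set of \emph{all} non-terminal vertices of $\Pleft$ intersecting the strip $[x,x+1]$. Each greedy path contributes at most four unit intervals to that strip, so $|V_A| \leq 4\ell$ and $H_A$ is still $O(\ell)$ in size, but taking the induced subgraph provides the crossover edges between distinct greedy paths inside the strip. It is exactly one of these edges, $(w_p,v_{q+1})$ in the proof of \autoref{joinnconn}, that lets $\greedy(t_i,\cdot)$ switch to $\greedy(t_{j'},\cdot)$ just before the strip, after which \autoref{lm:link} certifies that the detour costs nothing. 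So the fix is not a cleverer choice of a single $w_j$ but the richer, induced-subgraph structure of $H_A$; your $H_B$ can then remain one edge per terminal, as you proposed.
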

	\begin{proof}
		Part (i) is trivial.  For part (ii), fix a pair $(a,b)$ such that
		$G[a,b]$ has at most $2\ell$ terminals.  If $b-a \leq 1$, $H_0$
		already preserves distances between every two terminals in
		$G[a,b]$. So, we may take $H^*$ to be empty. Now assume that $b-a>
		1$.  Pick $x \in [a,b]$ as large as possible such that (i) $b-x\geq
		1$, and (ii) $G[x,b]$ has at least $\ell$ terminals.
		
		Let $\Gleft = G[a,x)$ and $\Gright = G[x,b]$.  Since $\Gright$ has at
		least $\ell$ terminals, $\Gleft$ has at most $\ell$ terminals. So,
		we obtain (by induction) a subgraph $\Hleft$ of $G[a,b]$ with at
		most $f(\ell)$ non-terminals, such that $H_0 \cup \Hleft$ preserves all
		inter-terminal distances in $\Gleft$. If $b-x>1$, then $\Gright$ has
		exactly $\ell$ terminals, and we obtain by induction a subgraph
		$\Hright$ of $G[a,b]$ with at most $f(\ell)$ non-terminals such that $H_0 \cup
		\Hright$ preserves all inter-terminal distances in $G[x,b]$. If $b-x
		=1$, then we may take $\Hright$ to be empty (for $H_0$ already
		preserves inter-terminal distances in $G[x,b]$).
		
		Our final subgraph $H^*$ shall be of the form $\Hleft \cup \Hright \cup H_A \cup H_B$, where $H_A$ and $H_B$ are defined as follows. (Refer to~\autoref{fig:oned}.) Let us first define $H_A$. Let $\Pleft$ be the set of greedy paths from the terminals in $\Hleft$ to the terminal $t_k$.  Let $V_A$ be the set of all non-terminal intervals of $\Pleft$ that intersect with the interval $[x,x+1]$.  It is easy to see that any path in $\Pleft$ contributes at most $4$ non-terminals to $V_A$. So, $|V_A| \leq 4\ell$. Let $H_A$ be the subgraph of $G[a,b]$ induced by $V_A$ and the terminals in $G[x,x+1]$.
		
		Note that $H_0 \cup \Hleft \cup \Hright \cup H_A$ preserves all inter-terminal distances in $G[a,x+1]$ as well as all inter-terminal distances in $G[x+1,b]$. In fact, it does more. For each terminal $t_i$ in $G[a,x)$, let $v_i$ be the last vertex on the greedy path $\greedy(t_i,t_k)$ that is in $V_A$. Then, the above graph contains the greedy shortest path from every terminal $t_j$ in $G[a,x]$ to $v_i$.
		
		Now, it only remains to ensure that distances between terminals in $G[a,x)$ and terminals in $G[x+1,b]$ are preserved. Let us now define $H_B$. For each terminal $t_j$ in $G[x+1,b]$, let $v$ be the earliest interval (with respect to $\prec$) of $\Pleft$ that contains $t_j$. Then, we include the edge $(v,t_j)$ in $H_B$. Thus, $H_B$ contains at most one non-terminal per vertex in $G[x+1,b]$; that is, at most $2\ell$ non-terminals in all. This completes the description of $H_A$ and $H_B$. The final subgraph is $H^*=\Hleft \cup \Hright \cup H_A \cup H_B$.
		
		\begin{Claim} \label{joinnconn}
			Let $t_i$ be a terminal in $G[a,x)$ and $t_r$ be a terminal in $G[x,b]$. Then,
			$H = H_0 \cup H^*$ preserves the distance between terminal $t_i$ and $t_r$.
		\end{Claim}
		\noindent \emph{Proof of~\autoref{joinnconn}.} Let $v$ be the vertex that we
		attached to $t_r$ in $H_B$. If $v$ is on $\greedy(t_i,t_k)$, then it
		follows that $\greedy(t_i,t_r)$ is in $H$, and we are done. So we assume that $v$ is not on $\greedy(t_i,t_k)$. Then,
		let $j\neq i$ be such that $v \in \greedy(t_j,t_k)$. Then, we have
		paths
		\begin{align*}
		P_G(t_i,t_r)&=(t_i,w_1,w_2,\ldots,w_{p},w_{p+1},\ldots,w_{\ell'},t_r);\\
		P_H(t_i,t_r)&=(t_i,w_1,w_2,\ldots,w_{p},v_{q+1},\ldots,v_{\ell}=v,t_r),
		\end{align*}
		where $v_{q+1}$ is the last vertex on $\greedy(t_j,t_k)$ in $G[x,x+1]$, and $w_{p}$ is the first vertex on $\greedy(t_i,t_r)$  such that $(w_{p},v_{q+1})\in E(G)$. From the construction of $H_A$, $(w_{p},v_{q+1})\in E(H)$. Following $v_q$, $(v_{q+1},\ldots, v_{\ell}=v,t_r)$ are the subsequent vertices on
		$\greedy(t_j,t_r)$. Note that: (i) $v_{q+1} \prec w_{p+1}$ (otherwise $v$ is on $\greedy(t_i,t_k)$), (ii) $d(v_{q+1},w_{p+1})=1$ (both intervals contain $\righty(w_{p})$), and (iii) $\righty(v_{\ell}) < \righty(w_{\ell'})$ (since $v$ is the earliest interval of $\Pleft$ that contains $t_j$). By~\autoref{lm:link}, $\ell-q-1\leq\ell'-p-1$. Thus, $P_H(t_i,t_r)$ is no longer than $\greedy(t_i,t_r)$.
	\end{proof}
	
	We can now complete the proof of the upper bound. By~\autoref{lm:induction}, there is a subgraph $H^*$ of $G$ such that $H=H_0 \cup H^*$ preserves all inter-terminal distances in $G$, $H_0$ has $O(k)$ branching vertices and $H^*$ has $O(k\log k)$ non-terminals. It follows that $H$ has $O(k \log k)$ branching vertices.
	
	\subsection{Generalizing to all Interval Graphs}
	
	%\marginpar{Move to appendix.}
	%{\bf From here.}
	In this section, we prove the upper bound for general interval graphs. In particular, we show that any interval graph can be reduced to the special case of the previous section. Given an interval graph $G$ on $k$ terminals, we produce a slightly modified interval graph $G'$ on $2k$ terminals, such that $G'$ has point terminals and unit non-terminals.

	\begin{enumerate}
		\item Initially, $G'=G$.
		\item For each terminal $t\in V(G)$, add two vertices $t_{\mathrm{left}}$ and $t_{\mathrm{right}}$ to $V(G')$ such that
		\[
		\lefty(t_{\mathrm{left}})=\righty(t_{\mathrm{left}})=\lefty(t),\\
		\lefty(t_{\mathrm{right}})=\righty(t_{\mathrm{right}})=\righty(t).
		\]
		Thus, $t_{\mathrm{left}}$ and $t_{\mathrm{right}}$ are point intervals, or intervals of length zero. \item Designate $t_{\mathrm{left}}$ and $t_{\mathrm{right}}$ as terminals, and $t$ as a non-terminal. Thus, $G'$ now has $2k$ terminals.
		\item For each non-terminal $x\in V(G')$, if there exists $y\in V(G')$ such that $\mathsf{left}(y)<\mathsf{left}(x)$ and $\mathsf{right}(y)>\mathsf{right}(x)$, then delete $x$ from $G'$.
	\end{enumerate}
	
	Thus, after this pre-processing, the non-terminals of $G'$ have the following property: for any pair of non-terminals $u$ and $v$, $\lefty(u)\leq\lefty(v)\Leftrightarrow\righty(u)\leq\righty(v)$. Gardi~\cite{Gardi} shows that an interval graph that possesses this property can be equivalently represented by a unit interval graph. Thus, the induced subgraph on the non-terminals of $G'$ has a unit interval representation. Using Helly's theorem~\cite[Page 102]{helly} stated below, it is easy to see that the (zero-length) terminals can also be placed in this unit interval representation such that it represents $G'$.
	
	\begin{Theorem} [Helly~\cite{helly}, Page 102]
		Given a finite set of intervals on the real line such that each pair of intervals has a nonempty intersection, there exists a point that lies on all the intervals.
	\end{Theorem}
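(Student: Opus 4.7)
The plan is to prove Helly's theorem on the real line by a direct extremal argument: take the rightmost left endpoint and the leftmost right endpoint, and show this witnesses a common intersection point. Write the given intervals as $I_1,\ldots,I_n$ with $I_i=[a_i,b_i]$, and set
\[
a^* \;=\; \max_{1\leq i\leq n} a_i, \qquad b^* \;=\; \min_{1\leq i\leq n} b_i.
\]
The main claim is that $a^*\leq b^*$, so that $[a^*,b^*]$ is nonempty, and any point in it lies in every $I_i$.

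First I would verify $a^*\leq b^*$. Pick indices $p,q$ such that $a^*=a_p$ and $b^*=b_q$. By hypothesis the pair $I_p,I_q$ intersects, hence $a_p\leq b_q$, which is exactly $a^*\leq b^*$. Second, I would pick any $x\in[a^*,b^*]$; for every $i$ we have $a_i\leq a^*\leq x\leq b^*\leq b_i$, so $x\in I_i$, and $x$ is the desired common point.

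There is essentially no obstacle here: the argument relies only on the total order of $\mathbb{R}$ and pairwise intersection of the two "extremal" intervals. The only minor bookkeeping is ensuring the argument works uniformly across open, closed, or half-open intervals; for the application in this paper all intervals are closed (possibly degenerate points $\lefty(v)=\righty(v)$), so the closed-interval version stated above suffices verbatim. Note also that the proof uses finiteness only to guarantee that the max and min are attained; for compact intervals, a compactness argument extends the result to arbitrary collections, but that is not needed here.
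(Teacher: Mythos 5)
Your proof is correct; it is the standard extremal argument for the one-dimensional Helly theorem. The paper does not actually prove this statement---it cites it as a known result (Helly, page 102) and uses it as a black box---so there is no in-paper proof to compare against. Your argument (take $a^*=\max_i a_i$, $b^*=\min_i b_i$, use the pairwise intersection of the two extremal intervals to get $a^*\leq b^*$, then observe any $x\in[a^*,b^*]$ lies in every interval) is the canonical one and fills in the citation cleanly.
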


	The final $G'$ is an interval graph (not necessarily a unit interval graph) such that all terminal intervals of $G'$ have length $0$, and all non-terminal intervals of $G'$ have length $1$. Thus, $G'$ now possesses the structure required for the construction of a distance-preserving subgraph $H'$ with $O(k\log k)$ branching vertices, as described in the proof of~\autoref{thm:ub} in the previous section.
	
	We now construct $H$ from $H'$, such that $H$ is a distance-preserving subgraph of $G$.
	
	\begin{enumerate}
		\item Initially, $H=H'$.
		\item For each terminal $t\in V(G)$, if $t\notin V(H)$, then add $t$ to $V(H)$.
		\item For each terminal $t\in V(H)$, modify the neighbourhood of $t$ in $H$ as follows.
		\[ \nbhd_H(t)=\nbhd_H(t_{\mathrm{left}})\cup\nbhd_H(t_{\mathrm{right}})\cup\nbhd_H(t). \]
		\item For each terminal $t\in V(H)$, delete $t_{\mathrm{left}}$ and $t_{\mathrm{right}}$ from $H$.
		\item For terminal pairs $t_1$ and $t_2$ such that $d_G(t_1,t_2)=1$, if $(t_1,t_2)\notin E(H)$, then add $(t_1,t_2)$ to $E(H)$.
	\end{enumerate}
	
	It is easy to see that the number of branching vertices in $H$ at most $O(k)$ more than the number of branching vertices in $H'$. Also, for terminal pairs $u$ and $v$ such that $d_G(u,v)>1$ (assume $u\prec v$), a shortest path from $u_{\mathrm{right}}$ to $v_{\mathrm{left}}$ is a shortest path from $u$ to $v$. Thus, $H$ is a distance-preserving subgraph of $G$ with $O(k\log k)$ branching vertices, completing the proof of~\autoref{thm:ub}.

	%%%%%%%%%%%%%%%%%%%%%%%%%%%%%%%%%%%%%%%%%%%%%%%%%%%%%%%%%%%%%%%%%%%%%%%%%
	
	%%%%%%%%%%% L  O  W  E  R     B  O  U  N  D %%%%%%%%%%%%%%%%%%%%%%%%%%%%%
	
	%%%%%%%%%%%%%%%%%%%%%%%%%%%%%%%%%%%%%%%%%%%%%%%%%%%%%%%%%%%%%%%%%%%%%%%%%

	\section{Proof of the Lower Bound}
	
	In this section, we show that there exists an interval graph $\intgraph$ such that any distance-preserving subgraph of $\intgraph$ has $\Omega(k\log k)$ branching vertices, which is simply~\autoref{thm:main} (b), restated here for completeness.
	\begin{Theorem} \label{thm:lb}
		If $\cI$ is the family of all interval graphs, then $\bv_\cI(k)=\Omega(k\log k).$
	\end{Theorem}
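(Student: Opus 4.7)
The plan is to construct an explicit interval graph $\intgraph$ using the bit-reversal permutation matrix and to prove the lower bound through a charging argument that assigns the $\tfrac{1}{2}k\log k$ edges of a $(\log k)$-dimensional Boolean hypercube to branching vertices of any distance-preserving subgraph $H$, with each branching vertex receiving charge only $O(1)$.

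First I would describe $\intgraph$. Assume $k = 2^n$. Arrange the vertices on a $k \times k$ grid in which $(i,j)$ is a terminal iff $j = \mathbf{rev}(i)$, the bit-reversal of $i$, while the remaining grid points become non-terminals. I would then unfold this 2D arrangement into a 1D interval representation (the grid picture is only a visual aid), using the column index to place each terminal as a point interval and the row index to organise the non-terminals into nested bands of intervals of geometrically increasing length. The intuition is that the $n$ bits of the column index control how a shortest path from each terminal must zigzag through the various non-terminal ``layers'' of the graph.

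Second I would analyse inter-terminal shortest paths. For $x,y \in \{0,1\}^n$, I expect the shortest path between the terminals $t_x$ and $t_y$ to pass through a distinguished sequence of non-terminals whose identities are governed by the \emph{common prefix} of $\mathbf{rev}(x)$ and $\mathbf{rev}(y)$. \autoref{bridgepointfact}, \autoref{bridgeintervalfact} and \autoref{bridgedomfact} will give the quantitative control needed: they bound which non-terminals can lie on a shortest path and how many shortest paths can reuse the same non-terminal as a branching point.

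Third I would execute the charging. Fix a distance-preserving subgraph $H$ of $\intgraph$. For each pair $(x,b) \in \{0,1\}^n \times [n]$, let $v_{x,b}$ be the non-terminal in $H$ at which a shortest path from $t_x$ to $t_{x \oplus e_b}$ first branches away from a fixed ``backbone'' shortest-path structure inside $H$ (for instance the tree of greedy paths to the rightmost terminal). This sends each of the $k\log k$ ordered pairs to a branching vertex of $H$, and the common-prefix analysis will imply that each branching vertex of $H$ is the image of at most $O(1)$ pairs $(x,b)$. Hence $\nbv(H) = \Omega(k\log k)$, yielding the theorem.

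The main obstacle will be step three: establishing the low-multiplicity property of the map $(x,b)\mapsto v_{x,b}$. This is exactly where the bit-reversal construction is essential, since with a non-shuffled layout all $n$ bit-level branching decisions at a given terminal could be absorbed at a single non-terminal, collapsing the bound to $\Omega(k)$. The bit-reversal forces hypercube edges at different scales to live in geometrically separated regions of the interval representation, and verifying this rigorously will require a delicate case analysis that combines the shortest-path facts above with the explicit combinatorial behaviour of $\mathbf{rev}$ on common prefixes.
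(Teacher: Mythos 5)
Your high-level plan is in the same spirit as the paper's: construct an interval graph from the bit-reversal permutation, associate the $\tfrac{1}{2}k\log k$ hypercube edges with structural features of shortest paths, and conclude a lower bound on branching vertices. However, the proposal has two genuine gaps that leave the hardest parts of the argument unaddressed.

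First, the construction you describe does not match what actually works, and it is not pinned down enough to check. You propose intervals of ``geometrically increasing length'' organised into nested bands keyed by the row index. The paper instead uses $k(k+2)$ \emph{unit} intervals $[x,x+1]$ with $x$ ranging over all multiples of $1/k$ in $[-1,k+1-1/k]$, arranged into a $k\times(k+2)$ grid $\mathbf{A}$ where $a_{i,j}=[j+(k-1-i)/k,\,j+1+(k-1-i)/k]$. Crucially, this produces exactly three edge types (horizontal, upward, slanting), and the paper first analyses an auxiliary weighted grid graph $\rect$ (the ``Manhattan graph'') in which slanting edges are replaced by weight-0 downward edges, and only afterwards transfers the bound back to $\intgraph$ by showing that a distance-preserving subgraph of $\intgraph$ yields one for $\rect$ with at most a constant factor more branching vertices. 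Your proposal skips this two-stage reduction entirely, yet the clean row/column structure of $\rect$ is what makes the subsequent counting feasible. With geometrically varying interval lengths the adjacency structure changes, and there is no reason the same shortest-path geometry survives.

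Second, the charging step is asserted rather than argued, and the assertion as stated is stronger than what is proved or needed. You want a map $(x,b)\mapsto v_{x,b}$ with $O(1)$ multiplicity per branching vertex. The paper does not prove a per-vertex multiplicity bound in that form. Instead, for each hypercube pair $(x,x')$ it fixes a shortest path $P(x,x')$ in the subgraph and locates a \emph{special edge} $\spec(x,x')$ in a specific column $\floor{\lca(x,x')}$ and some row $r\in[\rev(x),\rev(x')]$ (\autoref{horizonlemma}). The key technical content, absent from your proposal, is (i) the disjointness statement (\autoref{partsclaim}): for distinct hypercube edges with equal or nested LCA columns, the row ranges $[\rev(x),\rev(x')]$ and $[\rev(y),\rev(y')]$ are disjoint; and (ii) the separation statement (\autoref{kilemma}): two special edges in the same row occupy distinct columns, and between them there must sit a branching vertex or a terminal. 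The lower bound then comes from summing, over rows, the number of special edges minus a constant. Your claim that ``the common-prefix analysis will imply $O(1)$ multiplicity'' is exactly the content of \autoref{partsclaim} and \autoref{kilemma}, and those are not consequences of the shortest-path facts you cite (\autoref{bridgepointfact}, \autoref{bridgeintervalfact}, \autoref{bridgedomfact}); they require a dedicated argument about how $\rev$ interacts with LCAs and with the row/column coordinates of the grid. Until you supply that argument, the heart of the proof is missing.
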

	
	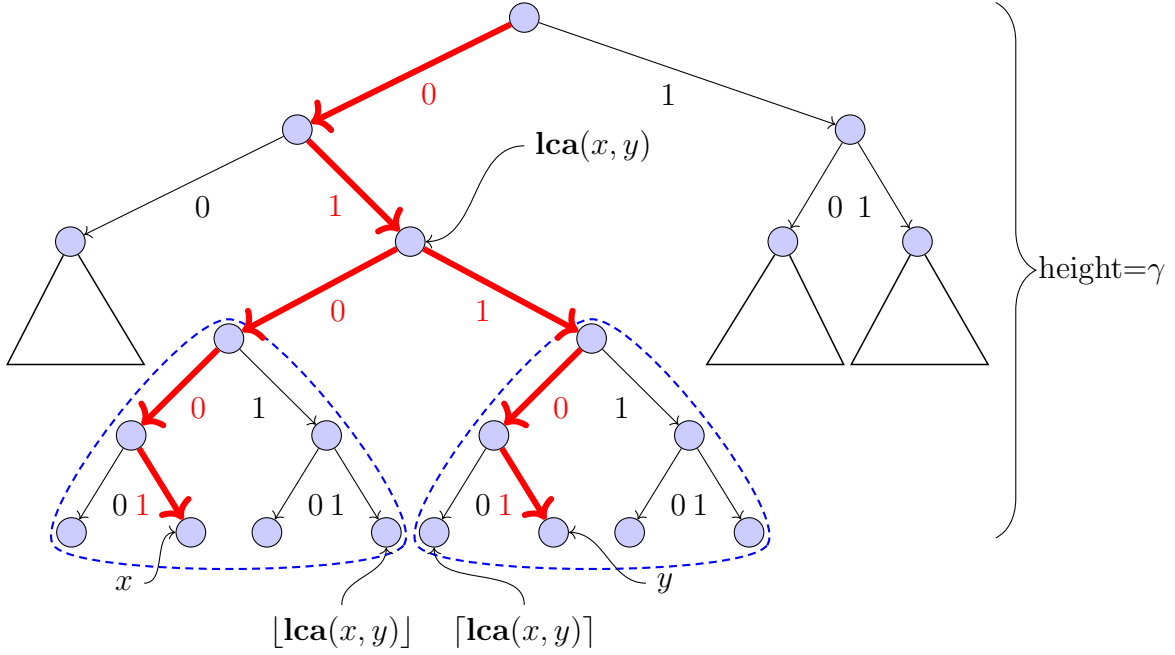
\begin{figure}
		\begin{center}
			\tikzstyle{vis}=[circle,draw=black,fill=blue!20]
			\begin{tikzpicture}
			
			\draw [densely dashed, thick, blue] plot [smooth cycle] coordinates {(-3.89,-4) (-1.59,-7) (-6.19,-7)};
			\draw [densely dashed, thick, blue] plot [smooth cycle] coordinates {(0.89,-4) (3.19,-7) (-1.41,-7)};
			
			\node[vis] (v)   [] {};
			
			\node[vis] (v0)  [below left=1.2 and 2.7 of v] {};
			\node[vis] (v1)  [below right=1.2 and 4 of v] {};
			
			\node[vis] (v00) [below left=1.2 and 2.7 of v0] {};
			\node[vis] (v01) [below right=1.2 and 1.2 of v0] {};
			\node[vis] (v10) [below left=1.2 and 0.6 of v1] {};
			\node[vis] (v11) [below right=1.2 and 0.6 of v1] {};
			
			\node[vis] (v010) [below left=1 and 2.1 of v01] {};
			\node[vis] (v011) [below right=1 and 2.1 of v01] {};
			
			%first special guy starts
			\node[vis] (v0100)  [below left=1 and 1 of v010] {};
			\node[vis] (v0101)  [below right=1 and 1 of v010] {};
			
			\node[vis] (v01000) [below left=1 and 0.5 of v0100] {};
			\node[vis] (v01001) [below right=1 and 0.5 of v0100] {};
			\node[vis] (v01010) [below left=1 and 0.5 of v0101] {};
			\node[vis] (v01011) [below right=1 and 0.5 of v0101] {};
			%first special guy ends
			
			%second special guy starts
			\node[vis] (v0110)  [below left=1 and 1 of v011] {};
			\node[vis] (v0111)  [below right=1 and 1 of v011] {};
			
			\node[vis] (v01100) [below left=1 and 0.5 of v0110] {};
			\node[vis] (v01101) [below right=1 and 0.5 of v0110] {};
			\node[vis] (v01110) [below left=1 and 0.5 of v0111] {};
			\node[vis] (v01111) [below right=1 and 0.5 of v0111] {};
			%second special guy ends
			
			\draw [->,red, line width=0.8mm] (v)   -- node[below right=-0.4mm and -0.4mm] {0} (v0);
			\draw [->] (v)   -- node[below left]  {1} (v1);
			
			\draw [->] (v0)  -- node[below right] {0} (v00);
			\draw [->,red, line width=0.8mm] (v0)  -- node[below left=-0.4mm and -0.4mm]  {1} (v01);
			\draw [->] (v1)  -- node[below right] {0} (v10);
			\draw [->] (v1)  -- node[below left]  {1} (v11);
			
			\draw [->,red, line width=0.8mm] (v01) -- node[below right=-0.4mm and -0.4mm] {0} (v010);
			\draw [->,red, line width=0.8mm] (v01) -- node[below left=-0.4mm and -0.4mm]  {1} (v011);
			
			\draw [->,red, line width=0.8mm] (v010)  -- node[below right=-0.4mm and -0.4mm] {0} (v0100);
			\draw [->] (v010)  -- node[below left]  {1} (v0101);
			\draw [->,red, line width=0.8mm] (v011)  -- node[below right=-0.4mm and -0.4mm] {0} (v0110);
			\draw [->] (v011)  -- node[below left]  {1} (v0111);
			\draw [->] (v0100) -- node[below right] {0} (v01000);
			\draw [->,red, line width=0.8mm] (v0100) -- node[below left=-0.4mm and -0.4mm]  {1} (v01001);
			\draw [->] (v0101) -- node[below right] {0} (v01010);
			\draw [->] (v0101) -- node[below left]  {1} (v01011);
			\draw [->] (v0110) -- node[below right] {0} (v01100);
			\draw [->,red, line width=0.8mm] (v0110) -- node[below left=-0.4mm and -0.4mm]  {1} (v01101);
			\draw [->] (v0111) -- node[below right] {0} (v01110);
			\draw [->] (v0111) -- node[below left]  {1} (v01111);
			
			\node[anchor=south] at (-2.4,-8.5) (lcafloor) {$\lfloor{\mathbf{lca}(x,y)}\rfloor$};
			\draw (lcafloor) edge[out=90,in=-90,->] (v01011);
			
			\node[anchor=south] at (0,-8.5) (lcaceil) {$\lceil{\mathbf{lca}(x,y)}\rceil$};
			\draw (lcaceil) edge[out=90,in=-90,->] (v01100);
			
			\node[anchor=east] at (-5,-7.5) (x) {$x$};
			\draw (x) edge[out=0,in=180,->] (v01001);
			
			\node[anchor=west] at (1.6,-7.5) (y) {$y$};
			\draw (y) edge[out=180,in=0,->] (v01101);
			
			\node[anchor=west] at (0,-1.7) (lca) {$\mathbf{lca}(x,y)$};
			\draw (lca) edge[out=180,in=0,->] (v01);
			
			\draw [line width=0.2mm] (v00) -- (-5,-4.6) -- (-6.8,-4.6) -- (v00);
			\draw [line width=0.2mm] (v10) -- (2.4,-4.6) -- (4.2,-4.6) -- (v10);
			\draw [line width=0.2mm] (v11) -- (4.3,-4.6) -- (6.1,-4.6) -- (v11);
			
			\draw [decorate,decoration={brace,amplitude=15pt}] (6.2,0.2) -- (6.2,-6.9) node [black,midway,xshift=1.4cm] {height=$\gamma$};
			
			\end{tikzpicture}
			\caption{A complete binary tree of height $\gamma$ having $k=2^\gamma$ leaves. In this example, $\gamma=5$, $x=01001$ and $y=01101$. Thus, $\mathtt{Ham}(x,y)=1$ and $\lvert\mathbf{lca}(x,y)\rvert=2$.}
			\label{fig:ptree}
		\end{center}
	\end{figure}

	\subsection{Preliminaries}
	
	We first set up some terminology that we use in this section. Let $k =
	2^{\gamma}$, where $\gamma$ is a positive integer. We identify the
	numbers in the set $\{0,1,\ldots,k-1\}$ with elements of
	$\{0,1\}^\gamma$ using their $\gamma$-bit binary representation.  We
	index the bits of the binary strings from left to right using integers
	$i=1,2,\ldots,\gamma$. Thus, $x[i]$ denotes the $i$-th bit of $x$ (from the
	left); we use $x[i,j]$ to denote the string $x[i]\,x[i+1]\ldots\,
	x[j]$ of length $j-i+1$ (here $i,j$ satisfy $1 \leq i \leq j
	\leq \gamma$).
	
	For a string of $\gamma$ bits $a$, we use $\rev(a)$ to represent the \emph{reverse} of
	$a$; that is, the binary string obtained by writing the bits of $a$ in reverse (for instance, $\mathbf{rev}_5(00010) = 01000$). We may arrange binary
	strings in a binary tree. Refer to~\autoref{fig:ptree} for an
	example. The root is the empty string; the left child of a vertex
	$x$ is the vertex $x \, 0$, and its right child is the vertex $x\,1$.
	In particular, the string $y$ is a \emph{descendant} of the string $x$
	if $y$ is obtained by concatenating $x$ with some (possibly empty) string $z$; that is, $y = x\,z$. Consider the binary tree of depth
	$\gamma$, whose leaves correspond to elements of $\{0,1\}^\gamma$. For
	distinct elements $x,y \in \{0,1\}^\gamma$, let $\lca(x,y)$ be the
	\emph{lowest common ancestor} of $x$ and $y$ defined as follows:
	$$\lca(x,y) = x[1,\ell-1] = y[1,\ell-1], \text{ where }
	\ell=\min\left\{i\in[\gamma]: x[i] \neq y[i]\right\}.$$ For example,
	$\lca(0100111,0101010)= 010$.  Let $\lfloor{\lca(x,y)}\rfloor$ be the
	\emph{floor of $\lca(x,y)$}, and $\lceil{\lca(x,y)}\rceil$ be the
	\emph{ceiling of $\lca(x,y)$} defined as follows:
	\begin{align*}
	\lfloor{\lca(x,y)}\rfloor &= \lca(x,y)\, 0\, 1^{\gamma-\ell}\\
	\lceil{\lca(x,y)}\rceil &= \lca(x,y)\, 1\, 0^{\gamma-\ell}
	\end{align*}
	Since $\lfloor{\lca(x,y)}\rfloor,\lceil{\lca(x,y)}\rceil\in
	\{0,1\}^\gamma$, we may regard $\lfloor{\lca(x,y)}\rfloor$ and
	$\lceil{\lca(x,y)}\rceil$ as numbers in the set $\{0,1,\ldots,k-1\}$. Note
	that $\lfloor \lca(x,y) \rfloor =\lceil \lca(x,y) \rceil -1$, and if
	$x<y$, then $\lfloor{\lca(x,y)}\rfloor \in [x,y)$ and
	$\lceil{\lca(x,y)}\rceil\in
	(x,y]$\footnote{$[x,y]\triangleq\{x,x+1,x+2,\ldots,y\}$ and
		$[x,y)\triangleq\{x,x+1,x+2,\ldots,y-1\}$.}.

	Strings in $\{0,1\}^\gamma$ can also be viewed as vertices of an
	$\gamma$-dimensional hypercube, with edge set
	\[ \cH_\gamma = \{(x,x'): x,x' \in \{0,1\}^\gamma \text{ and } 
	x < x' \text{ and } \ham(x,x') =1 \},\] where $\ham(x,x')$ is the
	Hamming distance between $x$ and $x'$. Thus, if $(x,x') \in
	\cH_\gamma$, then $x$ and $x'$ differ at a unique location where $x$
	has a zero and $x'$ a one.
	\begin{Claim} \label{partsclaim}
		Suppose $(x,x')$ and $(y,y')$ are distinct edges of $\cH_\gamma$. 
		\begin{enumerate}
			\item[(a)] If $\lca(x,x') = \lca(y,y')$, then $[\rev(x),\rev(x')] \cap [\rev(y), \rev(y')] = \emptyset$.
			\item[(b)] If $\{\floor{\lca(x,x')}, \floor{\lca(y,y')}\} \subseteq
			[x,x') \cap [y,y')$, then
			$$[\rev(x),\rev(x')]\cap [\rev(y), \rev(y')] = \emptyset.$$
		\end{enumerate}
		%$\max\{\rev(i_1), \rev(i_1')\} \leq \min{\rev(i_2), \rev(i_2')\}$
		%or
		%$\max\{\rev(i_2), \rev(i_2')\} \leq \min{\rev(i_1), \rev(i_1')\}$.
	\end{Claim}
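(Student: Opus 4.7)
The plan is to prove (a) by a direct arithmetic calculation with $\rev$, and then deduce (b) by showing its hypothesis forces the equal-$\lca$ situation of (a).

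For (a), let $\ell$ be the unique position at which $x$ and $x'$ differ, so that $\lca(x,x')$ has length $\ell-1$, $x[\ell]=0$, $x'[\ell]=1$, and the two strings agree elsewhere. The assumption $\lca(x,x')=\lca(y,y')$ immediately forces $(y,y')$ to flip at the same position $\ell$ with $x[1,\ell]=y[1,\ell]$. Writing $\rev$ in closed form as $\rev(z)=\sum_{i=1}^\gamma z[i]\cdot 2^{i-1}$, I get $\rev(x')-\rev(x)=\rev(y')-\rev(y)=2^{\ell-1}$, so each interval has length exactly $2^{\ell-1}$. Since $x$ and $y$ agree on their first $\ell$ bits but, being distinct edges, must differ somewhere in positions $\ell+1,\ldots,\gamma$, the difference
\[
\rev(y)-\rev(x)=\sum_{i=\ell+1}^{\gamma}(y[i]-x[i])\,2^{i-1}
\]
is a nonzero integer multiple of $2^{\ell}$. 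Assuming WLOG $\rev(x)<\rev(y)$, this yields $\rev(x')=\rev(x)+2^{\ell-1}<\rev(x)+2^{\ell}\le\rev(y)$, so the two intervals are disjoint.

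For (b), let $\ell_x,\ell_y$ denote the respective flip positions and set $F_x=\floor{\lca(x,x')}=x[1,\ell_x-1]\cdot 0\cdot 1^{\gamma-\ell_x}$, with $F_y$ defined analogously. The main step is to argue that the hypothesis forces $\ell_x=\ell_y$ and $x[1,\ell_x]=y[1,\ell_y]$, reducing to (a). Every element of $[z,z')$ has prefix $z[1,\ell_z]$, so $F_x\in[y,y')$ requires the first $\ell_y$ bits of $F_x$ to equal $y[1,\ell_y]$. If $\ell_y>\ell_x$, then the $\ell_y$-th bit of $F_x$ lies in its trailing block of ones and equals $1$, whereas $y[\ell_y]=0$ by the definition of $\ell_y$, a contradiction; so $\ell_y\le\ell_x$, and by the symmetric argument applied to $F_y\in[x,x')$, $\ell_x=\ell_y=:\ell$. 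Plugging this back, $F_x[1,\ell]=x[1,\ell]$ must equal $y[1,\ell]$, which is exactly the hypothesis of (a), so part (a) finishes the proof.

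The arithmetic in (a) is one line once $\rev$ is in closed form; the only bookkeeping worth doing carefully is in (b), where I must track which bits of $F_x$ come from $x[1,\ell_x]$ and which come from the trailing $1^{\gamma-\ell_x}$ block, so that the two symmetric arguments pin down the equality $\ell_x=\ell_y$ rather than a one-sided inequality. I do not anticipate any substantive obstacle beyond this.
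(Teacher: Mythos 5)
Your part (a) is correct and essentially matches the paper's argument (both isolate the flip position $\ell$, observe that the two intervals have length $2^{\ell-1}$, and note that the gap between them is a nonzero multiple of $2^\ell$).

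Part (b), however, has a genuine gap. The auxiliary claim you rely on, ``every element of $[z,z')$ has prefix $z[1,\ell_z]$,'' is false. Take $\gamma=3$, $z=001$, $z'=011$, so the flip position is $\ell_z=2$: then $010\in[z,z')=\{001,010\}$, but $010$ does not begin with $z[1,2]=00$. What is true (and what the paper uses) is the weaker statement that every element of $[z,z')$ has prefix $z[1,\ell_z-1]=\lca(z,z')$: the integers sharing the prefix $z[1,\ell_z-1]$ form the full range $[z[1,\ell_z-1]0^{\gamma-\ell_z+1},\,z[1,\ell_z-1]1^{\gamma-\ell_z+1}]$, which contains both $z$ and $z'-1$, whereas the range of integers with the longer prefix $z[1,\ell_z]$ need not contain $z'-1$. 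Under the corrected lemma your derivation breaks down: from $F_x\in[y,y')$ you may only conclude $F_x[1,\ell_y-1]=y[1,\ell_y-1]$, and in the case $\ell_y>\ell_x$ there is no longer an immediate clash at the $\ell_y$-th bit.

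The conclusion $\ell_y\le\ell_x$ is nonetheless true, and the argument can be repaired with one further step. Suppose $\ell_y>\ell_x$ and $F_x\in[y,y')$. The (correct) weaker lemma gives $F_x[1,\ell_y-1]=y[1,\ell_y-1]=y'[1,\ell_y-1]$. Since $\ell_y>\ell_x$, the $\ell_y$-th bit of $F_x=x[1,\ell_x-1]\,0\,1^{\gamma-\ell_x}$ lies in the trailing ones block, so $F_x[\ell_y]=1=y'[\ell_y]$. Finally $F_x[\ell_y+1,\gamma]=1^{\gamma-\ell_y}$, which is at least $y'[\ell_y+1,\gamma]$. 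Hence $F_x\ge y'$, contradicting $F_x<y'$. The symmetric argument then gives $\ell_x=\ell_y$, and the rest of your reduction to part (a) goes through (in that case $F_x[\ell_y]=0=y[\ell_y]$ automatically, so the prefix agreement on $\ell$ bits that you need actually holds). So the claim is salvageable, but as written the proof rests on a false statement and you must replace the bit-$\ell_y$ comparison with the comparison of $F_x$ against $y'$.
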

	\begin{proof}
		Although part (b) implies part (a), it is easier to show part (a)
		first, and then derive part (b) from it. For part (a), let
		$|\lca(x,x')| = |\lca(y,y')| = \ell-1$. Let $a,b \in \{0,1\}^{\gamma -
			\ell}$ be such that 
		\[
		a = x[\ell+1,\gamma] = x'[\ell+1,\gamma] \neq y[\ell+1,\gamma] =
		y'[\ell+1,\gamma] = b.
		\]
		In particular, we have $a\neq b$ (implying
		$\revl(a)\neq\revl(b)$). Note that $\rev(a)$ represents the
		$\gamma-\ell$ most significant bits of $\rev(x)$ and $\rev(x')$;
		similarly, $\rev(b)$ represents the $\gamma-\ell$ most significant
		bits of $\rev(y)$ and $\rev(y')$.

		%\begin{align*}
		%\revl(a)=\rev(x)[1,\gamma - \ell]=\rev(x')[1,\gamma - \ell]\\
		%\revl(b)=\rev(y)[1,\gamma - \ell]=\rev(y')[1,\gamma - \ell]
		%\end{align*}
		If $\revl(a)<\revl(b)$ then $\rev(x')<\rev(y)$; and if $\revl(b)<\revl(a)$ then $\rev(y')<\rev(x)$. In either case, $[\rev(x),\rev(x')]$ and $[\rev(y),\rev(y')]$ are disjoint, proving part (a).
		
		Next, consider part (b). Suppose $\floor{\lca(x,x)},\floor{\lca(y,y')}
		\in [x,x') \cap [y,y')$. Since every $p\in[x, x')$ is a descendant of
		$\lca(x,x')$, we conclude that $\lca(y,y')$ is a descendant of
		$\lca(x,x')$. Similarly, $\lca(x,x')$ is a descendant of
		$\lca(y,y')$. But then $\lca(x,x')=\lca(y,y')$, and part (b)
		follows from part (a).
	\end{proof}
	
	\subsection{Manhattan Graphs}
	
	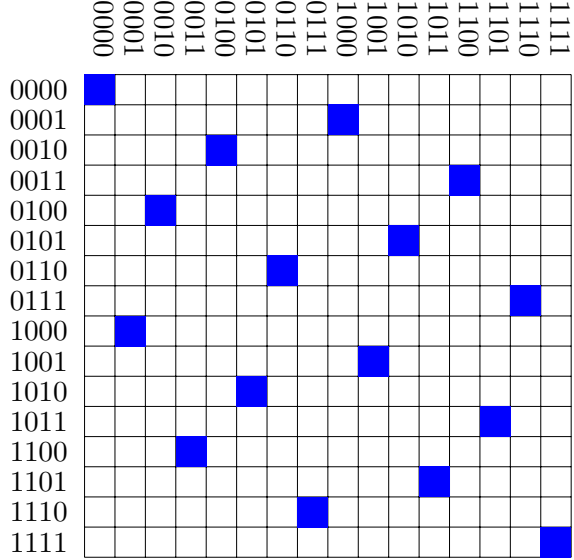
\begin{figure}
		\begin{center}
			\begin{tikzpicture}
			\draw[step=.4, ultra thin] (0,0) grid (6.4,6.4);
			%		\draw[step=.5, ultra thick] (0,0) grid (16,16);
			
			\foreach \a in {0,1}
			\foreach \b in {0,1}
			\foreach \c in {0,1}
			\foreach \d in {0,1}
			%					\foreach \e in {0,1}
			{
				\node at (-.6,6.2-.4*\a-.8*\b-1.6*\c-3.2*\d)
				{\small \d\c\b\a};
				
				\fill[fill=blue, opacity=1](.4*\a+.8*\b+1.6*\c+3.2*\d,
				6.4-.4*\d-.8*\c-1.6*\b-3.2*\a)
				rectangle +(.4,-.4);
				
				\node[rotate=270] at (.25+.4*\a+.8*\b+1.6*\c+3.2*\d,7)
				{\small \d\c\b\a};
			}
			
			\end{tikzpicture}
			\caption{The bit-reversal permutation matrix for $k=16$. In $\rect$, each cell of this matrix represents a vertex. The blue cells represent the terminal vertices of $\Tmid$; all the other vertices are non-terminals. Edges are named horizontal, upward and downward in the natural way.}
			\label{fig:rect}
		\end{center}
	\end{figure}
	
	In this section, we describe a directed grid graph $\rect$ (which we refer
	to as the Manhattan graph) with $3k$ terminals. We show that any
	distance-preserving subgraph of $\rect$ has $\Omega(k\log k)$
	branching vertices. The graph has $k^2+2k$ vertices arranged in
	a square grid. The vertices and edges of $\rect$ are defined as follows. (\autoref{fig:rect} makes this definition easier to understand.)
	\begin{enumerate}
		\item $V(\rect)=\{0,1,2,\ldots,k-1\}\times\{-1,0,1,\ldots,k\}$.
		\item There are three kinds of edges: horizontal,
		upward and downward; the edge set is given by
		$E(\rect)=\hor\cup\upvert\cup\downvert$, where
		\begin{align*}
		\hor &= \{ ((i,j),(i,j+1)): i =0,1,\ldots,k-1 \text{ and }
		j =-1,0,\ldots,k-1 \};\\
		\upvert &= \{ ((i_1,j),(i_2,j)): 0\leq i_2<i_1\leq k-1 \text{ and }
		j =-1,0,\ldots,k\};\\
		\downvert &= \{ ((i_1,j),(i_2,j)): 0\leq i_1<i_2\leq k-1 \text{ and }
		j =-1,0,\ldots,k\}.
		\end{align*}
		
		\item The edge weights are given by the function
		$w:E(\rect)\rightarrow\{0,1\}$, defined as follows:
		$w(e) = 1$ if $e \in \hor \cup \upvert$, and $w(e)=0$ if $e \in \downvert$.
		%\begin{equation}
		%w(e) = \begin{cases}
		%1 & \mbox{if } e \in \hor\\
		%1 & \mbox{if } e \in \upvert\\
		%0 & \mbox{if } e \in \downvert
		%\end{cases}
		%\end{equation}
	\end{enumerate}
	The set of terminals are of the form $T=T_{\text{left}}\cup\Tmid\cup
	T_{\text{right}}$, where
	\begin{align*}
	T_{\text{left}} &= \{0,1,\ldots,k-1\} \times \{-1\}, \\
	T_{\text{right}} &= \{0,1,\ldots,k-1\} \times \{k\}; \\
	T_{\text{mid}} &= \{ (\rev(i),i): i=0,1,\ldots,k-1\}.
	\end{align*}
	This completes the definition of $\rect$.

	Fix an optimal distance-preserving subgraph $\recth$ of $\rect$. We
	shall show that $\recth$ has $\Omega(k\log k)$ vertices of degree at
	least $3$.
	\begin{Lemma}\label{degtwo}
		$V(\recth)=V(\rect)$ and $\hor \subseteq E(\recth)$.
	\end{Lemma}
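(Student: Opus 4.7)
The plan is to prove both claims simultaneously by exhibiting, for each row, a pair of terminals whose shortest path is forced to be the entire horizontal row. Specifically, for each $i \in \{0,1,\ldots,k-1\}$, I would consider the terminal $(i,-1) \in T_{\text{left}}$ and the terminal $(i,k) \in T_{\text{right}}$, which lie in the same row $i$. I will show that $d_{\rect}((i,-1),(i,k)) = k+1$ and, crucially, that the \emph{only} shortest path between them is the horizontal path $(i,-1),(i,0),(i,1),\ldots,(i,k)$. Since $\recth$ preserves all terminal distances, it must contain this unique shortest path for every $i$, which immediately gives $\hor \subseteq E(\recth)$ and covers every vertex of $\rect$.

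The core of the argument is the uniqueness claim. Any directed path from $(i,-1)$ to $(i,k)$ must traverse each column index from $-1$ to $k$, so it uses at least $k+1$ horizontal edges, each of weight $1$. This gives a lower bound of $k+1$ on the path length. For a path to achieve this lower bound it must use exactly $k+1$ horizontal edges and \emph{zero} upward edges (which would each add $1$). The remaining permissible moves, namely downward edges of weight $0$, can only strictly increase the first coordinate. Consequently, along such a path the sequence of first coordinates is non-decreasing, starts at $i$, and ends at $i$; so it must be constantly $i$ throughout. This forces the path to remain in row $i$ and to use the horizontal edges of that row in order.

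From the uniqueness of the shortest path between $(i,-1)$ and $(i,k)$, any distance-preserving subgraph of $\rect$ must contain all vertices $(i,j)$ for $j \in \{-1,0,1,\ldots,k\}$ and all horizontal edges $((i,j),(i,j+1))$ for $j \in \{-1,0,\ldots,k-1\}$. Taking the union over all $i \in \{0,1,\ldots,k-1\}$ covers every vertex in $V(\rect)$ and every edge in $\hor$, establishing both $V(\recth) = V(\rect)$ and $\hor \subseteq E(\recth)$.

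There is no real obstacle here: the proof is a direct consequence of the edge weighting, in which horizontal and upward edges cost $1$ while downward edges are free. The only subtlety worth stating carefully is why downward excursions cannot shorten or substitute for the horizontal row-$i$ path, which is precisely the monotonicity observation in the second paragraph.
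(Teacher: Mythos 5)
Your proof is correct and takes the same approach as the paper: both fix a row $i$, observe that $(i,-1)$ and $(i,k)$ are terminals whose unique shortest path in $\rect$ is the full horizontal row, and conclude that every row's vertices and horizontal edges must appear in $\recth$. The only difference is that the paper simply asserts uniqueness of this shortest path, whereas you spell out the (correct) justification via the column-count lower bound and the monotonicity of the row index once upward edges are excluded.
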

	\begin{proof}
		%Let $0\leq i\leq k-1$.
		For any $i\in\{0,1,\ldots,k-1\}$, note that the \emph{unique} shortest path between the terminals $(i,-1)$ and
		$(i,k)$ is precisely $((i,-1),(i,0),\ldots,(i,k))$. Thus, all vertices and all
		horizontal edges in the $i$-th row of $\rect$ must be part of $\recth$.
		%
		%Thus, for each $-1\leq j\leq
		%k$, $(i,j)\in V(\recth)$. Also, if $0\leq j\leq k-1$, then
		%$((i,j-1),(i,j))\in E(\recth)$ and $((i,j),(i,j_1))\in E(\recth)$,
		%implying that $\deg_H((i,j))\geq 2$.
	\end{proof}
	It follows from~\autoref{degtwo} that every non-terminal vertex in $\recth$ has degree at least two, namely the two horizontal edges incident on it.
	
	From now on, we rely solely on the fact that $\recth$ is distance-preserving for every pair of terminals in $\Tmid$; that is, we prove the stronger statement that just preserving terminal distances in $\Tmid$ requires $\Omega(k\log k)$ branching vertices.
	
	Order the vertices
	in $\Tmid$ as $t_0,t_1,\ldots,t_{k-1}$, where $t_i=(\rev(i),i)$. Note that these terminals appear in different rows
	and columns. Consider the following pairs of terminals. (We call such pairs ``friends".)
	\[
	\Friends = \{(t_i,t_j): (i,j) \in \cH_\gamma \}.
	\]
	For each pair of friends $(t_i,t_j)$, fix $P(i,j)$, a path of minimum distance between $t_i$ and $t_j$
	in $\recth$. We are now set to formally define \emph{special edges}.
	
	\begin{Definition}
		Let $\spec(i,j)=((r_{ij},\floor{\lca(i,j)}), (r_{ij},\lceil\lca(i,j)\rceil))$ be an edge of $P(i,j)$, where $\rev(i) \leq r_{ij} \leq \rev(j)$. (By~\autoref{horizonlemma}, such an edge exists.) Let $\spec = \{\spec(i,j): (t_i,t_j) \in \Friends\}$.
	\end{Definition}
	
	\begin{Lemma} \label{horizonlemma} Let $(t_i,t_j)\in\Friends, \ell
		= \floor{\lca(i,j)}$. Then, there is an $r_{ij}\in[\rev(i),\rev(j)]$ such that $P(i,j)$ contains the edge
		$((r_{ij}, \ell),(r_{ij},\ell+1))$.
	\end{Lemma}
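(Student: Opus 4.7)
The plan is to locate the unique column boundary that $P(i,j)$ must cross between columns $\ell$ and $\ell+1$, and then pin down the crossing row via a short shortest-path computation.

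First, I would observe that horizontal edges in $\rect$ point strictly rightward: every edge that changes the column coordinate has the form $((r,c),(r,c+1))$. Since $(i,j) \in \cH_\gamma$ forces $i < j$, any $t_i$-to-$t_j$ path must contain a horizontal edge crossing from column $c$ to column $c+1$ for each $c \in \{i,i+1,\ldots,j-1\}$ (in fact exactly one, since column indices are monotone along the path). Because $\ell = \floor{\lca(i,j)} \in [i,j)$, the path $P(i,j)$ therefore contains some horizontal edge of the form $((r_{ij},\ell),(r_{ij},\ell+1))$; it remains to show $\rev(i) \leq r_{ij} \leq \rev(j)$.

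Next, I would compute $d(t_i, t_j)$ in $\rect$. The fact that $i$ and $j$ differ in exactly one bit $p$, with $i[p]=0$ and $j[p]=1$, implies that $\rev(i)$ and $\rev(j)$ differ only at bit $\gamma-p+1$ and that $\rev(i) < \rev(j)$. A brief edge-weight analysis then shows that, for $c_1 \leq c_2$, the shortest-path distance from $(r_1,c_1)$ to $(r_2,c_2)$ equals $c_2-c_1$ if $r_2 \geq r_1$ (do all horizontals in row $r_1$ and then descend for free) and $c_2-c_1+1$ if $r_2 < r_1$ (exactly one upward edge is both necessary and sufficient). Applied to $t_i$ and $t_j$ this yields $d(t_i,t_j) = j-i$.

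Finally, I would decompose $P(i,j)$ into three pieces: a subpath from $t_i$ to $(r_{ij},\ell)$, the crossing edge of weight $1$, and a subpath from $(r_{ij},\ell+1)$ to $t_j$. Each piece is a path in $\rect$, so its weight is at least the $\rect$-distance between its endpoints. By the distance formula, the first piece weighs at least $(\ell-i) + \delta_1$ and the third at least $(j-\ell-1) + \delta_2$, where $\delta_1 = 1$ exactly when $r_{ij} < \rev(i)$ and $\delta_2 = 1$ exactly when $\rev(j) < r_{ij}$. Summing, the total length is at least $(j-i) + \delta_1 + \delta_2$. Since $P(i,j)$ has length exactly $j-i$, both $\delta_1$ and $\delta_2$ must vanish, giving $\rev(i) \leq r_{ij} \leq \rev(j)$. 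I do not anticipate any serious obstacle; the only care is in the $r_2 < r_1$ branch of the distance formula, where one must verify that a single upward edge suffices to realise the optimum.
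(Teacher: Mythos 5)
Your proposal is correct and takes essentially the same approach as the paper: both note that column indices are monotone along any $t_i$-to-$t_j$ path (forcing exactly one crossing edge at column $\ell$), compute $d(t_i,t_j)=j-i$, and derive the row-range constraint on $r_{ij}$ from the fact that leaving $[\rev(i),\rev(j)]$ would force a weight-$1$ upward edge. Your three-piece decomposition with indicators $\delta_1,\delta_2$ is a slightly more explicit bookkeeping of the same contradiction that the paper states in one sentence.
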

	
	\begin{proof}
		We have $i<j$, $t_i=(\rev(i),i)$ and $t_j=(\rev(j),j)$. Also note that since $(i,j)\in\cH_\gamma$, $\rev(i)<\rev(j)$. Thus, $d(i,j) = j-i$, and the
		shortest path $P(t_i,t_j)$ goes from column $i$ to column $j$ and
		never skips a column.  Since $\ell \in [i,j)$, there must be an edge
		in $P(i,j)$ of the form $((r_{ij}, \ell), (r_{ij}, \ell+1))$ (say,
		the edge of $P(i,j)$ that leaves column $\ell$ for the last
		time). We claim that $r_{ij} \in [\rev(i),\rev(j)]$. For otherwise,
		$P(i,j)$ would contain an edge in $\upvert$. Then, apart from the
		$j-i$ edges from $\hor$, $P(i,j)$ would contain an additional edge
		from $\upvert$ of weight $1$; that is, the length of $P(i,j)$ would
		be at least $j-i+1$---contradicting the fact that $d(i,j) = j -i$.
	\end{proof}
	
	\begin{Lemma}[Key lemma] \label{kilemma}
		Suppose $(t_x,t_{x'})$ and $(t_y,t_{y'})$ are distinct pairs in
		$\Friends$ such that their special edges are in the same row $r$, that
		is,
		\begin{align*}
		\spec(x,x')&=
		((r,\alpha),(r, \alpha + 1))\\ 
		\spec(y,y')&=((r,\beta),(r,\beta+ 1)),
		\end{align*}
		where $\alpha = \floor{\lca(x,x')}$ and $\beta = \floor{\lca(y,y')}$. 
		\begin{enumerate}
			\item[(a)] Then, $\alpha \neq \beta$. In particular, $\spec(x,x')\neq \spec(y,y')$.
			
			\item[(b)] Suppose $\alpha < \beta$. Then, there exists an $\ell \in
			[\alpha+1, \beta]$ such that $(r,\ell)$ is either a branching vertex
			or a terminal in $\recth$.
		\end{enumerate}
	\end{Lemma}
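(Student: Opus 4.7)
\emph{Part (a).} My plan is to show that the value $\floor{\lca(i,j)}$ determines $\lca(i,j)$: written as a $\gamma$-bit string, $\floor{\lca(i,j)} = \lca(i,j)\cdot 0 \cdot 1^{\gamma-|\lca(i,j)|-1}$, so the number of trailing ones recovers $|\lca(i,j)|$, and then the prefix recovers $\lca(i,j)$ itself. Hence $\alpha=\beta$ would give $\lca(x,x')=\lca(y,y')$, and~\autoref{partsclaim}(a) would then force $[\rev(x),\rev(x')] \cap [\rev(y),\rev(y')] = \emptyset$, contradicting~\autoref{horizonlemma}, which places $r$ in both intervals. The inequality $\spec(x,x')\neq\spec(y,y')$ follows immediately once $\alpha\neq\beta$, since both are horizontal edges in row $r$ with distinct column positions.

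\emph{Part (b).} I would assume for contradiction that every vertex $(r,\ell)$ with $\ell\in[\alpha+1,\beta]$ is a non-terminal of degree exactly two in $\recth$. By~\autoref{degtwo}, this degree must be accounted for entirely by the two horizontal edges of $\rect$ incident on $(r,\ell)$, so no non-horizontal edge of $\recth$ touches $(r,\ell)$, and (by directedness of $\hor$) the only outgoing edge at $(r,\ell)$ goes to $(r,\ell+1)$ and the only incoming edge comes from $(r,\ell-1)$. The crux of the plan is to exploit this local rigidity: starting from the special edge $\spec(x,x')$, the path $P(x,x')$ arrives at $(r,\alpha+1)$ and is forced step-by-step along $(r,\alpha+1)\to(r,\alpha+2)\to\cdots\to(r,\beta+1)$; symmetrically, tracing $P(y,y')$ backward from the special edge $\spec(y,y')$ through the only available incoming edges, the path must have used the segment $(r,\alpha)\to(r,\alpha+1)\to\cdots\to(r,\beta)$ before reaching its special edge.

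Comparing these forced traversals with the column windows of the two paths will complete the argument. Since $P(x,x')$ only visits columns in $[x,x']$ and $P(y,y')$ only visits columns in $[y,y']$, the traversals yield $\beta+1\leq x'$ (otherwise the chain would run out of column space, or would trap the terminal $t_{x'}$ inside the forbidden range) and $\alpha\geq y$ (symmetrically). Together with the definitional containments $\alpha\in[x,x')$ and $\beta\in[y,y')$, this gives $\{\alpha,\beta\}\subseteq[x,x')\cap[y,y')$. Now~\autoref{partsclaim}(b) forces $[\rev(x),\rev(x')]\cap[\rev(y),\rev(y')]=\emptyset$, contradicting the fact from~\autoref{horizonlemma} that $r$ lies in both intervals. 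The subtlety I expect to be most delicate is the force-propagation step: one needs to confirm that the chain cannot be escaped early by an endpoint $t_{x'}$ or $t_y$ sitting inside the forbidden range, which is exactly what the contradiction hypothesis (no $(r,\ell)$ in that range is a terminal) rules out.
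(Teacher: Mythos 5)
Your proposal is correct. Part (a) is exactly what the paper does (spelled out in more detail). For part (b), you take a genuinely different, but logically dual, route. The paper applies the contrapositive of \autoref{partsclaim}(b): since $r\in[\rev(x),\rev(x')]\cap[\rev(y),\rev(y')]$ by \autoref{horizonlemma}, either $\alpha\notin[y,y')$ or $\beta\notin[x,x')$; it then observes that in the first case the first row-$r$ vertex of $P(y,y')$ lies in a column of $[\alpha+1,\beta]$ and must be a branching vertex (it was entered via a non-horizontal edge, which together with the two horizontal edges guaranteed by \autoref{degtwo} gives degree $\geq 3$) or the terminal $t_y$, and symmetrically in the second case. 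Your argument instead assumes every $(r,\ell)$ with $\ell\in[\alpha+1,\beta]$ is a non-terminal of degree two, rigidly forces both $P(x,x')$ and $P(y,y')$ along the horizontal chain, deduces $\alpha\geq y$ and $\beta+1\leq x'$ from the column windows, obtains $\{\alpha,\beta\}\subseteq[x,x')\cap[y,y')$, and then invokes \autoref{partsclaim}(b) directly to contradict \autoref{horizonlemma}. Both hinge on the same three ingredients (\autoref{partsclaim}(b), \autoref{horizonlemma}, \autoref{degtwo}); the paper's direct version is shorter and names the branching vertex explicitly, while your contradiction version makes the degree-two rigidity mechanism explicit and is slightly more laborious to carry out (one must verify, as you note, that neither forced chain can be cut short by a terminal endpoint, which the contradiction hypothesis rules out). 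Both are sound.
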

	\begin{proof}
		Part (a) follows from~\autoref{partsclaim} (a). Consider part (b).  By
		our definition of special edge, $r\in[\rev(x),\rev(x')]$ and
		$r\in[\rev(y),\rev(y')]$. So, $[\rev(x),\rev(x')]\cap [\rev(y),
		\rev(y')] \neq \emptyset$, and by~\autoref{partsclaim} (b) (in the
		contrapositive) either $\alpha \notin [y,y')$ or $\beta\notin [x,x')$.
		If $\alpha \notin [y,y')$, $\spec(x,x')$ is not on
		$P(y,{y'})$. The first vertex in row $r$ that is part of
		$P(y,{y'})$ is in a column $\ell \in [\alpha+1,\beta]$.
		Then, $(r,\ell)$ is either a branching vertex or the terminal
		$t_y$. On the other hand, if $\beta\notin [x,x')$, then the last
		vertex of $P(t_x,t_{x'})$ in row $r$ lies in a column $\ell
		\in [\alpha+1,\beta]$, so $(r,\ell)$ is either a branching
		vertex or the terminal $t_{x'}$.
	\end{proof}
	\begin{Corollary}[Corollaries of~\autoref{kilemma}] \label{cor:rect} \ 
		\begin{enumerate}
			\item[(a)] $|\spec| = |\Friends| = |\cH_\gamma| = k \log k/2$.
			\item[(b)] If two edges in $\spec$ fall in the same row, then there is
			a branching vertex or a terminal separating them.
		\end{enumerate}
	\end{Corollary}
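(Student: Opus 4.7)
The plan is to read off both parts directly from the Key Lemma plus a straightforward count of edges in the hypercube.

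For part (a), I would argue the three equalities from right to left. The equality $|\Friends| = |\cH_\gamma|$ holds by definition: $\Friends$ is indexed by edges of $\cH_\gamma$. For $|\cH_\gamma| = k \log k / 2$, I would observe that the $\gamma$-dimensional Boolean hypercube has $k = 2^\gamma$ vertices, each of degree $\gamma = \log k$, so the handshake count gives $k \gamma / 2 = k \log k / 2$ edges. For the first equality $|\spec| = |\Friends|$, I would use the map $(t_i, t_j) \mapsto \spec(i,j)$ and show it is injective. If two distinct friend pairs $(t_x, t_{x'})$ and $(t_y, t_{y'})$ produced the same special edge, then in particular these special edges would lie in the same row with $\alpha = \beta$, directly contradicting part (a) of~\autoref{kilemma}. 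Hence the map is injective, and $|\spec| = |\Friends|$.

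For part (b), the statement is essentially a rephrasing of part (b) of the Key Lemma, so I would simply appeal to it. Given two special edges $\spec(x,x')$ and $\spec(y,y')$ in the same row $r$, part (a) of the Key Lemma guarantees $\alpha \neq \beta$; without loss of generality, assume $\alpha < \beta$. Then part (b) of the Key Lemma provides an $\ell \in [\alpha+1, \beta]$ such that $(r, \ell)$ is either a branching vertex or a terminal in $\recth$. Since the two special edges in row $r$ span columns $[\alpha, \alpha+1]$ and $[\beta, \beta+1]$, the column $\ell$ lies in the closed interval $[\alpha+1,\beta]$ between them, which is exactly the sense in which $(r,\ell)$ \emph{separates} the two edges along the row.

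There is no real obstacle here, since all the substantive work has already been done in~\autoref{partsclaim},~\autoref{horizonlemma}, and~\autoref{kilemma}; the only minor point worth being careful about is making the injectivity argument in part (a) explicit (ruling out collisions both when the two special edges lie in the same row, handled by Key Lemma (a), and trivially when they lie in different rows).
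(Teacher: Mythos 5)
Your proposal is correct and follows exactly the route the paper intends (the paper states this as a corollary of~\autoref{kilemma} without writing out the proof): part (a) chains the definition of $\Friends$, the handshake count $k\gamma/2$ for the hypercube, and injectivity of $(t_i,t_j)\mapsto\spec(i,j)$ via~\autoref{kilemma}(a); part (b) is a direct application of~\autoref{kilemma}(b) after using~\autoref{kilemma}(a) to order $\alpha<\beta$.
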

	\begin{Theorem} \label{thm:manhattan}
		$\recth$ has $\Omega(k\log k)$ branching vertices.
	\end{Theorem}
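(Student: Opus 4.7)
The plan is to finish the argument by a simple counting over rows. By~\autoref{cor:rect}(a), there are $|\spec|=k\log k/2$ special edges, each lying in some row. For each row $r$, let $s_r$ denote the number of special edges in row $r$; then $\sum_{r=0}^{k-1} s_r = k\log k/2$.

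Fix a row $r$ with $s_r\geq 1$ and sort the special edges in that row left-to-right by their column position. For each consecutive pair of such edges, \autoref{cor:rect}(b) guarantees a branching vertex or a terminal of $\recth$ that separates them, and since these separators lie in pairwise-disjoint column ranges, they are distinct. Hence row $r$ contains at least $s_r-1$ branching-vertices-or-terminals. Summing over all rows, the total number of (branching vertex or terminal) occurrences is at least
\[
\sum_{r=0}^{k-1} \max(s_r-1,0) \;\geq\; \sum_{r=0}^{k-1} s_r - k \;=\; \frac{k\log k}{2} - k.
\]

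Now I peel off the terminals. Special edges have both endpoints in columns $\{0,1,\dots,k-1\}$, so any vertex separating two of them in a row also lies in an interior column. Terminals in $T_{\mathrm{left}}$ and $T_{\mathrm{right}}$ live in columns $-1$ and $k$ respectively, so they cannot separate special edges; only terminals of $\Tmid$ can. By construction $|\Tmid|=k$ (with exactly one $\Tmid$-terminal per row). Subtracting, the number of branching vertices in $\recth$ is at least
\[
\frac{k\log k}{2} - k - k \;=\; \frac{k\log k}{2} - 2k \;=\; \Omega(k\log k),
\]
which is the desired bound.

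The only nontrivial work is already absorbed into \autoref{kilemma} and \autoref{cor:rect}; the remaining step is the double-counting above, whose one subtle point is verifying that separators of the same row are genuinely distinct and that only $\Tmid$-terminals can act as separators, both of which follow from the columns involved. So I do not foresee a real obstacle here beyond recording the bookkeeping cleanly.
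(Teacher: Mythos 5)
Your proof is correct and follows essentially the same row-by-row counting approach as the paper: the paper directly asserts at least $\delta_i-2$ branching vertices per row (implicitly using that separators are distinct and that at most one $\Tmid$-terminal lies in each row), whereas you first count $s_r-1$ separators per row and then subtract the $k$ possible $\Tmid$-terminal separators globally, arriving at the same $k\log k/2 - 2k$ bound. Your version is a little more explicit about the two bookkeeping points (distinctness via disjoint column intervals, and exclusion of $T_{\mathrm{left}}/T_{\mathrm{right}}$ terminals by column range), but the substance is identical.
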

	\begin{proof}
		For each $i\in\{0,1,\ldots,k-1\}$, let $\delta_i$ be the number of
		distinct edges in $\spec$ in row $i$. 
		Then, by~\autoref{cor:rect} (a), we have
		$$\displaystyle{\sum\limits_{i=0}^{k-1}\delta_i=
			|\spec| = \left(\frac{k\log
				k }{2}\right)}.$$
		Furthermore,~\autoref{cor:rect} (b) implies that there are
		at least $\delta_i-2$ many branching vertices of the form $(i,x)$ in
		$\recth$, where $0\leq x\leq k-1$. Thus, the total number of branching
		vertices in $\recth$ is at least
		$$\displaystyle{(\delta_0-2)+(\delta_1-2)+\cdots+(\delta_{k-1}-2)=\left(\sum\limits_{i=0}^{k-1}\delta_i\right)-2k=\left(\frac{k\log k}{2}\right)-2k}.$$
		Since this quantity is $\Omega(k\log k)$, this completes the proof.
	\end{proof}

	\subsection{Translating the Lower Bound to Interval Graphs}

	In this section, we present an interval graph $\intgraph$ with $O(k)$ terminals, for which every distance-preserving subgraph has $\Omega(k \log k)$ branching vertices. Our lower bound relies on the lower bound for the Manhattan graph shown in the previous section. \autoref{fig:transform} can be helpful to navigate through this proof. Let us describe the interval graph. Let $\cJ$ be the set of intervals.
	\[ 
	\cJ = \{[x,x+1]: x = -1, -1+ 1/k, \ldots,-1/k, 0, \ldots, k, k + 1/k,
	\ldots, k +1 -1/k \}.\] Thus, we have unit intervals starting at all
	integral multiples of $1/k$ in the range $[-1,k+1-1/k]$; in all we
	have $k(k+2)$ intervals in $\cJ$. These intervals naturally define an interval graph. Furthermore, the edges of $\intgraph$ are \emph{directed} as follows. Orient the edges of $\intgraph$ \emph{from} an earlier interval \emph{to} a later interval; that is, $([x,x+1],[y,y+1])$ is a directed edge \emph{from} $[x,x+1]$ \emph{to} $[y,y+1]$ if and only if $x<y\leq x+1$. Note that this orientation does not affect shortest paths. Any shortest path from $[i,i+1]$ to $[j,j+1]$ (where $i<j$) in the undirected interval graph is also a valid directed shortest path in $\intgraph$. Also, $\intgraph$ has $k^2+2k$ vertices, which (surprisingly?) is the number
	of vertices in the Manhattan graph of the previous section. In fact, the
	connection is deeper. Let us arrange the intervals in a two-dimensional array
	\[ \mathbf{A} = \langle a_{i,j}: i=0,\ldots,k-1 \text{ and } j= -1,0,\ldots,k \rangle,\]
	where $a_{ij}$ corresponds to the interval $[j+(k-1-i)/k,
	j+1+(k-1-i)/k]$. Thus, the first $k$ intervals of $\cJ$ occupy the
	left most column of the array $\mathbf{A}$ (from bottom to top); the
	next $k$ intervals occupy the next column (again from bottom to top),
	and so on. It is easy to check that, after this arrangement, the directed edges of $\intgraph$ are of three types: horizontal, upward and slanting.
	\begin{align*}
	\hor(\intgraph)
	&= \{(a_{i,j},a_{i,j+1}): 0\leq i\leq k-1 \text{ and } -1\leq j\leq k-1\};\\
	\upvert(\intgraph) &= 
	\{(a_{i,j},a_{i',j}): 1\leq i\leq k-1\
	\text{ and }
	0\leq i'<i\ 
	\text{ and }
	-1\leq j\leq k\
	\};\\
	\Eslant(\intgraph) &= \{(a_{i,j},a_{i',j+1}):
	0\leq i\leq k-2 \text{ and } i<i'\leq k-1
	\text{ and }
	-1\leq j\leq k-1\}.
	\end{align*}
	Thus, $E(\intgraph) = \hor(\intgraph)\cup \upvert(\intgraph) \cup \Eslant(\intgraph)$. All edges in $E(\intgraph)$ have weight $1$. This 2d array can be viewed as a $k\times (k+2)$ grid, and we place terminals in this graph at the same $3k$ locations as in the Manhattan graph. This completes the description of $\intgraph$.

	%%%%%%%%%%%%%%%%%%%%%%%%%%%%%%%%%%%%%%%%%%%%%%%%%%%%%%%%%%%%%%%%%%%%%%%%%
	
	%%%%%%%%%%% A  P  P  E  N  D  I  X %%%%%%%%%%%%%%%%%%%%%%%%%%%%%%%%%%%%%%
	
	%%%%%%%%%%%%%%%%%%%%%%%%%%%%%%%%%%%%%%%%%%%%%%%%%%%%%%%%%%%%%%%%%%%%%%%%%

%	\appendix

%	\section{From Interval Graphs to Manhattan Graphs}
	
	Let $\intgraphh$ be a distance-preserving subgraph of $\intgraph$. Note that the terminals in the first and last column ensure that all horizontal edges must be part of $\intgraphh$. So, both end points of every slanting edge and every upward edge included in $\intgraphh$ are branching vertices. Our proof strategy is as follows. We obtain from $\intgraphh$ a distance-preserving subgraph $\newgraph$ of the
	Manhattan graph with \emph{nearly} the same number of branching vertices. Since $\newgraph$ requires $\Omega(k \log k)$ branching vertices, the number of branching vertices in $\intgraph$ is $\Omega(k \log k)$.
	
	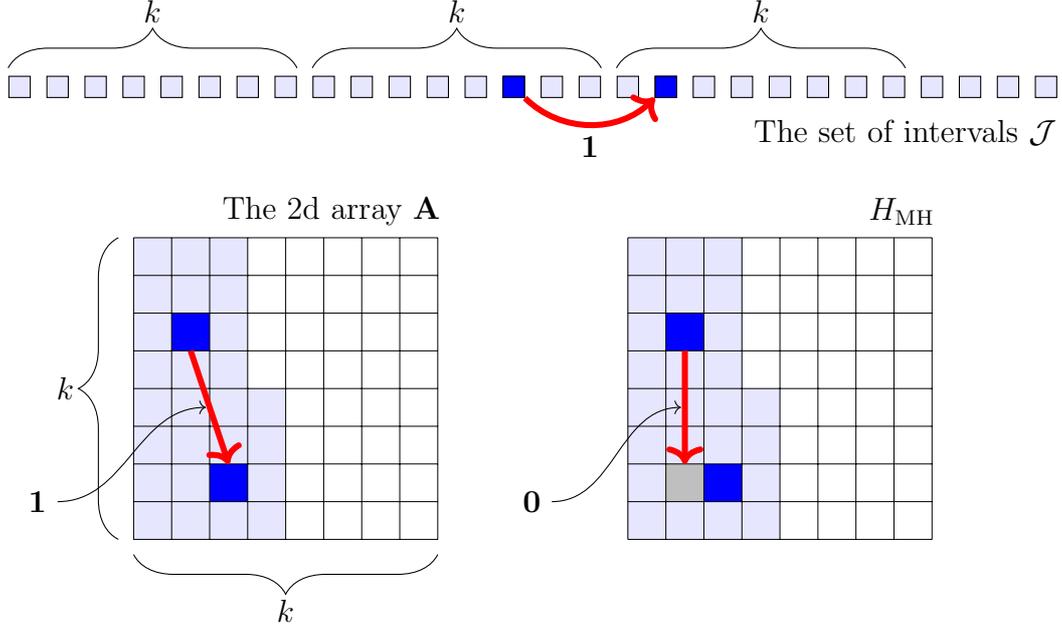
\begin{figure}
		\begin{center}
			\begin{tikzpicture}
			
			\def \o {1.5};
			\def \y {6};
			
			\foreach \c in {0,...,27}
			{
				\def \i {0.5*\c};
				\vertbox at (\i-\o,\y) [fill=blue!10] {};
			}
			
			\vertbox(u1) at (6.5-\o,\y) [fill=blue] {};
			\vertbox(v1) at (8.5-\o,\y) [fill=blue] {};
			
			\node at (7.5-\o,\y-0.8) (y) {$\bf 1$};
			\node at (11.65-\o,\y-0.6) (y) {The set of intervals $\cJ$};
			\node at (2.6,4.35) {The 2d array $\mathbf{A}$};
			\node at (10.1,4.35) {$\newgraph$};
			
			\draw [decorate,decoration={brace,amplitude=15pt}] (-0.15-\o,0.25+\y) -- (3.65-\o,0.25+\y) node [black,midway,yshift=0.75cm] {$k$};
			\draw [decorate,decoration={brace,amplitude=15pt}] (3.85-\o,0.25+\y) -- (7.65-\o,0.25+\y) node [black,midway,yshift=0.75cm] {$k$};
			\draw [decorate,decoration={brace,amplitude=15pt}] (7.85-\o,0.25+\y) -- (11.65-\o,0.25+\y) node [black,midway,yshift=0.75cm] {$k$};
			
			\draw [decorate,decoration={brace,amplitude=15pt}] (-0.2,0)--(-0.2,4) node [black,midway,xshift=-0.7cm] {$k$};
			\draw [decorate,decoration={brace,amplitude=15pt}] (4,-0.2)--(0,-0.2) node [black,midway,yshift=-0.75cm] {$k$};
			
			\foreach \c in {0,...,7}
			\foreach \j in {0,0.5,1}
			{
				\def \i {0.5*\c};
				\fill[fill=blue!10](\j,\i) rectangle +(0.5,0.5);
				\fill[fill=blue!10](\j+6.5,\i) rectangle +(0.5,0.5);
			}
			
			\foreach \c in {0,...,3}
			{
				\def \i {0.5*\c};
				\fill[fill=blue!10](1.5,\i) rectangle +(0.5,0.5);
				\fill[fill=blue!10](8,\i) rectangle +(0.5,0.5);
			}
			
			\fill[fill=lightgray](7,0.5) rectangle +(0.5,0.5);
			
			\fill[fill=blue, opacity=1](0.5,2.5) rectangle +(0.5,0.5);
			\fill[fill=blue, opacity=1](1,0.5) rectangle +(0.5,0.5);
			
			\fill[fill=blue, opacity=1](7,2.5) rectangle +(0.5,0.5);
			\fill[fill=blue, opacity=1](7.5,0.5) rectangle +(0.5,0.5);
			
			\draw[step=.5, ultra thin] (0,0) grid (4,4);
			\draw[step=.5, ultra thin] (6.5,0) grid (10.5,4);
			\draw(6.5,0)--(6.5,4);
			
			\draw [->,red, line width=0.8mm] (u1) to [out=-45,in=225](v1);
			\draw [->,red, line width=0.8mm, label=below:1] (0.75,2.5) to (1.25,1);
			\draw [->,red, line width=0.8mm] (7.25,2.5) to (7.25,1);

			\node[anchor=east] at (-1,0.5) (x) {$\bf 1$};
			\draw (x) edge[out=0,in=180,->] (0.95,1.75);
			
			\node[anchor=east] at (5.5,0.5) (y) {$\bf 0$};
			\draw (y) edge[out=0,in=180,->] (7.2,1.75);
			
			\end{tikzpicture}
			\caption{The transformation: (i) The set of intervals $\cJ$ (represented by their starting points) is divided into groups of size $k$ each. (ii) Then, each group is placed in a column of the 2d array $\mathbf{A}$ from bottom to top. (iii) Finally, each slanting edge (weight 1) is replaced by a downward edge (weight 0) to obtain $\newgraph$. Note that the distance between the pair of blue vertices is 1 in all three graphs. In $\cJ$ and $\mathbf{A}$, they are connected by a single edge of weight 1. In $\newgraph$, the gray vertex has a weight 1 edge to the blue vertex in its adjacent column.}
			\label{fig:transform}
		\end{center}
	\end{figure}

	\emph{The transformation (\autoref{fig:transform}):} We retain all corresponding vertices and all upward and horizontal edges of $\intgraphh$ in $\newgraph$. Now, $\intgraphh$ might include slanting edges of the form $(a_{i,j},a_{i',j+1})$ (where $i<i'$), but the pair $p=((i,j),(i',j+1)) \not\in E(\rect)$. So, we accommodate such slanting edges in $\newgraph$ by providing a path of weight $1$ between its end points. We replace each slanting edge $(a_{i,j},a_{i',j+1})$ in $E(\intgraphh)$ with the 0-weight edge $((i,j),(i',j))$ (a downward edge of weight zero) in $E(\newgraph)$. Note, that the edge $((i',j), (i',j+1))$ is a horizontal edge and is already retained in $E(\newgraph)$. Thus, $\newgraph$ has a path of total weight $1$, namely $(i,j)\rightarrow(i',j)\rightarrow(i',j+1)$, that connects the end points of the pair $p$. Let $\hor(\intgraphh)$, $\upvert(\intgraphh)$ and $\Eslant(\intgraphh)$ be the horizontal, upward and slanting edges of $\intgraph$ that are part of $\intgraphh$. Then, $E(\newgraph) = \hor(\newgraph) \cup \upvert(\newgraph) \cup \downvert(\newgraph)$, where
	\begin{align}
	\hor(\newgraph) & = \{((i,j),(i,j+1)): (a_{i,j}, a_{i,j+1}) \in \hor(\intgraphh)\}; \\
	\upvert(\newgraph) & = \{((i_1,j),(i_2,j)): (a_{i_1,j}, a_{i_2,j}) 
	\in \upvert(\intgraphh)\};\\
	\downvert(\newgraph) & = \{((i_1,j),(i_2,j)): (a_{i_1,j}, a_{i_2,j+1}) 
	\in \Eslant(\intgraphh)\}. \label{replace:sllant}
	\end{align}
	
	It is straightforward to verify that $\newgraph$ preserves distances
	between all pairs of terminals in $\rect$. However, for each slanting
	edge we replace, we might create a new branching vertex (for example,
	the vertex $(i_2,j)$ created in~\autoref{replace:sllant} might be a
	branching vertex in $\newgraph$ with no corresponding branching vertex
	in $\intgraphh$). The number of such vertices is at most the number of slanting edges, which in turn is at most the number of branching vertices in $\intgraphh$. Thus, the total number of branching vertices
	in $\newgraph$ is at most twice the number of branching vertices in
	$\intgraphh$ (plus $O(k)$ to account for downward edges in the last column). Using~\autoref{thm:manhattan}, the number of
	branching vertices in $\intgraphh$ is $\Omega(k \log k)$, completing the proof of~\autoref{thm:lb}.

	\section{Branching Vertices versus Branching Edges}

	\begin{figure}
		\begin{center}
			\begin{tikzpicture}
			
			\def \d {0.2};
			\def \l {5};
			\def \p {-0.4}
			\foreach \c in {-5,...,0}
			{
				\def \b {\c+6};
				\def \a {\c/2+3};
				\draw [blue!40, line width=0.7mm] (\b,\a-\d)--(\b,\a)--(\b+\l,\a)--(\b+\l,\a-\d);
				\draw [blue!40, line width=0.7mm] (\b,\a+\d)--(\b,\a)--(\b+\l,\a)--(\b+\l,\a+\d);
				\vertex at (\b,\p) [fill=gray, label=below:\c] {};
			}
			\foreach \c in {1,...,5}
			\vertex at (\c+6,\p) [fill=gray, label=below:\c] {};
			
			\def \r {4};
			\def \s {\r/2};
			
			\draw [red, line width=1.2mm] (\r,\s-\d)--(\r,\s)--(\r+\l,\s)--(\r+\l,\s-\d);
			\draw [red, line width=1.2mm] (\r,\s+\d)--(\r,\s)--(\r+\l,\s)--(\r+\l,\s+\d);
			
			\def \g {0.2+\p};
			\def \h {-0.8+\p};
			
			\draw [densely dashed, very thick, blue] plot [smooth cycle] coordinates {(\r,\g) (\r+\l,\g) (\r+\l,\h) (\r,\h)};

			%\node[] at (0.2,0.5) {A};
			%\node[] at (2.7,1.5) {D};

			\end{tikzpicture}
			\caption{The interval graph $\Gzero$ for $k=5$. Each non-terminal covers $k+1=6$ terminals.}
			\label{fig:hansel}
		\end{center}
	\end{figure}
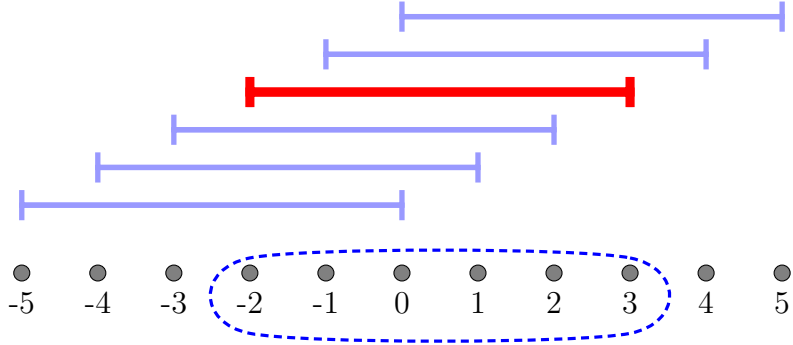
	
	In our formulation, we count the number of branching vertices (vertices with degree $\geq 3$). It is also reasonable to consider the number of edges incident on non-terminal branching vertices (we refer to such edges as \emph{branching edges}) as the measure of complexity. Our $\Omega(k\log k)$ lower bound (\autoref{thm:lb}) is clearly applicable to the number of branching edges as well.
	
	In this section, we show a separation between the number of branching vertices and the number of branching edges. In particular, we present an interval graph $\Gzero$ with $k$ terminals, each of length zero, such that the total number of branching edges in any distance-preserving subgraph of $\Gzero$ must be $\Omega(k \log k)$. However, $\Gzero$ admits a distance-preserving subgraph with $O(k)$ branching vertices.
	
	Let us now describe $\Gzero$. The interval representation of $\Gzero$ has $k+1$ non-terminals of unit length each, and $k$ terminals of zero length each. Let the intervals corresponding to non-terminal vertices of $\Gzero$ be
	\[ \{ [x,x+k]: x = -k,-k+1, \ldots,0\},\]
	Let the intervals corresponding to terminal vertices of $\Gzero$ be \[ \{ [x,x]: x=-k, -k +1, \ldots,0, 1, \ldots,k\}.\] See~\autoref{fig:hansel} for an instance of $\Gzero$.

	\begin{Theorem}
		Every distance-preserving subgraph of $\Gzero$ has at least $\Omega(k\log k)$
		branching edges and at most $O(k)$ branching vertices.
	\end{Theorem}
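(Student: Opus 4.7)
The upper bound on branching vertices is essentially trivial: $\Gzero$ has $k+1$ non-terminals and $2k+1$ terminals, i.e., $O(k)$ vertices in total, so every subgraph automatically has $O(k)$ branching vertices, and $H=\Gzero$ itself is a valid distance-preserving subgraph.

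For the lower bound on branching edges, I would fix an arbitrary distance-preserving subgraph $H$ of $\Gzero$ and, for each non-terminal $N_x=[x,x+k]$, let $S_x\subseteq\{-k,\ldots,k\}$ denote the set of terminals adjacent to $N_x$ in $H$. The key distance fact is that any two distinct terminals at positions $p,q$ with $|p-q|\leq k$ lie at $\Gzero$-distance exactly $2$: they must share a common non-terminal neighbor in $H$, since distinct zero-length terminals are never adjacent. Thus every such ``close pair'' must satisfy $\{p,q\}\subseteq S_x$ for some $x$. A non-terminal $N_x$ with $|S_x|\geq 3$ is a branching vertex contributing $|S_x|$ branching edges via its terminal neighbors, while non-terminals with $|S_x|\leq 2$ collectively account for at most $2(k+1)=O(k)$ terminal edges. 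Consequently
\[
(\text{branching edges in }H)\;\geq\;\sum_{x:\,|S_x|\geq 3}|S_x|\;\geq\;\sum_x|S_x|\;-\;O(k),
\]
so it suffices to establish $\sum_x|S_x|=\Omega(k\log k)$.

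To bound this sum I would invoke Hansel's lemma. The sets $\{S_x\}_x$ form a clique cover of the close-pair graph on the terminals, with the crucial restriction that each $S_x$ lies in a window of $k+1$ consecutive terminals. Decomposing each clique $K_{S_x}$ into $\lceil\log_2|S_x|\rceil$ bipartite subgraphs via binary splittings of $S_x$ produces a bipartite cover of the close-pair graph. Since any window of $k+1$ consecutive terminals induces a $K_{k+1}$ in the close-pair graph, Hansel's lemma forces the total vertex count of bipartite subgraphs touching that window to be at least $(k+1)\log_2(k+1)$. The window restriction on $S_x$ prevents any single non-terminal from cheaply covering all of a $K_{k+1}$ outside its own window, and it is this interplay that ultimately propagates the Hansel lower bound back to $\sum_x|S_x|$.

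The principal technical obstacle is that the naive clique-to-bipartite decomposition applied to just one $K_{k+1}$ gives only $\sum_x|S_x|\log_2|S_x|\geq(k+1)\log_2(k+1)$, from which the bound $|S_x|\leq k+1$ yields merely $\sum_x|S_x|=\Omega(k)$ rather than $\Omega(k\log k)$. To recover the extra $\log k$ factor one must combine the Hansel bounds obtained from many overlapping $K_{k+1}$ subgraphs (one per sliding window of $k+1$ consecutive terminals) with the observation that each $N_x$ can ``productively'' contribute to covers of windows aligned with its own, preventing its coverage from being shared cheaply across windows. Making this aggregation precise, while carefully accounting for the contributions of non-terminal-to-non-terminal edges that arise from preserving the distance-$3$ pairs of $\Gzero$, is the core technical challenge of the argument.
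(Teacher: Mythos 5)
Your upper bound is identical to the paper's and correct. For the lower bound, you correctly set up $S_x$ (the terminals adjacent to $N_x$ in $H$), correctly identify that every close pair of terminals must lie in some $S_x$, and correctly reduce to showing $\sum_x |S_x| = \Omega(k\log k)$. However, your route to that estimate has a genuine gap, which you yourself flag: decomposing each clique $K_{S_x}$ into $\lceil\log_2|S_x|\rceil$ bipartite graphs and applying Hansel to one window $K_{k+1}$ gives only $\sum_x |S_x|\log|S_x| \geq (k+1)\log(k+1)$, hence merely $\sum_x |S_x| = \Omega(k)$. Your proposed fix --- aggregating Hansel bounds over $\Theta(k)$ sliding windows --- faces a double-counting problem, since a single non-terminal $N_x$ contributes to covers of many overlapping windows, and you offer no mechanism to control this. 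As written, the argument does not establish the claimed $\Omega(k\log k)$.

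The missing idea, which makes the argument go through in one step, is a \emph{folding} trick that makes each $B_I$ bipartite \emph{for free}, with no clique decomposition and hence no $\log$ loss. Rather than working with the close-pair graph on all $2k+1$ terminals, restrict attention to \emph{cross} pairs straddling position $0$: one terminal from $\{t_{-k},\ldots,t_{-1}\}$ and one from $\{t_1,\ldots,t_k\}$, with positions at distance at most $k$. Relabel both families onto $\{1,\ldots,k\}$ by identifying the negative terminal $t_{m-k-1}$ and the positive terminal $t_m$ with the same auxiliary vertex $m$. Under this identification, the cross pairs at $\Gzero$-distance $2$ are exactly the pairs $(i,j)$ with $1 \le i < j \le k$, i.e., the edge set of $K_k$. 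Now the crucial observation: the non-terminal $N_x = [x,x+k]$ covers the negative terminals at positions $x,\ldots,-1$ (auxiliary labels $\{x+k+1,\ldots,k\}$) and the positive terminals at positions $1,\ldots,x+k$ (auxiliary labels $\{1,\ldots,x+k\}$) --- two \emph{disjoint} sets. So the graph $B_{N_x}$ of covered cross pairs is already a complete bipartite graph with at most $\deg_H(N_x)$ non-isolated vertices, and $\bigcup_x B_{N_x} \supseteq K_k$ because $H$ preserves each distance-$2$ cross pair. A single application of Hansel's lemma then yields $\sum_x \deg_H(N_x) \ge k\log k$, and subtracting the $O(k)$ edges that can be incident to degree-$\le 2$ vertices gives $\Omega(k\log k)$ branching edges. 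Your sliding-window machinery, and the worry about distance-$3$ pairs, are not needed at all.
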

	\begin{proof}
	    Since the total number of vertices in $\Gzero$ is $O(k)$, every distance-preserving subgraph of $\Gzero$ has $O(k)$ branching vertices. Now we prove that every distance-preserving subgraph of $\Gzero$ has $\Omega(k\log k)$ branching edges.
		
		Fix a distance-preserving subgraph $H$ of $\Gzero$.
		Consider pairs of terminals in the set $\{t_x: x =-k,-k+1,\ldots,-1\}
		\times \{t_y: y=0,1,\ldots,k-1\}$, and restrict attention to those
		pairs that are at distance two in $\Gzero$; that is, pairs that are covered by a common interval in $\cI$. Indeed, for every pair of integers $i,j$ where
		$0\leq i<j \leq k$, the pair $(t_{j-k}, t_i)$ is at distance two in
		$\Gzero$. Build an auxiliary graph $\cP$ on the vertex set $\{1,2,\ldots,k\}$,
		where the pair $(i,j)$ is an edge if $(t_{j-k-1}, t_i)$ is at distance
		two.  Clearly, $\cP$ is a complete graph on $k$ vertices.  For every 
		interval $I \in \cI$, let $B_I$ be the subgraph of $\cP$ 
		with vertex set $\{1,2,\ldots,k\}$ and edge set 
		\[ E(B_I)= \{ (i,j): (t_{j-k-1},I)\in E(H)\text{ and } (t_i, I) \in E(H)\}.\]
		One can verify that for each $I\in\cI$, the graph $B_I$ is bipartite, and the number of
		\emph{non-isolated} vertices in $B_I$ is at most the degree of the
		vertex $I$ in $H$. By a result of Hansel~\cite{Hansel} stated below (see also~\cite{Katona,Pipp}), the total number of non-isolated vertices in $\bigcup_I B_I=\cP$ is at least $k \log
		k$. Thus, the total number of edges in $\Gzero$ is at least $k \log
		k$. Since $\Gzero$ has $O(k)$ vertices, at most $O(k)$ of these edges of $H$
		can be incident on vertices of degree at most two. It follows that $H$ has
		$\Omega(k\log k)$ branching edges.
	\end{proof}
	
	\begin{Lemma}[Hansel~\cite{Hansel}]
		Let $K_n$ be the complete graph on $n$ vertices, and let $B_1,B_2,\ldots,B_r$ be $r$ bipartite graphs on the vertex set $\{1,2,\ldots,n\}$, such that $\bigcup_i E(B_i) = E(K_n)$. Suppose the number of non-isolated vertices in $B_i$ is $n_i$. Then \[ n_1 + n_2 + \ldots + n_r \geq n \log n.\]
	\end{Lemma}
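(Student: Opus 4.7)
The plan has two parts. The upper bound on branching vertices is immediate: $\Gzero$ has only $k+1$ non-terminals and $2k+1$ terminals, hence $O(k)$ vertices in total, so any subgraph (in particular any distance-preserving one) has $O(k)$ branching vertices. All the real work lies in showing the $\Omega(k\log k)$ lower bound on branching edges, and my plan is to set this up as a direct application of Hansel's lemma.

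To apply Hansel, I need to exhibit bipartite graphs whose union is a complete graph. Fix a distance-preserving subgraph $H$ of $\Gzero$. The key observation I want to exploit is that for every pair of integers $i,j$ with $1\le i,j\le k$, the pair of terminals $(t_{j-k-1},t_i)$ lies on opposite sides of $0$ and is at distance exactly $2$ in $\Gzero$, since both terminals are covered by some common non-terminal interval (any interval $[x,x+k]$ with $-k\le x\le i-k$ and $x\ge j-k-1$ works, which is a nonempty range). Since $H$ preserves this distance, $H$ must contain some non-terminal interval $I$ together with the two edges $(t_{j-k-1},I)$ and $(t_i,I)$. For each non-terminal $I\in V(H)$, I would then define a bipartite graph $B_I$ on vertex set $\{1,2,\dots,k\}$ with bipartition by ``left'' versus ``right'' index, where $(i,j)\in E(B_I)$ iff both $(t_{j-k-1},I)\in E(H)$ and $(t_i,I)\in E(H)$.

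By the distance-preserving property, $\bigcup_I E(B_I)$ covers all pairs $(i,j)$ with $1\le i,j\le k$; treating this as edges in $K_k$ (by symmetrising) this gives a complete graph cover by bipartite graphs. Hansel's lemma then yields $\sum_I n_I \ge k\log k$, where $n_I$ is the number of non-isolated vertices of $B_I$. The final step is to translate this into a count of branching edges: each non-isolated vertex of $B_I$ corresponds to a terminal incident to $I$ in $H$, so $\sum_I n_I$ is at most the total degree of non-terminals in $H$, i.e.\ the number of terminal-to-non-terminal edges in $H$. Since $\Gzero$ has only $k+1$ non-terminals, the non-terminals of degree at most $2$ contribute at most $2(k+1)=O(k)$ to this sum. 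Hence the remaining $k\log k - O(k) = \Omega(k\log k)$ edges must be incident to non-terminals of degree at least $3$, and these are branching edges by definition.

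The main obstacle I anticipate is the bookkeeping around what exactly counts as a ``non-isolated vertex'' in $B_I$ versus an edge of $H$, and making sure the bipartite covering of $K_k$ is set up cleanly (so that Hansel applies without double counting or index shifts). Everything else — the fact that distance $2$ pairs force a common non-terminal neighbour, the bipartiteness of each $B_I$, and the $O(k)$ slack from low-degree non-terminals — is essentially forced by the structure of $\Gzero$ and should go through with little difficulty.
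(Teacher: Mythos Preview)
Your proposal does not address the stated lemma at all. The statement in question is Hansel's lemma itself --- a classical combinatorial inequality about bipartite coverings of $K_n$ --- and the paper does not prove it; it is merely cited from~\cite{Hansel}. What you have written is instead a proof of the \emph{preceding theorem} in the paper (that every distance-preserving subgraph of $\Gzero$ has $\Omega(k\log k)$ branching edges and $O(k)$ branching vertices), which \emph{applies} Hansel's lemma as a black box.

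Viewed as a proof of that theorem, your argument is correct and essentially identical to the paper's own: the same auxiliary bipartite graphs $B_I$ indexed by non-terminal intervals, the same observation that distance-two pairs force a shared non-terminal neighbour, the same invocation of Hansel to get $\sum_I n_I \ge k\log k$, and the same final subtraction of $O(k)$ edges from low-degree non-terminals. But as a proof of Hansel's lemma it is circular, since you invoke the very statement you are asked to establish.
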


	%%%%%%%%%%%%%%%%%%%%%%%%%%%%%%%%%%%%%%%%%%%%%%%%%%%%%%%%%%%%%%%%%%%%%%%%%
	
	%%%%%%%%%%% C  O  N  C  L  U  S  I  O  N %%%%%%%%%%%%%%%%%%%%%%%%%%%%%%%%
	
	%%%%%%%%%%%%%%%%%%%%%%%%%%%%%%%%%%%%%%%%%%%%%%%%%%%%%%%%%%%%%%%%%%%%%%%%%

	\section{Conclusion}
	
	In this paper, we studied distance-preserving subgraphs and solved the problem conclusively for interval graphs (\autoref{thm:main}) by proving matching upper and lower bounds (up to constants). However, some interesting open questions still remain.
	
	%\begin{enumerate}
	Is there a polynomial time algorithm to find an \emph{optimal} distance-preserving subgraph of an interval graph? This problem is $\NP$-hard (\autoref{thm:npcnpc}), but is it fixed-parameter tractable (with parameter $k$, the number of terminals) for general graphs?
	%\end{enumerate}
	
	It is also interesting to consider classes of graphs that are generalisations of interval graphs (perfect graphs, chordal graphs), and to check if our ideas can be extended to those classes as well.
	
	%%%%%%%%%%%%%%%%%%%%%%%%%%%%%%%%%%%%%%%%%%%%%%%%%%%%%%%%%%%%%%%%%%%%%%%%%
	
	%%%%%%%%%%% A  C  K  N  O  W  L  E  D  G  M  E  N  T  S %%%%%%%%%%%%%%%%%
	
	%%%%%%%%%%%%%%%%%%%%%%%%%%%%%%%%%%%%%%%%%%%%%%%%%%%%%%%%%%%%%%%%%%%%%%%%%

	\subsection*{Acknowledgments}
	
	We are grateful to Nithin Varma and Rakesh Venkat for introducing us to the problem and helping with the initial analysis of shortest paths in interval graphs, and for their comments at various stages of this work. We would also like to thank the anonymous reviewers of this paper for their helpful suggestions and comments.

	%%%%%%%%%%%%%%%%%%%%%%%%%%%%%%%%%%%%%%%%%%%%%%%%%%%%%%%%%%%%%%%%%%%%%%%%%
	
	%%%%%%%%%%% B  I  B  L  I  O  G  R  A  P  H  Y %%%%%%%%%%%%%%%%%%%%%%%%%%
	
	%%%%%%%%%%%%%%%%%%%%%%%%%%%%%%%%%%%%%%%%%%%%%%%%%%%%%%%%%%%%%%%%%%%%%%%%%

	%% Bibliography
	%% Either use bibtex (recommended), 
	\bibliographystyle{alpha}
	\bibliography{FirstPaper.bib}
	
\end{document}